\newtheorem{definition}{Definition}
\newtheorem{lemma}{Lemma}
\newtheorem{theorem}{Theorem}
\newtheorem*{theorem*}{Theorem}
\newtheorem{corollary}{Corollary}
\newtheorem*{mainres*}{Main Result}
\newcommand{\z}{\mathbb{Z}}
\newcommand{\h}{\mathcal{H}}
\newcommand{\y}{\widehat{\mathcal{T}}}
\DeclareMathOperator{\Ch}{Ch}
\date{}
\newcommand{\p}{{\Pi}}
\tikzset{fontscale/.style = {font=\relsize{#1}}
    }
\begin{document}

\title{Compactly Supported Wannier Functions and Strictly Local Projectors}

\author[1]{Pratik Sathe\thanks{psathe@physics.ucla.edu}}
\author[1]{Fenner Harper}
\author[1]{Rahul Roy\thanks{rroy@physics.ucla.edu}}

\affil[1]{Mani L. Bhaumik Institute for Theoretical Physics, Department of Physics and Astronomy, \par University of California at Los Angeles, Los Angeles, California 90095, USA}

\maketitle

\begin{abstract}

Wannier functions that are maximally localized help in understanding many properties of crystalline materials. In the absence of topological obstructions, they are at least exponentially localized. In some cases such as flat-band Hamiltonians, it is possible to construct Wannier functions that are even more localized, so that they are compactly supported thus having zero support outside their corresponding locations. Under what general conditions is it possible to construct compactly supported Wannier functions? We answer this question in this paper. Specifically, we show that in 1d non-interacting tight-binding models, strict locality of the projection operator is a necessary and sufficient condition for a subspace to be spanned by a compactly supported orthogonal basis, independent of lattice translation symmetry. For any strictly local projector, we provide a procedure for obtaining such a basis. For higher dimensional systems, we discuss some additional conditions under which an occupied subspace is spanned by a compactly supported orthogonal basis, and show that the corresponding projectors are topologically trivial in many cases. We also show that a projector in arbitrary dimensions is strictly local if and only if for any chosen axis, its image is spanned by hybrid Wannier functions that are compactly supported along that axis.

\end{abstract}

\section{Introduction}
Extended Bloch wavefunctions and localized Wannier functions \cite{Wannier1937} are two common choices of basis vectors for a Bloch band. Localized Wannier functions have  applications in a number of fields, including the modern theory of polarization~\cite{Resta2010}, orbital magnetization~\cite{Thonhauser2005}, quantum transport~\cite{Calzolari2004} and tight binding interpolation~\cite{ashcroft1976solid, Yates2007}. Consequently, conditions required for the existence of localized Wannier functions have been investigated extensively. Isolated bands of 1d inversion symmetric systems are always spanned by exponentially localized Wannier functions as shown by Kohn~\cite{Kohn1959}. The localization properties of Wannier functions can often be inferred from the localization properties of the associated band projector. For instance, band projectors often possess real space matrix elements which decay exponentially~\cite{des1964energy}, leading to generalizations~\cite{dexCloizeauxexpWan, nenciu1983existence} of Kohn's result.  

Because of the importance of obtaining localized Wannier functions, there has been significant interest in obtaining Wannier functions that are as localized as possible. A popular variational approach seeks localized Wannier functions by numerically minimizing the second moment of the Wannier functions around their centers~\cite{marzari1997maximally}. It has been shown that maximally localized Wannier functions decay exponentially (or faster) in 1d systems, with extensions proved for higher dimensional systems in the absence of topological obstructions~\cite{Panati2013}.

A related, and sometimes more extreme form of wavefunction localization is compact support or strict localization in lattice models, wherein a wavefunction has non-zero support only over a finite set of orbitals of the lattice. Wavefunctions that are linear combinations of a finite set of orbitals bear a close analogy to Boys orbitals~\cite{Boys1960} which are studied in chemistry in the context of chemical bonding and other applications. The wavefunctions of Bloch electrons deep under the Fermi level are also expected to correspond to compactly supported (CS) Wannier functions. Additionally, in some applications, it is useful to approximate highly localized Wannier functions by CS wavefunctions~\cite{Koster1953, ozolicnvs2013compressed, budich2014search, barekat2014compressed}.

Non-orthogonal bases consisting of CS wavefunctions, also known as CS Wannier-type functions have also received significant attention~\cite{zheng2014exotic, dubail2015tensor, Read2017}. CS Wannier-type functions exist most commonly in strictly local (SL) flat-band Hamiltonians~\cite{chen2014impossibility} in the context of which they are also known as compact localized states (CLSs). Such bases help in understanding a number of many-body	quantum phenomena (see \cite{leykam2018artificial} for a review), including novel superconducting phases in multi-layer twisted graphene~\cite{Cao2018, Yankowitz2019}. CLSs have been used to classify and construct flat-band Hamiltonians~\cite{dias2015origami, maimaiti2017compact, maimaiti2019universal}. Models so constructed are often made interacting, in order to study interesting many-body quantum phenomena arising in such contexts. For example, orthogonal CLSs have been used to construct models with many-body localized states~\cite{Huber2010} and quantum scar states~\cite{Kuno2020b} in flat-band systems. Orthogonal CLSs that span an entire flat band are precisely CS Wannier functions of the flat band. Indeed, the conditions associated with the existence of orthogonal CLSs spanning flat bands and of CS Wannier functions in systems without flat bands are closely related.

Yet another variant of Wannier functions are hybrid Wannier functions~\cite{Sgiarovello2001}, that are localized and Wannier-like along one direction, and Bloch wave-like along the other directions. Localized hybrid Wannier functions have a number of applications, including the study of twisted bilayer graphene~\cite{Hejazi2021}, and characterization of static~\cite{Soluyanov2011, Taherinejad2014, Gresch2017} and Floquet topological insulators~\cite{Nakagawa2020}. Similar to CS Wannier functions, a set of hybrid Wannier functions that are CS along one of the axes can span a band in some cases. We refer to such functions as CS hybrid Wannier functions.

While conditions associated with the existence of exponentially localized Wannier functions, and of CLSs have been studied, those required for the existence of CS Wannier functions and CS hybrid Wannier functions remain unexplored. These considerations motivate us to pose the following questions: In a tight-binding lattice model, given an arbitrary set of occupied states, is there a way of determining whether their span possesses a CS Wannier basis? Analogously, in the absence of lattice translational invariance, under what conditions can the occupied subspace be spanned by an orthogonal basis of CS wavefunctions? We call such a basis a compactly supported orthogonal basis or a CSOB in short. We show that the localization properties of the associated projector has a direct bearing on these questions. For 1d systems, we answer this question completely, showing an equivalence between strict locality of an orthogonal basis and strict locality of the associated orthogonal projector. For higher dimensional systems, we obtain necessary and sufficient conditions for the existence of such a basis, as well as for CS hybrid Wannier functions. Our main results are summarized below.

\begin{mainres*} For an arbitrary subspace spanned by single particle states in a non-interacting tight-binding model, independent of translational invariance, the following statements are true.
\begin{enumerate}[(1)]
	\item In 1d systems, an orthogonal basis consisting of compactly supported wavefunctions (i.e. a CSOB) spanning the subspace exists iff. the associated orthogonal projector is strictly local.   
	\item For a lattice in $d$ dimensions, compactly supported hybrid Wannier functions localized along an axis exist for any  choice (out of $d$ possible choices) of the localization axis if and only if the associated band projector is strictly local.
	\item For arbitrary dimensional lattices, if the space is spanned by a CSOB, then the projector onto it is strictly local. If a projector is of a nearest neighbor form (or reducible to this form via a change of primitive vectors, or unit cell enlargement), its span possesses a CSOB.
\end{enumerate}	 \label{mainres}
\end{mainres*}

\begin{figure}[t]
        \centering
        \begin{tikzpicture}[auto, > = Stealth, 
           node distance = 10mm and 15mm,
              box/.style = {draw=black, minimum height=13mm, text width=30mm, align=center},
              bbox/.style = {fill=red!20, draw=black, minimum height=13mm, text width=30mm, align=center, double, rounded corners, draw=black},
              implies/.style={double, double equal sign distance, -implies},el/.style = {inner sep=5pt, align=left,font=\footnotesize},
	 every edge quotes/.style = {font=\footnotesize, align=center, inner sep=5pt}]

	\node (CWFone) [bbox] {CS Wannier basis/ CSOB};  
	
	\node (SLone) [box, left= of CWFone] {SL Projector};     
     \draw 	(CWFone) edge[implies-implies,double equal sign distance] (SLone);
     \node (oneDee) [left=of SLone] {\LARGE 1d:};
     
    \node (nDee) [below=2.75 of oneDee] {\LARGE $d>1$:};
    \node (CWFnd) [bbox, below= of CWFone] {CS Wannier basis/ CSOB};
    \node (hybridWan) [bbox, below =of  CWFnd] {CS hybrid Wannier basis for all axes};
    \node (ndNNP) [box, left = of CWFnd] {NN reducible projector};
	\node (SLnd) [box, below = of ndNNP] {SL projector};
	\draw 	(ndNNP) edge[implies] (CWFnd);
	\draw 	(ndNNP) edge[implies] (SLnd);
	\draw 	(CWFnd) edge[implies] (SLnd);
	\draw 	(SLnd) edge[implies-implies,double equal sign distance] (hybridWan);
	
	\node (torTriv) [box, below right= 0mm and 25mm of CWFnd] {Topological triviality};
	\path 	(CWFnd) edge[implies] node[el,above,sloped] {Even dimensions} (torTriv);
\path 	(hybridWan) edge[implies] node[el,below,sloped] {w/ T.I.} (torTriv);
\end{tikzpicture}
        \caption{Summary of the main results of this paper. All the statements hold true independent of whether the the system is translationally invariant (TI), except for the statement connecting the existence of CS hybrid Wannier basis for all axes to topological triviality.}
        \label{fig:results_summary}
\end{figure}
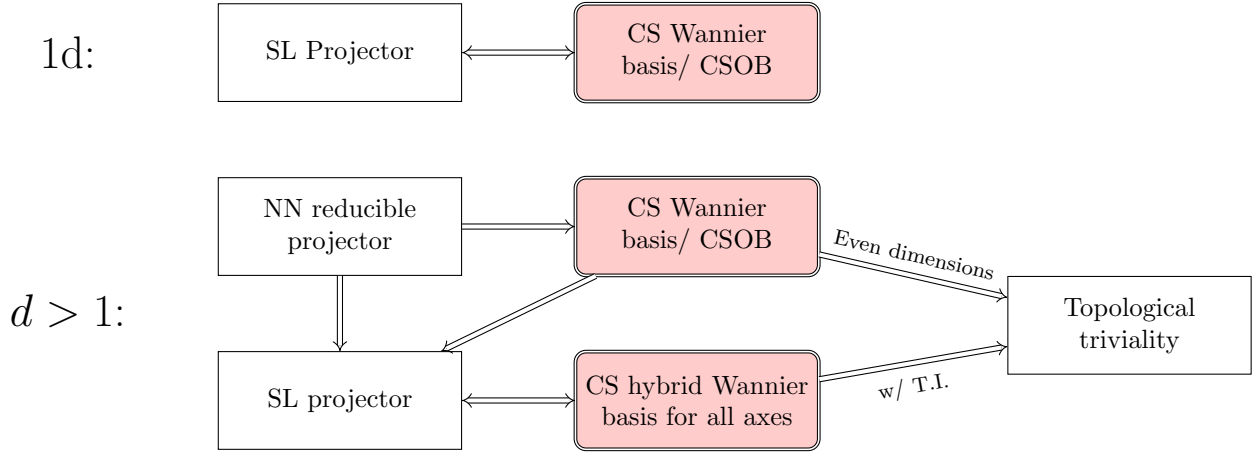

Localization properties of Wannier functions are closely related to the associated bands' topology~\cite{thouless1984wannier, Monaco2018, panati2007triviality, Brouder2007}. Indeed, for many classes of Hamiltonians, exponentially localized Wannier functions  exist iff. the band is topologically trivial. Hence, in addition to the Main Result, we also discuss topological properties of projectors associated with CSOBs and CS hybrid Wannier functions. In particular, using existing results from the literature, we show that all translationally invariant SL projectors in $d>1$ are topologically trivial. Moreover, for even dimensional systems, we show that if a space is spanned by a CSOB, then it is necessarily Chern trivial, irrespective of translational invariance. We also show that if CS hybrid Wannier functions exist for any choice of the localization axis, the corresponding projector is \emph{necessarily} topologically trivial for a translationally invariant system. This contrasts with exponentially localized hybrid Wannier functions, which can be constructed for any band regardless of its topological properties~\cite{marzari1997maximally, Qi2011}. We summarize the Main Results as well as these extra results in figure \ref{fig:results_summary}.

We note that each of the three parts of the Main Result consists of a necessary condition and a sufficient condition for the existence of a CSOB (or a CS hybrid Wannier functions). The necessary condition in all cases is that the projector should be SL, and is straightforward to prove. Proving the sufficient condition is harder, and hence a significant portion of this paper deals with this aspect. Since the sufficient conditions are different for 1d and higher dimensions, these two cases are discussed separately. In section \ref{sec:preliminary}, we discuss the notation and definitions used in this paper, and prove the necessary direction from the Main Result. In addition, we also discuss the relationship between CLSs and CS Wanner-type functions, and show how to obtain non-orthogonal CLSs corresponding to any SL projector. In section \ref{sec:COB for 1d}, we prove by construction for 1d systems, that the image of any SL projector is spanned by a CSOB (or a CS Wannier basis if translationally invariant). Similarly, in section \ref{sec:higher_dimensions}, we prove the sufficient part of points (2) and (3) of the Main Result. In addition we prove the extra results pertaining to topological triviality mentioned above. We conclude the paper with section \ref{sec:conclusion}, and provide a simple method for constructing some SL projectors in Appendix \ref{sec:NN P method}.

\section{Preliminary discussion} \label{sec:preliminary}
In this section, we discuss some basic definitions and notation used in this paper as well as the connections between compact localized states that arise in flat-band systems, and compactly supported Wannier-functions associated with strictly local projectors. We also prove the sufficient part of the Main Result.

\subsection{Notation} 
\label{subsec: notation}
We consider $d$ dimensional tight-binding models, with any (single particle) operator being represented by a matrix with rows and columns labeled by pairs of indices $(\vec{r}, i )$, with $ \vec r \in \z ^d$ denoting a Bravais lattice site position, and $ i \in \{ 1,\dotsc, n_{\vec{r}} \}$ denoting the orbital index (which subsumes all quantum numbers, including spin quantum numbers if present). Our conclusions remain valid for finite lattices as well, for which we replace $\z^d$ by an appropriate set of integer tuples. We denote a position basis vector by $\ket{\vec{r},i}$, and refer to it as orbital $i$ at site $\vec{r}$.

Henceforth, we use the terms site and cell interchangeably. Specifically, a cell at location $\vec{r}$ will mean the same as the site at location $\vec{r}$. By a supercell representation of the lattice, we mean a labelling scheme in which multiple sites in the original lattice representation are grouped together to form a new site. This reversible transformation involves relabeling of quantum numbers, as described in section \ref{subsec:conversion}.

We denote the Hilbert space associated with all the orbitals at cell $\vec{r}$ by $\h^{\vec{r}}$, and the total Hilbert space by $\h_{total}$.  We note that 
\begin{align}
	\h_{total} = \bigoplus_{\vec{r}\in \z^d} \h ^{\vec{r}},\label{eq:Hilbert is direct sum}
\end{align}
with $\oplus$ denoting a direct sum. In general, the number of orbitals at cell $\vec{r}$, denoted by $n_{\vec{r}}$, may be different for different cells, and our conclusions do not depend on them being equal. For notational simplicity, we assume without loss of generality that $n_{\vec{r}}=n$ is independent of the location. For systems with translational invariance, this condition is automatically satisfied. Additionally we find it convenient to rewrite $\mathcal{H}_{total}$ as,
\begin{align}
\h_{total} = \z^{\otimes d } \otimes \h, \label{eq:Hilbert is tensored}
\end{align}
where $\h$ denotes the $n$-dimensional orbital space. We refer to any orthonormal basis vectors of $\h$ as orbitals. 

In this paper, we consider orthogonal projection operators, i.e. operators $P:\h_{total}\rightarrow \h_{total}$, that satisfy $P^2 = P^\dagger = P$, with $(.)^\dagger$ denoting the matrix conjugate transposition operation.
We define a strictly local (SL) projection operator to be one which has a finite upper bound on the extent of its hopping elements:
\begin{definition}
An orthogonal projection operator $P:\h_{total} \rightarrow \h_{total}$ is said to be strictly local if there exists a finite integer $b$ such that $\bra{\vec{r},i}P\ket{\vec{r}\ ',j} = 0 \ \forall \ |\vec{r}-\vec{r}\ '|>b$ and $i,j\in \{1,\dotsc,n\}$. The maximum hopping distance of $P$ is the smallest integer $b$ which satisfies this condition.
\end{definition}

A wavefunction which has non-zero support only on a finite number of sites is said to be compactly supported. Specifically,
\begin{definition}
A wavefunction is compactly supported iff. there exists a finite integer $r$, such that it has zero support outside a ball of radius $r$. The smallest integer value of $r$ is called the size of the wavefunction. A basis is said to be a compactly supported orthogonal basis (CSOB) of (a finite) size $R$ iff. every constituent wavefunction is compactly supported and has a size of at most $R$.
\end{definition}

In the context of finite sized lattices, a basis is considered to be compactly supported only if $R$ is smaller than the size of the lattice. Similarly, only those projection operators that have a maximum hopping distance smaller than the size of the lattice will be considered to be strictly local.

For an SL projector $P$, the following notation will be used in the paper:
\begin{enumerate}
	\item Let $\Pi_{\vec{r}}^P$ denote the set of vectors obtained by operating $P$ on all the orbitals at cell $\vec{r}$. That is,
	\begin{align}
		\p_{\vec{r}}^P \coloneqq \{ P \ket{\vec{r},i} : i \in 1,\dotsc, n \}. \label{eq:defn piz}
	\end{align}
	The choice of which orbital basis is chosen while calculating $\p_{\vec{r}}^P$ will be specified, or will be clear from the context.
	\item Let $\h_{\vec{r}}^P \subset \h_{total}$ denote the space spanned by $\Pi_{\vec{r}}^P$. 
	\item Let $\h^P \equiv \cup_{\vec{r}\in\z^d} \h_{\vec{r}}^P$ denote the image of $P$.
	\item Let $\widetilde{\Pi}_{\vec{r}}^P$ denote an orthonormal basis of $\h_{\vec{r}}^P$. ($\widetilde{\Pi}_{\vec{r}}^P$ can be obtained by applying the Gram-Schmidt orthogonalization procedure on $\Pi_{\vec{r}}^P$). We note that since $P$ is SL, all wavefunctions in the sets $\widetilde{\Pi}_{\vec{r}}^P$ and ${\Pi}_{\vec{r}}^P$ are compactly supported. 
	\item Let $P_{\vec{r}}$ denote the orthogonal projection operator onto $\h_{\vec{r}}^P$. $P_{\vec{r}}$ can be expressed as
		\begin{align}
			P_{\vec{r}} = \sum_{\ket \chi \in \tilde{\Pi}_{\vec{r}}^P} \ket \chi \bra \chi . \label{eq:Pz and pi}
		\end{align}
\end{enumerate}

\subsection{Compactly Supported Wannier functions and Compact Localized States} \label{subsec:construction_nonortho}

While the primary object of interest of this paper is compactly supported (CS) Wannier functions, a closely related type of basis consists of CS Wannier-type functions, which exist prominently in flat-band Hamiltonians. As we will show later, CS Wannier functions' existence is related to the strict localization of the associated projector. While flat-band projectors need not be SL projectors and vice versa, both their images are spanned by CS Wannier-type functions and in some cases CS Wannier functions as well. In this section, we will discuss these two types of bases for flat-band Hamiltonians and SL projectors.

Let us first discuss Wannier-type functions, which are in a sense a generalization of Wannier functions. Similar to Wannier functions, Wannier-type functions consist of a set of wavefunctions localized at a cell, and all their lattice translates, and span a band or a set of bands. However, unlike Wannier functions which are by definition orthogonal, Wannier-type functions can be non-orthogonal, or even linearly dependent. Consequently, a set of $m$ bands may be spanned by $l\geq m$ \emph{flavors} of Wannier-type functions, whereas exactly $m$ flavors of Wannier functions span $m$ bands. Wannier-type functions that are CS~\cite{dubail2015tensor, Read2017} are desirable in certain applications~\cite{ozolicnvs2013compressed}. Importantly, the existence of non-orthogonal CS Wannier-type functions does not in general imply the existence of CS Wannier functions.

CS Wannier-type functions exist most notably as bases spanning flat bands in flat-band Hamiltonians~\cite{chen2014impossibility,leykam2018artificial}. Such functions corresponding to a flat band are also Hamiltonian eigenstates, and are also referred to as compact localized states (CLSs). In most flat-band Hamiltonians, the CLSs are not mutually orthogonal, and can even by linearly dependent. Indeed, in the presence of band touching, CLSs may not even span the entire flat band~\cite{Bergman2008}. However, it is always possible to modify such models so that they have orthogonal CLSs spanning a flat band in an enlarged unit cell. This can be done by choosing a subset of CLSs, comprising regularly spaced CLSs with no physical overlap on the lattice~\cite{Huber2010, Kuno2020b}. While such a set does not span a full band in a primitive cell representation, they always span an entire flat band in an appropriately enlarged unit cell. Hence, they can also be referred to as CS Wannier functions. In some cases, flat-band Hamiltonians possess orthogonal CLSs naturally, without needing unit cell enlargement. Some popular examples from the literature with such bases are discussed in sections \ref{sec:COB for 1d} and \ref{sec:higher_dimensions}. It is possible to create many more examples of flat-band Hamiltonians with orthogonal CLSs, by constructing nearest neighbor projectors as done in Appendix \ref{sec:NN P method}.

\begin{figure}[t]
     \centering
     \begin{subfigure}[b]{0.75\textwidth}
         \centering
         \includegraphics[width=\textwidth]{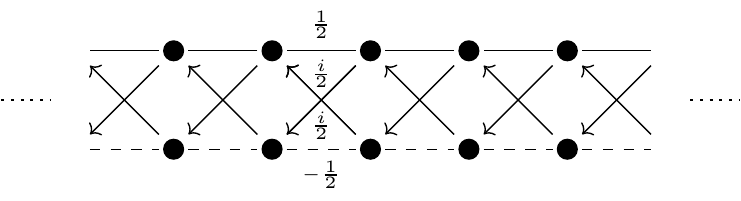}
         \caption{}
         \label{fig:CLS_nonortho_example}
     \end{subfigure} \\
     \begin{subfigure}[b]{0.15\textwidth}
         \centering
         \includegraphics[width=\textwidth]{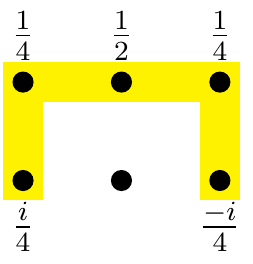}
         \caption{}
         \label{fig:CLS_1}
     \end{subfigure}
	\hspace{2cm}
	\begin{subfigure}[b]{0.15\textwidth}
         \centering
         \includegraphics[width=\textwidth]{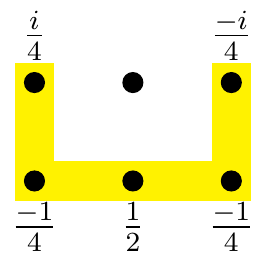}
         \caption{}
         \label{fig:CLS_2}
     \end{subfigure}    
     \hspace{2cm}
	\begin{subfigure}[b]{0.15\textwidth}
         \centering
         \includegraphics[width=\textwidth]{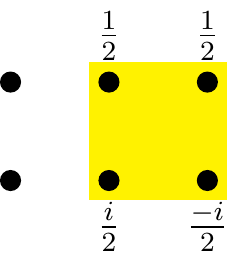}
         \caption{}
         \label{fig:CLS_3}
     \end{subfigure}   
     
            \caption{The real space connectivity of the projector from equation \eqref{eq:1d_simple_P_example} is shown in (a). The upper and lower array of dots represents the two sublattices A and B respectively. An on-site potential of $\frac{1}{2}$ is present for each orbital. Two flavors of non-orthogonal CLSs, i.e. CS Wannier-type functions spanning the image of the projector are shown in (b) and (c). An orthogonal CLS, i.e. a CS Wannier function is shown in (d).}
        \label{fig:1dcls_example}
\end{figure}

While the term CLSs is commonly used only in the context of flat-band Hamiltonians, the distinction between orthogonal CLSs and CS Wannier functions is unnecessary for our purpose. Indeed, it is always possible to deform a band's energy without changing the subspace corresponding to it. Expressing a Hamiltonian as $H(\vec{k}) = \sum_i E_i(\vec{k}) P_i (\vec{k})$, where $P_i$'s are the band projectors, one can modify a band's energy function to `flatten' the corresponding band~\cite{parameswaran2013fractional} without modifying the band projectors, and vice versa. Such an operation does not affect the Wannier and Wannier-type functions spanning that band, since they are associated with the band subspace, and have no dependence on the band dispersion. Thus, (non-)orthogonal CLSs are a special type of CS Wannier(-type) functions. As a result, the conditions associated with the existence of orthogonal CLSs spanning a flat band, and those associated with the existence of CS Wannier functions for any band which may or may not be flat, are equivalent to each other.

These arguments also highlight that the properties of the band projector are connected to the existence of CS Wannier functions or orthogonal CLSs spanning a band. Indeed, as we will show later in the paper, the existence of CS Wannier functions is tied to the band projector being SL. Like SL flat-band Hamiltonians, SL projectors also possess CS Wannier-type functions. In 1d, they have the additional property that their images are spanned by CS Wannier functions. Regardless of the dimension, non-orthogonal CLSs can be constructed straightforwardly for any SL projector. To that end, we first note that for any lattice vector $\ket{\vec{r},\alpha}$, since $P^2=P$, 
\begin{equation*}
P ( P \ket{\Vec{r},\alpha}) = (P \ket{\Vec{r}, \alpha}).
\end{equation*}
Thus, if $P \ket{\Vec{r},\alpha} \neq 0$, it is an eigenvector of $P$ with eigenvalue $1$. Since $P$ is strictly local, $P \ket{\Vec{r},\alpha}$ is compactly supported. Additionally, if $P$ is translationally invariant, then the set $\{  P \ket{\Vec{r},\alpha} |\Vec{r}\in \z^n\}$ is one flavor of CLS. If there are $l$ number of $\alpha$'s for which $P\ket{\Vec{x},\alpha}\neq 0$, we obtain $l$ number of CLSs which together span the band(s) corresponding to $P$. In general, without further processing, none of these wavefunctions are guaranteed to be mutually orthogonal. Moreover, it is possible for two wavefunctions within the same flavor of CLSs to be non-orthogonal to each other.

The existence of CS Wannier-type functions for SL projectors can be understood as a destructive interference phenomenon, similar to CLS in flat-band Hamiltonians. This is based on the rather simple observation that any SL projector can also be regarded as a flat-band Hamiltonian with two flat bands. The CLSs for the band with energy 1 are exactly the CS Wannier-type functions spanning the SL projector's image. Although gapped flat-bands are always spanned by a set of CLSs, they need not be orthogonal CLSs. Indeed, it is impossible to find orthogonal CLSs spanning a flat-band for many flat-band Hamiltonians. Hence, interpreting an SL projector as a flat-band Hamiltonian does not directly help us to conclude that CS Wannier functions spanning it exist. Especially in 1d, this is a reflection of the fact that SL projectors are a subset of flat-band projectors. For some explicit examples of flat-band projectors that are not SL, see section 3 in reference~\cite{chen2014impossibility}.

We illustrate many of these points using a simple example of a 1d projector (see figure~\ref{fig:CLS_nonortho_example}), given by
\begin{align}
P(k) = \frac{1}{2} \begin{pmatrix}
1 + \cos k & \sin k \\
\sin k & 1 - \cos k
\end{pmatrix}. \label{eq:1d_simple_P_example}
\end{align}
We can construct two sets of CLSs by operating the projector on each $A$ and $B$ sublattice orbital (figures \ref{fig:CLS_1} and \ref{fig:CLS_2} show one CLS each from these two sets).  Each set of CLS is not only non-orthogonal, but also linearly dependent, so that neither of the two sets span the image of the projector individually. However, both sets of CLSs considered together span the image of the projector, and hence form two flavors of CS Wannier-type functions. As we will show in the next section, one can always find CS Wannier functions spanning the image of any 1d SL projector. For the example under consideration, this consists of the CLS shown in figure \ref{fig:CLS_3}, and all its lattice translates. It can be easily verified that this set is orthogonal, and hence spans the image of $P$. While it is possible to also obtain these orthogonal CLSs using only destructive interference-based observations, it does not follow immediately that this can be done for projectors with more complicated connectivity, higher number of dimensions (for example, see \eqref{eq:2d_NN_P_example}) or in the absence of lattice translation symmetry.

Thus, while it is clear that for any SL projector, one can construct non-orthogonal CLSs, it is far less obvious (and possibly untrue for $d>1$) that one can construct orthogonal CLSs. One of the objectives of this paper is to present a systematic procedure for the construction of such orthogonal CLSs/CS Wannier functions, and the identification of conditions required for the existence of such functions.

Before proceeding, we note that for non-translationally invariant SL projectors, although CLSs as defined above don't exist, we may define an analogous basis. Specifically, the set $\{ P\ket{\vec{r},\alpha} | \vec{r}\in \z^d, \alpha = 1,\dotsc,n\}$ is a non-orthogonal basis of the image of $P$, and consists of CS wavefunctions.

\subsection{Compact Basis and Strictly Local Projectors} \label{subsec:necessary}
The primary focus of this paper is the identification of necessary and sufficient conditions for the existence of a compactly supported orthogonal basis (CSOB) corresponding to a subspace. In this section, we prove the necessary condition for all dimensions: the strict locality of the associated orthogonal projector.

Proving that the span of a CSOB always corresponds to an SL projector is straightforward. Let a set $S$ be a CSOB of size $R$ on a d-dimensional lattice. Let $P$ be the orthogonal projector onto the space spanned by $S$. Then for any two locations $z,z'\in \z^d$ such that $|z-z'|>R$, and orbitals $\alpha,\beta$, we note that 
 \begin{align*}
     \bra{z,\alpha} P \ket{z',\beta} &= \sum_{\ket{\chi} \in S} \bra{z,\alpha} \ket{\chi}\bra{\chi} \ket{z',\beta} \\
                                     &= 0.
 \end{align*}
In other words, the maximum hopping distance of $P$ is at most $R$. By definition, $P$ is then an SL projector.

Following a similar reasoning, we can easily show that if an orthogonal basis consists of wavefunctions each of which are compactly supported along only one axis, then the corresponding projector is strictly local along that axis. This implies that if CS hybrid Wannier functions exist for all choices of the localized axis, then the projector is SL (along all directions). This proves the necessary part of point (2) of the Main Result.

\section{Compactly Supported Orthogonal Basis: 1d lattices} \label{sec:COB for 1d}
\begin{figure}
      \centering
      \begin{subfigure}[b]{0.40\textwidth}
          \centering
          \includegraphics[width=\textwidth]{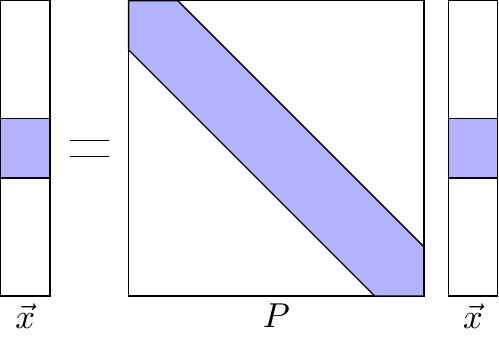}
          \caption{}
          \label{fig:band_proj}
      \end{subfigure} \hspace{0.15\textwidth}  
      \begin{subfigure}[b]{0.40\textwidth}
          \centering
          \includegraphics[width=\textwidth]{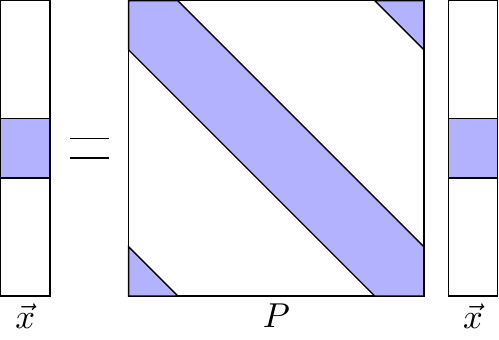}
          \caption{}
          \label{fig:band_proj_periodic}
 	\end{subfigure}
             \caption{(a) SL projectors on 1d lattices are represented by band diagonal projection matrices, and CS wavefunctions are represented by sparse vectors which have non-zero components only within a finite patch (shown in blue). The main result for 1d projectors implies an orthogonal projection operator is band diagonal if and only if it possess a CS orthogonal eigenbasis. (b) 1d SL projectors on finite periodic lattices are band diagonal, with appropriate modifications at the corners. In theorem \ref{thm:1d}, a relation between the band width of the projector and the number of non-zero elements of the basis vectors is provided.}
         \label{fig:band_proj_statement}
\end{figure}

As shown in section \ref{subsec:necessary}, it is straightforward to show that if a compactly supported orthogonal basis (CSOB) or a CS Wannier basis exists, then the corresponding projector is necessarily SL. In this section, we prove the converse, i.e. if a 1d projector is SL, then there exists a CSOB spanning its image, thereby completing the proof of part (1) of the Main Result. Specifically, we will show that
\begin{theorem}\label{thm:1d}
In 1d systems, the span of a set of occupied states possesses a compactly supported orthogonal basis (CSOB) if and only if the orthogonal projector onto the span is strictly local. Additionally:
\begin{enumerate}[(1)]
\item If the maximum hopping distance of an SL projector is $b$, then there exists such a basis consisting of wavefunctions of a maximum size of $3b$ cells, irrespective of the presence of translational invariance.
 \item If the projector is translationally invariant, its image is spanned by a compactly supported Wannier basis in a size $2b$ supercell representation of the lattice.
\end{enumerate}
\end{theorem}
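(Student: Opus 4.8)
The necessary direction (a CSOB forces $P$ to be SL) is already established in Section~\ref{subsec:necessary}, so the plan is to prove the converse by \emph{constructing} a CSOB for the image $\mathcal{H}^P$ of an SL projector $P$ with maximum hopping distance $b$. A naive cell-by-cell Gram--Schmidt sweep from the left fails: orthogonalizing each $P\ket{r,i}$ against all previously built vectors lets their left tails grow by $O(b)$ at every step, so the support never closes up. Instead I would organize the image by a \emph{support filtration}. For each $r\in\z$, let $\mathcal{L}_r\subseteq\mathcal{H}^P$ be the subspace of image vectors having zero support on all cells $>r$, and set $N_r\coloneqq \mathcal{L}_r\ominus\mathcal{L}_{r-1}$, the orthogonal complement of $\mathcal{L}_{r-1}$ inside $\mathcal{L}_r$. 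Because $\mathcal{L}_{r-1}\subseteq\mathcal{L}_r$, the subspaces $\{N_r\}$ are mutually orthogonal and $\bigoplus_r N_r$ is an orthogonal decomposition of $\bigcup_r\mathcal{L}_r$.

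The heart of the argument --- and the step I expect to be the main obstacle --- is a localization lemma: every vector in $N_r$ is supported on the window $[r-b,\,r]$. I would prove this by exploiting idempotency. Take $\psi\in N_r$ and let $\psi_{\mathrm{left}}$ denote its truncation to cells $\le r-b-1$. Since $P$ has hopping distance $b$, the vector $P\psi_{\mathrm{left}}$ is supported on cells $\le r-1$ and lies in the image, so $P\psi_{\mathrm{left}}\in\mathcal{L}_{r-1}$. Using $\psi\perp\mathcal{L}_{r-1}$ together with $P\psi=\psi$,
\begin{equation*}
0=\braket{\psi | P\psi_{\mathrm{left}}}=\braket{P\psi | \psi_{\mathrm{left}}}=\braket{\psi | \psi_{\mathrm{left}}}=\lVert\psi_{\mathrm{left}}\rVert^2 ,
\end{equation*}
the last equality holding because $\psi_{\mathrm{left}}$ is exactly the restriction of $\psi$ to those cells. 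Hence $\psi_{\mathrm{left}}=0$, i.e. $\psi$ is supported on $[r-b,r]$. It is precisely the projector identities $P^2=P=P^\dagger$ that make this work; a generic banded matrix has no such localization of its ``new'' directions, which is why this is the crux rather than a routine estimate.

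With the lemma in hand the remaining steps are short. For completeness, note that each generator $P\ket{r,i}$ is supported on $[r-b,r+b]$, hence lies in $\mathcal{L}_{r+b}=\bigoplus_{s\le r+b}N_s$; since the $P\ket{r,i}$ span $\mathcal{H}^P$, we conclude $\mathcal{H}^P=\overline{\bigoplus_r N_r}$. Choosing an orthonormal basis of each $N_r$ (every vector confined to the $b+1$ cells $[r-b,r]$, well within the stated size $3b$) and taking the union over $r$ then yields the desired CSOB, proving part~(1). For the translationally invariant case (part~(2)), I would observe that lattice translation sends $\mathcal{L}_r\mapsto\mathcal{L}_{r+1}$ and therefore $N_r\mapsto N_{r+1}$; translating a single orthonormal basis of $N_0$ across the lattice produces a set of CS Wannier functions, each confined to a window of at most $b+1$ cells and hence fitting inside a size-$2b$ supercell, as claimed. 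Matching the exact constants quoted in the theorem is then only a matter of bookkeeping, and the dimension count $\dim N_0$ reproduces the expected number of Wannier functions per cell.
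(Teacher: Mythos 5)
Your proof is correct, but it takes a genuinely different route from the paper's. The paper argues algorithmically: it first reduces an SL projector to a nearest-neighbor one via a size-$b$ supercell (section \ref{subsec:conversion}), then runs a cell-by-cell Gram--Schmidt sweep whose viability rests on Lemma \ref{lemma: Pz left-right} (the reduced projector $P-P_z$ stays nearest-neighbor and never reconnects the two sides of $z$), and handles the translationally invariant case by an even/odd sublattice split (Procedures \ref{algo:Gram Schmidt procedure} and \ref{algo:GS_1d_Wannier}), giving the bounds $3b$ and a size-$2b$ supercell. Incidentally, the ``naive sweep'' you dismiss at the outset is essentially the paper's method --- its Lemma \ref{lemma: Pz left-right} shows the supports do stay bounded. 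Your support filtration $\mathcal{L}_r$, $N_r=\mathcal{L}_r\ominus\mathcal{L}_{r-1}$, with the localization lemma proved from $0=\braket{\psi|P\psi_{\mathrm{left}}}=\lVert\psi_{\mathrm{left}}\rVert^2$, avoids both the supercell reduction and the sequencing bookkeeping, and it buys strictly more: vectors of size $b+1$ cells instead of $3b$, and, because $TN_r=N_{r+1}$ makes the decomposition manifestly translation covariant, a CS Wannier basis with respect to the \emph{primitive} cell rather than a size-$2b$ supercell --- which in fact addresses the question of minimal supercell representations raised in the paper's conclusion. Two small points to tighten: on the infinite lattice the identity $\mathcal{L}_r=\overline{\bigoplus_{s\le r}N_s}$ needs the observation that $\bigcap_k \mathcal{L}_{r-k}=\{0\}$ (any $\phi\in\mathcal{L}_r$ orthogonal to every $N_s$ lies in all $\mathcal{L}_{r-k}$, hence vanishes), and your $N_r$ is defined via the a priori infinite-dimensional $\mathcal{L}_{r-1}$; by your own localization lemma, orthogonality to $\mathcal{L}_{r-1}$ reduces to orthogonality to the finitely many $N_{r-b},\dotsc,N_{r-1}$, so the construction is still effectively computable, though less immediately algorithmic than the paper's explicit procedures.
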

We prove these statements through algorithmic constructions of CSOBs and CS Wannier functions, and provide bounds on their sizes on the way.

Let us briefly discuss some properties of SL projectors and outline the approach we will use in this section. We note that 1d SL projectors are band-diagonal matrices when expressed in the orbital basis (see figure~\ref{fig:band_proj_statement}). For example, every 1d nearest neighbor (NN) projector can be represented by a block tridiagonal matrix, with the block size being equal to the number of orbitals per cell. Although it is straightforward to obtain a CSOB for a block size of one, the corresponding statement is not obvious for larger block sizes. However, using the Gram-Schmidt orthogonalization procedure with an appropriate orthogonalization sequence, we show that it is always possible to obtain such an eigenbasis for any block size (see section \ref{subsec:GS}). If the projector is also translationally invariant, i.e. with repeating blocks in the matrix representation, the basis can be chosen to be a Wannier basis in a supercell representation (see section~\ref{subsec:Wannier_1d}). We then extend these results to all 1d SL projectors, since they can be represented as NN projectors using supercell representations (see section \ref{subsec:conversion}). In terms of matrices, this is equivalent to expressing any band diagonal matrix as a block-tridiagonal matrix by grouping together the original blocks into appropriate larger blocks. In this sense, the problem of obtaining a CSOB for any SL projector is equivalent to the problem of obtaining one for an NN projector.

\begin{figure}[b]
    \centering
        \includegraphics[width=0.65\textwidth]{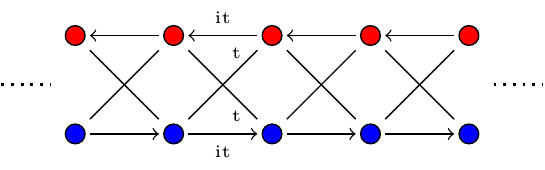}
        \caption{Example of a popular model with CS Wannier functions: the Creutz ladder. The hopping amplitudes result in two flat bands and CS Wannier functions. The A (B) orbitals are shown in red (blue). The hopping amplitudes are shown next to the arrows.}
        \label{fig:Creutz_ladder}
\end{figure}
While a number of model Hamiltonian systems possessing CS Wannier functions and orthogonal CLSs have been studied in the literature, to our knowledge, this is the first time that the connection of such systems with the strict localization of the corresponding projection operator has been explicitly established. Most examples from the literature with CS Wannier functions involve flat-band Hamiltonians defined on the Creutz ladder~\cite{Creutz2001}, the sawtooth lattice~\cite{Huber2010,Kuno2020b} and the diamond lattice~\cite{Leykam2013}. Here, we briefly discuss the Creutz lattice, which has been extensively studied both theoretically~\cite{Kuno2020,Junemann2017, Creutz1999} and experimentally~\cite{Mukherjee2018, Kang2020}. As noted in~\cite{Creutz1999}, for one choice of parameters (see figure \ref{fig:Creutz_ladder}), the Creutz ladder has two exactly flat bands of energies $\pm 2t$, each of which is spanned by CS Wannier functions. The corresponding Hamiltonian in k-space is given by:
\begin{align}
H(k) = 2t\begin{pmatrix}
\sin k & \cos k\\
\cos k & -\sin k
\end{pmatrix}.
\end{align}
As mentioned in section \ref{subsec:construction_nonortho}, SL flat-band Hamiltonians possess some destructive interference properties, which constrain the movement of initially localized particles. For example, in the system under consideration, a particle initially localized on an A orbital cannot diffuse to a B orbital located more than a hop away (see figure 2 and related discussion in reference~\cite{Creutz1999}). Based on this observation, one can obtain orthogonal CLSs, or equivalently, CS Wannier functions for the two bands of this Hamiltonian. The Wannier functions (labeled by $\pm$ for the two bands) localized at cell $z$ are given by:
\begin{align*}
\ket{W_\pm} &= \frac{1}{2} \left( \pm i\ket{z,A} \pm \ket{z,B}+ \ket{z+1,A} + i \ket{z+1,B}\right).
\end{align*}
In accordance with the predictions of our paper, the band projectors onto the two bands are SL, and are given by
\begin{align}
P_\pm(k) = \frac{1}{2} \begin{pmatrix}
1\pm \sin k &\pm \cos k \\
\pm \cos k & 1\mp \sin k
\end{pmatrix}.
\end{align}
Indeed, using the techniques developed in the next subsection, one can obtain these CS Wannier functions from the expressions for the projectors $P_\pm(k)$.

While most popular flat-band Hamiltonians do not possess CS Wannier functions, i.e. a set of orthogonal CLSs spanning the flat band, as discussed in section \ref{subsec:construction_nonortho}, it is possible to construct models with orthogonal CLSs by enlargement of the unit cell of any known flat-band model. This was done for example in \cite{Huber2010} and~\cite{Kuno2020b} to obtain CS Wannier functions spanning a flat band in the sawtooth lattice.

Although most of the examples from the literature involve flat bands, CS Wannier functions and SL projectors can correspond to dispersive bands. Such an example is provided in Appendix~\ref{subsec:1d_example}, along with a discussion of the CS Wannier functions obtained using the algorithm from the next subsection.

\subsection{Nearest Neighbor Projectors} \label{subsec:GS}
In this section, we present an algorithm for obtaining a CSOB corresponding to any nearest neighbor (NN) projector. The algorithm is based on the Gram-Schmidt orthogonalization procedure, and produces a CSOB with each basis vector having a maximum spatial extent of $3$ consecutive lattice cells.  The methods in this section are applicable even if the projector is not translationally invariant.

The basic idea underlying our procedure is to obtain the set $\widetilde{\Pi}_z^P$ defined in section \ref{subsec: notation}, corresponding to the localized eigenstates of $P_z$ for some lattice site $z$, and to `reduce' $P$ to $P-P_z$ as required by the Gram-Schmidt procedure. Then, we operate this reduced projector $P-P_z$ on another cell $z'$, and iterate along a sequence of cell locations which includes all the integers. The union of all the $\widetilde{\Pi}_{z}^P$ sets will form an orthonormal basis for the image of $P$. While at the first step, $\widetilde{\Pi}_z^P$ is guaranteed to have compactly supported wavefunctions, it is not obvious that the size of the corresponding wavefunctions stays bounded for subsequent steps. With the help of the following lemma, we can show that the size of each vector at any step will be at most three consecutive cells.

\begin{lemma}
\label{lemma: Pz left-right}
For any $z\in\z$, orbital indices $i,j \in \{1,\dotsc,n\}$, and integers $\delta,\delta'>0$,
\begin{align}
\bra{z-\delta, i} (P-P_z) \ket{z+\delta', j} = 0. \label{eq:Pz left-right connections}
\end{align}

Additionally,
\begin{align}
(P-P_z) \ket{z,j} = 0. \label{eq:Pz z no connections}
\end{align}

\end{lemma}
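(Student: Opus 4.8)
The plan is to treat the two displayed identities separately, since \eqref{eq:Pz z no connections} is essentially immediate while \eqref{eq:Pz left-right connections} requires real work but collapses to a single matrix block. Throughout I would write $P$ in block form with $(m,n)$ block $P_{mn}=\bra{m}P\ket{n}$ (an $n\times n$ matrix in orbital space), and abbreviate $D\coloneqq P_{z,z}$, $B\coloneqq P_{z-1,z}$, $C\coloneqq P_{z,z+1}$, and $D_+\coloneqq P_{z+1,z+1}$. Because $P$ is nearest neighbor, $P_{mn}=0$ whenever $|m-n|>1$, and every generator $P\ket{z,i}$ is supported on the three cells $z-1,z,z+1$; hence so is every vector of $\h_z^P$, and therefore $\bra{m,\cdot}P_z\ket{n,\cdot}=0$ unless $m,n\in\{z-1,z,z+1\}$.

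For \eqref{eq:Pz z no connections} I would show that $P_z\ket{z,j}=P\ket{z,j}$. Indeed, $P\ket{z,j}\in\h_z^P$ is fixed by $P_z$, while $(I-P)\ket{z,j}\in\operatorname{image}(P)^\perp$ is orthogonal to $\h_z^P$ and hence annihilated by $P_z$; adding the two pieces gives $P_z\ket{z,j}=P_zP\ket{z,j}=P\ket{z,j}$, so $(P-P_z)\ket{z,j}=0$.

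For \eqref{eq:Pz left-right connections} I would first prune the cases. Since $\delta,\delta'>0$ the two sites are at distance $\delta+\delta'\ge2$, so $\bra{z-\delta,i}P\ket{z+\delta',j}=0$ automatically. By the support remark above, $\bra{z-\delta,i}P_z\ket{z+\delta',j}$ can be nonzero only when both $z-\delta$ and $z+\delta'$ lie in $\{z-1,z,z+1\}$, which forces $\delta=\delta'=1$. Thus the only nontrivial statement is that the $(z-1,z+1)$ block of $P_z$ vanishes. A short computation identifies this block: the Gram matrix of the generators $\{P\ket{z,k}\}$ equals $\bra{z,k}P^2\ket{z,l}=D_{kl}$, so the orthogonal projector onto their span is $P_z=\sum_{k,l}\left(P\ket{z,k}\right)(D^+)_{kl}\left(\bra{z,l}P\right)$, and reading off the rows at $z-1$ and columns at $z+1$ gives $\bra{z-1,i}P_z\ket{z+1,j}=(BD^+C)_{ij}$, where $D^+$ is the Moore--Penrose pseudoinverse.

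The crux, and the only genuine obstacle, is therefore to prove $BD^+C=0$. Two consequences of $P^2=P$ drive this. The $(z-1,z+1)$ block of $P^2=P$ has a single surviving term, giving $BC=P_{z-1,z}P_{z,z+1}=P_{z-1,z+1}=0$, i.e. $\operatorname{range}(C)\subseteq\ker B$. The $(z,z+1)$ block gives $DC+CD_+=C$, i.e. $DC=C(I-D_+)$, which says exactly that $\operatorname{range}(C)$ is $D$-invariant. Since $D$ is Hermitian, a $D$-invariant subspace is reducing and is therefore invariant under every function of $D$, in particular under $D^+$. Hence each column of $D^+C$ lies in $\operatorname{range}(C)\subseteq\ker B$, so $BD^+C=0$. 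I expect the fiddly points to be the bookkeeping that identifies the block as $BD^+C$ and the careful handling of $D^+$ on $\ker D$; the conceptual content is entirely in the invariance argument, which is what forces the off-diagonal block to disappear.
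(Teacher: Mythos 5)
Your proof is correct, and it takes a genuinely different route from the paper's. The paper first applies a cell-wise unitary $U=\bigoplus_z U_z$ to diagonalize every on-site block $P_{zz}$, so that $\Pi_z^P$ is automatically orthogonal and $P_z$ becomes an explicit sum of rank-one projectors $P\ket{z,\gamma}\bra{z,\gamma}P/\bra{z,\gamma}P\ket{z,\gamma}$ over orbitals with nonzero diagonal; the vanishing of $\bra{z-1,\alpha}P_z\ket{z+1,\beta}$ is then a scalar computation in which the ``linked self-hops sum to $1$'' identity \eqref{eq:1D_sum_of_hoppings_is_1} forces all contributing denominators to coincide, after which the two-hop identity \eqref{eq:1D_sum_prod_is_zero} kills the numerator. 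Your two block identities $BC=0$ and $DC+CD_{+}=C$ are exactly the basis-free forms of those two scalar identities --- both are just the $(z-1,z+1)$ and $(z,z+1)$ blocks of $P^2=P$ --- but you then bypass the diagonalization entirely: the Gram-matrix/pseudoinverse formula $P_z=\sum_{k,l}\bigl(P\ket{z,k}\bigr)(D^{+})_{kl}\bigl(\bra{z,l}P\bigr)$ identifies the critical block as $BD^{+}C$, and the observation that $\mathrm{range}(C)$ is $D$-invariant, hence reducing for the Hermitian $D$, hence invariant under $D^{+}=f(D)$, finishes the argument. What your route buys: no change of basis, automatic handling of a singular Gram matrix (linear dependence among the $P\ket{z,k}$, which the paper handles by restricting sums to $(d_z)_\gamma\neq 0$), no need for the positivity statement \eqref{eq:1d_self_hop_positive_def}, and an argument that would transfer with little change to the higher-dimensional reduction in Lemma \ref{lemma: PA higher dims}. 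What the paper's route buys: the diagonal basis and the explicit normalized vectors $\ket{P,z,\alpha}$ are not mere proof devices but are reused downstream --- in Procedure \ref{algo:Gram Schmidt procedure}, in the size bounds, and in the higher-dimensional proof --- so the computational form of $P_z$ does double duty. Finally, your treatment of \eqref{eq:Pz z no connections}, splitting $\ket{z,j}$ into $P\ket{z,j}\in\mathcal{H}_z^P$ and $(I-P)\ket{z,j}\perp\mathcal{H}_z^P$, is valid and slightly more general than the paper's one-line remark, since it does not presuppose the diagonal basis.
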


\begin{figure}[b]
     \centering
     \begin{subfigure}[b]{0.55\textwidth}
         \centering
         \includegraphics[width=\textwidth]{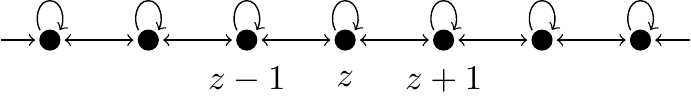}
         \caption{}
         \label{fig:connectivity_before}
     \end{subfigure} \\
     \begin{subfigure}[b]{0.55\textwidth}
         \centering
         \includegraphics[width=\textwidth]{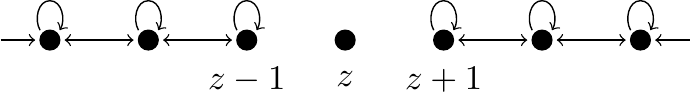}
         \caption{}
         \label{fig:connectivity_after}
     \end{subfigure}
            \caption{Change in the connectivity of an NN projector after a Gram-Schmidt step. Each dot represents a cell, with arrows indicating possible non-zero matrix elements of $P$. (a) Connectivity of an NN projector $P$. (b) Connectivity of the reduced projector $P-P_z$. Missing arrows indicate that the corresponding matrix elements are zero.}
        \label{fig:connectivity}
\end{figure}

\begin{proof}
For any cell $z$, let $P_{zz}:\h^z \rightarrow \h^z$ denote the $n\times n$ matrix corresponding to the matrix elements of $P$ between different orbitals at cell $z$, i.e. 
\begin{align*}
\left( P_{zz} \right) _{ij} = \bra{z,i} P \ket{z,j}.
\end{align*}
Since $P$ is Hermitian, $P_{zz}$ is also Hermitian. Hence, there exists a unitary matrix $U_z$, such that $D_z = U_z^\dagger P_{zz} U_z$ is a real diagonal matrix. Let $d_z$ denote the vector of diagonal elements of the matrix $D_z$. The unitary $U_z$ defines a new basis at $z$:
\begin{align}
    \ket{z,i}' \coloneqq \sum_j (U_{z})_{ji} \ket{z,j}. \label{eq:diagonal basis}
\end{align}
We call the orbital basis $\{ \ket{z,\alpha}' | \alpha=1,\dotsc,n\}$ a diagonal basis at $z$. Since $\bra{z,i}' P \ket{z,j}' = d_i \delta_{ij}$, we can significantly reduce the intra-cell connectivity of $P$, by performing the unitary transformation at every cell separately. Equivalently, we use the global unitary transformation
\begin{align}
    U \coloneqq \mathop{\bigoplus}_{z\in \z} U_z.  \label{eq:diagonal_basis_1d}
\end{align}
Hereafter, for notational convenience, we drop the prime outside the vectors, and assume that we have already rotated the basis to a diagonal one. Thus, $\bra{z,\alpha}P\ket{z,\beta} = 0$ whenever $\alpha\neq \beta$. \par 

Now we obtain a few important identities using the fact that $P$ is an orthogonal NN projector, and using appropriate insertions of resolution of identity:
\begin{enumerate}
	\item Since $P$ is positive-semidefinite, for any cell $z$ and orbital $\alpha$, we have 
    \begin{align}
       \bra{z,\alpha} P \ket{z,\alpha} &\geq 0.   \label{eq:1d_self_hop_positive_def}
    \end{align} 

    \item For any two neighboring cells $z$ and $z+1$ and orbitals $\alpha, \beta$, we have 
    \begin{align}
         \bra{z,\alpha}P\ket{z,\alpha} &+ \bra{z+1,\beta}P\ket{z+1,\beta} = 1, \quad 
        \text{ whenever }\bra{z,\alpha}P\ket{z+1,\beta} \neq 0.  \label{eq:1D_sum_of_hoppings_is_1}
    \end{align}
    This follows from
    \begin{align*}
        \bra{z,\alpha}P\ket{z+1,\beta} &= \bra{z,\alpha}P^2\ket{z+1,\beta}  \\
        & = \bra{z,\alpha}P\ket{z,\alpha}\bra{z,\alpha}P\ket{z+1,\beta} + \bra{z,\alpha}P\ket{z+1,\beta} \bra{z+1,\beta}P\ket{z+1,\beta}  \\
        &= \bra{z,\alpha}P\ket{z+1,\beta} \left( \bra{z,\alpha}P\ket{z,\alpha} + \bra{z+1,\beta}P\ket{z+1,\beta}\right). 
    \end{align*}
    
    \item Similarly, for any two cells $z$ and $z+2$ separated by two hops, since $P$ is an NN operator,
    \begin{align}
        \sum _ \gamma \bra{z,\alpha} P \ket{z+1,\gamma} \bra{z+1,\gamma} P \ket{z+2,\beta} = 0. \label{eq:1D_sum_prod_is_zero}
    \end{align}
\end{enumerate}
Now we obtain the projector $P_z$ by orthogonalizing $\p_z$ (using the diagonal basis orbitals of $P_{zz}$ in expression \ref{eq:defn piz}). 
Since $\bra{z,\alpha}P\ket{z,\beta}=0$ whenever $\alpha\neq \beta$, $\p_z$ is already orthogonal, so we only need to normalize the vectors in it in order to obtain an orthonormal set.
Whenever $P \ket{z,\alpha} \neq 0$, we denote the corresponding normalized vector by 
\begin{align*}
    \ket{P,z,\alpha} \coloneqq \frac{P\ket{z,\alpha}}{\sqrt{\bra{z,\alpha}P\ket{z,\alpha}}}.
\end{align*}{}
Thus, we obtain $P_z$ by adding the projector onto each orthonormal vector:
\begin{align*}
    P_z &\equiv \sum_\alpha^{(d_z)_{\alpha} \neq 0} \ket{P,z,\alpha}\bra{P,z,\alpha} = \sum_\alpha^{(d_z)_{\alpha}\neq 0} P\frac{\ket{z,\alpha}\bra{z,\alpha}}{\bra{z,\alpha}P\ket{z,\alpha}} P.
\end{align*}
Since $\p_z$ consists of wavefunctions with non-zero support only on cells $z$ and $z\pm1$, equation \eqref{eq:Pz left-right connections} is already satisfied, whenever $\delta>1$ or $\delta'>1$. Thus, we only need to verify that \eqref{eq:Pz left-right connections} is satisfied for $\delta = \delta ' = 1$, for which, we get
\begin{align*}
    \bra{z-1,\alpha} P_z \ket{z+1,\beta} &= \sum_\gamma^{(d_z)_{\gamma} \neq 0} \frac{\bra{z-1,\alpha}P\ket{z,\gamma}\bra{z,\gamma}P\ket{z+1,\beta}}{\bra{z,\gamma}P\ket{z,\gamma}}.
\end{align*}

Any non-zero term in the summation will have $\bra{z-1,\alpha}P\ket{z,\gamma} \neq 0$. From condition \eqref{eq:1D_sum_of_hoppings_is_1}, all such orbitals $\gamma$ at cell $z$ possess the same self-hop:
\begin{align*}
    \bra{z,\gamma}P\ket{z,\gamma} = 1- \bra{z-1,\alpha} P \ket{z-1,\alpha} \neq 0.
\end{align*}
Thus, we get 
\begin{align}
    \bra{z-1,\alpha} P_z \ket{z+1,\beta} &= \frac{1}{1- \bra{z-1,\alpha} P \ket{z-1,\alpha}}\sum_\gamma \bra{z-1,\alpha}P\ket{z,\gamma}\bra{z,\gamma}P\ket{z+1,\beta} \nonumber\\
    &= 0, \label{eq:Px_connectivity}
\end{align}
where we have used condition \eqref{eq:1D_sum_prod_is_zero}. Since $P$ has vanishing matrix elements between orbitals lying on opposite sides of $z$, this proves equation \eqref{eq:Pz left-right connections}.

We note that $P\ket{z,\alpha} = P_z \ket{z,\alpha}$ and hence $(P-P_z) \ket{z,\alpha} = 0$. This leads to equation \eqref{eq:Pz z no connections}. The connectivity of the reduced projector is shown in figure \ref{fig:connectivity}.
\end{proof}

Based on this lemma, we present a method for obtaining a CSOB  for the image of an NN projector, as described in Procedure \ref{algo:Gram Schmidt procedure}.  

\begin{algorithm}[]
     \caption{Procedure for Constructing a CSOB for the Image of any 1d NN Projector} \label{algo:Gram Schmidt procedure}
    \SetKwInOut{Input}{Input}
    \SetKwInOut{Output}{Output}
    \Input{An NN projector $\mathcal{P}$ acting on a 1d lattice.}
    \emph{Procedure:} Define a non-repeating sequence $S \equiv z_0,z_1,\dotsc$ of integers, s.t. it contains all the integers. Set $P\leftarrow\mathcal{P}$. Initialize $k=0$, and do:
    \begin{enumerate}
    	\item Set $z \leftarrow z_k$.
        \item Obtain $\Pi_z^P$ (as defined in eq. \eqref{eq:defn piz}).
        \item Orthonormalize the set $\Pi_z^P$ to obtain $\widetilde{\Pi}_z^P$.
        \item Obtain $P_z$ from $\widetilde{\Pi}_z^P$ using \eqref{eq:Pz and pi}.
        \item Update $P\leftarrow P-P_z$, increment $k$, and go back to step (1).
    \end{enumerate}
    
    \Output{The set $\widetilde{\Pi} \coloneqq \cup_{z\in \z} \widetilde{\Pi}_z^P$ of compactly supported wavefunctions, which is an orthonormal basis of the image of the projector $\mathcal{P}$.}    
\end{algorithm}

\begin{lemma}
The set $\widetilde{\Pi}$ obtained from the Gram-Schmidt algorithm \ref{algo:Gram Schmidt procedure} is an orthonormal basis of the image of the NN projector $\mathcal{P}$. Furthermore, every element of $\widetilde{\Pi}$ is compactly supported, with a spatial extent of no more than three consecutive cells.
\end{lemma}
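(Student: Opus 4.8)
The plan is to establish three things about the output $\widetilde{\Pi}=\cup_k \widetilde{\Pi}_{z_k}^P$: that the reduced projectors stay nearest-neighbor (NN) so that each produced vector is supported on at most three cells, that the vectors are orthonormal, and that together they span the image of $\mathcal{P}$. Write $P^{(0)}=\mathcal{P}$ and $P^{(k+1)}=P^{(k)}-P_{z_k}$ for the projector at the start of step $k$, so that $\widetilde{\Pi}_{z_k}^P$ and $P_{z_k}$ are built from $P^{(k)}$ at cell $z_k$. The whole argument runs by induction on $k$, the central inductive invariant being that every $P^{(k)}$ is again an NN orthogonal projector; this invariant is exactly what licenses applying Lemma \ref{lemma: Pz left-right} to $P^{(k)}$ at $z_k$ in the first place.

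First I would verify the invariant. Since $P_{z_k}$ projects onto a subspace of the image of $P^{(k)}$, the difference $P^{(k+1)}=P^{(k)}-P_{z_k}$ is again an orthogonal projector (if $A,B$ are orthogonal projectors with $\mathrm{Im}(B)\subseteq\mathrm{Im}(A)$, then $AB=BA=B$, whence $(A-B)^2=A-B=(A-B)^\dagger$). To see it is still NN, note the vectors of $\widetilde{\Pi}_{z_k}^P$ are supported on cells $z_k-1,z_k,z_k+1$, so $P_{z_k}$ has matrix elements only within this window; the one potentially new second-neighbor hop, between $z_k-1$ and $z_k+1$, vanishes by equation \eqref{eq:Px_connectivity}. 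Hence subtracting $P_{z_k}$ from the NN operator $P^{(k)}$ creates no hop longer than one cell, and $P^{(k+1)}$ is NN. The same invariant yields the size bound at once: because $P^{(k)}$ is NN, each $P^{(k)}\ket{z_k,i}$, and hence each vector of $\widetilde{\Pi}_{z_k}^P$, is supported on $z_k-1,z_k,z_k+1$, i.e. on at most three consecutive cells.

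Next I would address orthonormality. Within a single step the orthonormalization in step 3 makes $\widetilde{\Pi}_{z_k}^P$ orthonormal by construction. For vectors from distinct steps $j<k$, I would use that each reduction peels off an orthogonal summand: $P^{(j+1)}=P^{(j)}-P_{z_j}$ has image orthogonal to $\mathrm{Im}(P_{z_j})$, and since $\mathrm{Im}(P^{(k)})\subseteq\mathrm{Im}(P^{(j+1)})$ it follows that $\mathrm{Im}(P^{(k)})\perp\mathrm{Im}(P_{z_j})$. As $\widetilde{\Pi}_{z_k}^P\subseteq\mathrm{Im}(P_{z_k})\subseteq\mathrm{Im}(P^{(k)})$, every vector produced at step $k$ is orthogonal to every vector from any earlier step, so $\widetilde{\Pi}$ is orthonormal.

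The part I expect to be the main subtlety is completeness, i.e. that $\widetilde{\Pi}$ spans all of $\mathrm{Im}(\mathcal{P})$ rather than a proper subspace, which on an infinite lattice is a priori a limiting statement. I would reduce it to a finite computation. It suffices to show each $\mathcal{P}\ket{r,i}$ lies in $\mathrm{span}(\widetilde{\Pi})$, since such vectors span $\mathrm{Im}(\mathcal{P})$. Telescoping gives $\mathcal{P}\ket{r,i}=\sum_{k<K}P_{z_k}\ket{r,i}+P^{(K)}\ket{r,i}$, with each $P_{z_k}\ket{r,i}\in\mathrm{span}(\widetilde{\Pi}_{z_k}^P)$. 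Two observations close the argument: first, $P_{z_k}$ is supported on $\{z_k-1,z_k,z_k+1\}$, so $P_{z_k}\ket{r,i}\neq 0$ only when $z_k\in\{r-1,r,r+1\}$, and since the sequence visits each cell once, at most three terms survive; second, once cell $r$ is processed, say at step $j$ with $z_j=r$, equation \eqref{eq:Pz z no connections} gives $P^{(j+1)}\ket{r,i}=0$, and because every later $P_{z_\ell}$ has image inside $\mathrm{Im}(P^{(j+1)})$ the residual keeps annihilating $\ket{r,i}$, so $P^{(K)}\ket{r,i}=0$ for all $K>j$. Choosing $K$ past the finitely many relevant steps exhibits $\mathcal{P}\ket{r,i}$ as a finite sum of vectors in $\mathrm{span}(\widetilde{\Pi})$, which finishes the proof. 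The genuinely load-bearing step throughout is the preservation of the NN property under iteration; without the vanishing second-neighbor hop of Lemma \ref{lemma: Pz left-right}, the supports of the basis vectors would grow without bound from step to step and the size-three bound would collapse.
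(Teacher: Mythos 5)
Your proof is correct and takes essentially the same route as the paper: the load-bearing step is exactly the paper's Lemma \ref{lemma: Pz left-right}, invoked inductively to keep every reduced projector nearest-neighbor, with Gram--Schmidt then delivering orthonormality and the three-cell bound. The only difference is one of rigor, not of method --- where the paper simply asserts that applying Gram--Schmidt to the spanning set $\cup_z \Pi_z^P$ yields a basis of the image, you explicitly verify the orthogonal-projector invariant, the cross-step orthogonality, and the telescoping completeness argument on the infinite lattice, all of which the paper leaves implicit.
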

\begin{proof}
The set $\cup_z \p_z^P$ spans the image of $\mathcal{P}$. Hence, procedure \ref{algo:Gram Schmidt procedure}, which is the application of the Gram-Schimidt procedure on it, creates an orthonormal basis of the image of $\mathcal{P}$. Lemma \ref{lemma: Pz left-right} implies that the reduced projector $P-P_z$ obtained at any iteration in the procedure is also an NN projector. Thus, every vector belonging to $\widetilde{\p}_z^P$ for any $z$ is guaranteed to be compactly supported, with a maximum spatial extent of $3$ consecutive cells.
\end{proof}
It is possible to choose a sequence $S$ so that at most $n$ of the created basis vectors have a spatial extent of $3$ cells, with all the remaining vectors having a spatial extent of at most $2$ consecutive cells. For an example, see figure \ref{fig:GS_example}.

\subsection{Translationally Invariant Nearest Neighbor Projectors} \label{subsec:Wannier_1d}
Although Procedure \ref{algo:Gram Schmidt procedure} from the previous section also works for translationally invariant projectors, in general, the obtained basis may not consist of Wannier functions. In this section, we will show that it is possible to obtain a Wannier basis consisting of compactly supported functions for the image of $P$, in a size $2$ supercell lattice representation. By a size $2$ supercell representation of the lattice, we mean relabelling the cells so that the lattice is regarded as consisting of ``supercells" which are each twice the size of the original cell. In this representation, the unit cell is the supercell which consists of two primitive unit cells. We also use a supercell representation for the conversion of an SL projector to an NN projector, as will be discussed in the next subsection.
To that end, we first divide the lattice into two subsets, $A$, and $B$, consisting of alternating cells. For concreteness, we choose $A$ to consist of even locations ($2\z$), and $B$ to be the odd locations ($2\z+1$).

We define the set $\p_A^P$ as being
\begin{align*}
\p_{A}^P = \bigcup_{i\in A} \p_i^P.
\end{align*}
We also define $\h_A^P$ to be the span of $\p_A^P$, and $P_A$ to be the orthogonal projection onto $\h_A^P$. From lemma \ref{lemma: Pz left-right}, we note that $P-P_A$ has a significantly reduced connectivity, as shown in figure \ref{fig:connectivity_A}. Specifically, the only non-zero matrix elements of $P-P_A$ between any two orbitals, are those between any two orbitals located at the same cell belonging to set $B$.

\begin{figure}[t]
    \centering
        \centering
        \includegraphics[scale=1]{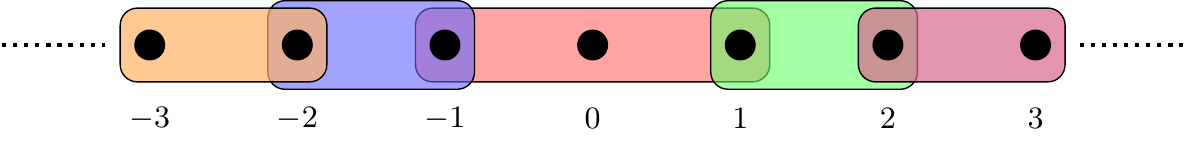}
        \caption{If we use the sequence $S=0,-1,1,-2,2,\dotsc$ in procedure \ref{algo:Gram Schmidt procedure}, then all sets $\widetilde{\p}_z^P$ except for $\widetilde{\p}_0^P$ consist of wavefunctions of a maximum size of $2$. Each colored rectangle represents the maximum spatial extent of the wavefunctions in $\widetilde{\p}_z^P$ obtained during one iteration of the procedure. Each unit cell is represented by a black dot.}
        \label{fig:GS_example}
\end{figure}

\begin{figure}[t]
     \centering
     \begin{subfigure}[b]{0.55\textwidth}
         \centering
         \includegraphics[width=\textwidth]{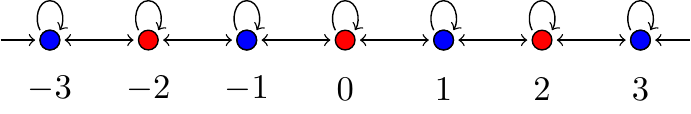}
         \caption{}
         \label{fig:connectivity_before_A}
     \end{subfigure} \\
     \begin{subfigure}[b]{0.55\textwidth}
         \centering
         \includegraphics[width=\textwidth]{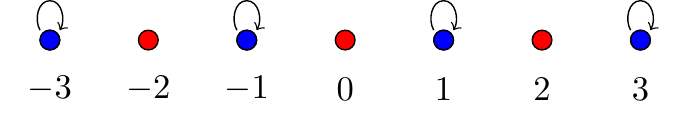}
         \caption{}
         \label{fig:connectivity_after_A}
     \end{subfigure}
            \caption{Sets A and B are shown in red and blue respectively. (a) Connectivity of an NN projector $P$. (b) Connectivity of $P-P_A$. Missing arrows indicate vanishing matrix elements.}
        \label{fig:connectivity_A}
\end{figure}

\begin{algorithm}[h]
    \caption{Compactly Supported Wannier Basis for 1d translationally invariant NN projectors} \label{algo:GS_1d_Wannier}
    \SetKwInOut{Input}{Input}
    \SetKwInOut{Output}{Output}
    \Input{A 1d translationally invariant NN projection operator $P$.}
    \emph{Procedure:} 
    \begin{enumerate}
    	\item Obtain $\p_0^P$, and orthogonalize it to obtain the set $\widetilde{\p}_0^P$.
    	\item Obtain the orthogonal projection operator $P_0$ onto the span of $\p_0^P$.
    	\item Obtain a reduced projection operator $P' \coloneqq P - P_0 - \y^2 P_0 \y ^{\dagger 2} $.
    	\item Obtain and orthogonalize $\p_1^{P'}$ to obtain the set $\widetilde{\p}_1^{P'}$.
    	\item Obtain the set $\tilde{\Pi}$, defined as 
    	\begin{align}
    		\widetilde{\p} = \left( \bigcup_{z\in \z} \{ \y^{2z} \ket{\chi}:\ket{\chi}\in \widetilde{\p}_0^P \} \right) \cup \left( \bigcup_{z\in \z} \{ \y^{2z} \ket{\chi}:\ket{\chi}\in \widetilde{\p}_1^{P'} \}  \right). \label{eq:wannier_1d_union}
    	\end{align}
    \end{enumerate}
    
    \Output{The set $\tilde{\Pi}$ consisting of compactly supported Wannier functions spanning the image of $P$, corresponding to a size $2$ supercell representation.}    
\end{algorithm}

Since $P$ is translationally invariant, it is useful to define a translation operator $\y$, which satisfies
\begin{align*}
\y \ket{z,i} = \ket{z+1,i},
\end{align*}
for all $z\in \z$ and $i\in\{1,\dotsc,n\}$. Since $P$ is translationally invariant, $P = \y^\dagger P \y$. \par 

The algorithm for obtaining a CSOB is summarized in Procedure \ref{algo:GS_1d_Wannier}.

\begin{lemma} \label{thm:Wannier_for_1d}
The output obtained using Procedure \ref{algo:GS_1d_Wannier} is a compactly supported Wannier basis spanning the image of $P$ corresponding to a size $2$ supercell representation.
\end{lemma}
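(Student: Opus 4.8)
The plan is to verify three things about the set $\widetilde{\p}$ of eq.~\eqref{eq:wannier_1d_union}: that it is orthonormal, that it spans the image $\h^P$ of $P$, and that it consists of compactly supported functions arranged as a single orbit of the supercell translation $\y^2$. The last point is immediate from the construction, so the substance lies in orthonormality and completeness. The guiding picture is that $\h^P$ decomposes into two $\y^2$-covariant ``flavors'': one generated by $\widetilde{\p}_0^P$ and localized near the even cells, and one generated by $\widetilde{\p}_1^{P'}$ and localized on the odd cells.

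First I would record the elementary orthogonality $\h_z^P\perp\h_{z'}^P$ whenever $|z-z'|\geq 2$: for $\ket{\chi}=P\ket{z,i}$ and $\ket{\phi}=P\ket{z',j}$ one has $\braket{\chi|\phi}=\bra{z,i}P\ket{z',j}=0$ since $P$ is nearest neighbor. In particular the even-cell subspaces are mutually orthogonal, so $P_A\coloneqq\sum_{z\in 2\z}P_z$ is a genuine orthogonal projector onto $\h_A^P$ and the translates $\y^{2z}\widetilde{\p}_0^P$ form an orthonormal family.

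Next I would identify $\widetilde{\p}_1^{P'}$. Because $\h_z^P$ is supported on cells $z-1,z,z+1$, only $P_0$ and $P_2$ among the even-cell projectors act on cell $1$, whence $P'\ket{1,i}=(P-P_A)\ket{1,i}$; by the reduced-connectivity statement for $P-P_A$ (a consequence of Lemma~\ref{lemma: Pz left-right}) this vector is supported on cell $1$ alone. Thus every element of $\widetilde{\p}_1^{P'}$ is compactly supported on a single cell, and its $\y^2$-translates have disjoint supports and are automatically orthonormal. For cross-orthogonality of the two flavors it suffices, since $\widetilde{\p}_1^{P'}$ lives on cell $1$, to check against $\h_0^P$ and $\h_2^P$ (the only even-cell subspaces reaching cell $1$): for $\ket{\chi}\in\h_0^P$ one has $P\ket{\chi}=P_0\ket{\chi}=\ket{\chi}$ while $P_2\ket{\chi}=0$ by the orthogonality above, so $P'\ket{\chi}=0$, and Hermiticity of $P'$ gives $\braket{\chi|P'|1,i}=0$; the case $\ket{\chi}\in\h_2^P$ is symmetric. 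Translation covariance under $\y^2$ then promotes all of these to orthonormality of the full set $\widetilde{\p}$.

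The crux is completeness. Here I would use that $P-P_A$ is the orthogonal projector onto $\h^P\ominus\h_A^P$ and, by the reduced-connectivity statement, is block-diagonal over the odd cells, $P-P_A=\sum_{z\in 2\z+1}Q_z$ with each $Q_z$ a projector on the orbital space $\h^z$. Since $\h_1^{P'}=\operatorname{image}(Q_1)$ and $\y^2 Q_z\y^{\dagger 2}=Q_{z+2}$ by translational invariance, the union $\bigcup_z\y^{2z}\widetilde{\p}_1^{P'}$ spans $\operatorname{image}(P-P_A)=\h^P\ominus\h_A^P$; together with $\h_A^P$ this yields all of $\h^P$. I expect the main obstacle to be the bookkeeping that ties the procedure's $P'=P-P_0-\y^2 P_0\y^{\dagger 2}$ to the global object $P-P_A$ (so that $\widetilde{\p}_1^{P'}$ is exactly the odd-cell block) and the verification that $P_A$ is truly an orthogonal projector; once these are secured, orthonormality and completeness together with the built-in $\y^2$-covariance identify $\widetilde{\p}$ as a compactly supported Wannier basis in the size-$2$ supercell representation.
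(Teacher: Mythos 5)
Your proposal is correct and takes essentially the same route as the paper's own proof: the same decomposition of $\h^P$ into the even-cell flavor (translates of $\widetilde{\p}_0^P$, made orthogonal by the nearest-neighbor structure of $P$) and the odd-cell flavor, the same key identity $P'\ket{1,i}=(P-P_A)\ket{1,i}$, and the same appeal to Lemma~\ref{lemma: Pz left-right} for the reduced, on-site connectivity of $P-P_A$. Your explicit block decomposition $P-P_A=\sum_{z\in 2\z+1}Q_z$ with $\y^2 Q_z\y^{\dagger 2}=Q_{z+2}$ merely spells out what the paper compresses into ``using the same arguments as for the first set.''
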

\begin{figure}[h]
    \centering
    \begin{subfigure}[c]{\textwidth}
        \centering
        \includegraphics[scale=1]{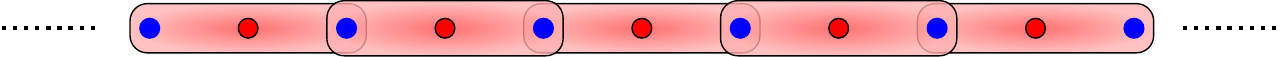}
        \caption{}
        \label{fig:1DchainSublattices}
    \end{subfigure}%
    \vspace{10pt}
    \begin{subfigure}[c]{1\textwidth}
        \centering
        \includegraphics[scale=1]{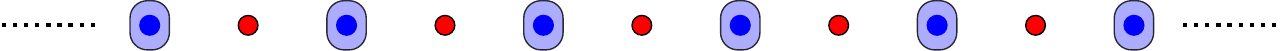}
        \caption{}
        \label{fig:1DchainPPrimeConnectivity}
    \end{subfigure}
    
    \begin{subfigure}[c]{\textwidth}
        \centering
        \includegraphics[scale=1]{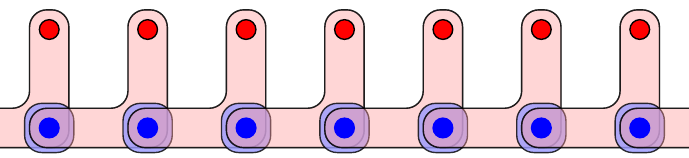}
        \caption{}
        \label{fig:1Dchain_superlattice_Wannier}
    \end{subfigure}
    \caption{The red and blue dots denote the lattice sites belonging to sets $A$ and $B$ respectively. (a) Wavefunctions in $\widetilde{\p}_z^P$ centered at cell $z\in A$ are chosen to be the wavefunctions in $\widetilde{\p}_0^P$ translated by $z$ cells. Each red rectangle centered at $z$ denotes the maximum spatial extent of wavefunctions belonging to $\widetilde{\p}_z^P$. (b) Each blue bubble denotes the maximum spatial extent of the wavefunctions in the set $\widetilde{\p}_{1}^{P-P_A}$, and its translates by an even number of cells. (c) The two sets of functions together form a Wannier basis in a size $2$ supercell representation, with each supercell consisting of one cell each from A and B. }
    \label{fig:1D_alternate_procedure}
\end{figure}
\begin{proof}
The output of Procedure \ref{algo:GS_1d_Wannier} is the set $\widetilde{\p}$, which is a union of two disjoint sets (see equation \eqref{eq:wannier_1d_union}). The first set is a union of the set $\widetilde{\p}_0^P$, and all its even unit cell translates. We will now show that this set is an orthogonal basis of $\h_A^P$.\par
First, we note that the the set $\{ \y^{2z} \ket{\chi} : \ket{\chi} \in \widetilde{\p}_0^P\}$ is an orthogonal basis of $\h_{2z}^P$, since $P$ is translationally invariant. Moreover, for any $z\neq0$, this set is also orthogonal to the set $\widetilde{\p}_0^P$. To see this, we show that \textit{any} orthonormal bases $\widetilde{\p}_{z_1}^P$ and $\widetilde{\p}_{z_2}^P$ for distinct locations $z_1,z_2\in A$ are mutually orthogonal. Let $\ket{\Psi} \in \h_{z_1}^P$ and $\ket{\Phi} \in \h_{z_2}^P$ be two vectors. There exist vectors $\ket{\psi} \in \h^{z_1}$ and $\ket{\phi}\in \h^{z_2}$ such that $\ket{\Psi} = P \ket{\psi}$ and $\ket{\Phi} = P \ket{\phi}$. Taking the inner product of $\ket{\Psi}$ and $\ket{\Phi}$, we obtain
\begin{align*}
\bra{\Phi} \ket{\Psi} &= \bra{\phi} P^\dagger P \ket{\psi} \\
&= \bra{\phi} P \ket{\psi} \\
&= 0,
\end{align*}
since $z_1$ and $z_2$ are located at least two hops away, which is larger than the maximum hopping distance of $P$. Thus, $\h_{z_1}^P$ and $\h_{z_2}^P$ are mutually orthogonal for distinct $z_1,z_2 \in A$. Thus, the set $ \bigcup_{z\in \z} \{ \y^{2z} \ket{\chi}:\ket{\chi}\in \widetilde{\p}_0^P \}$ is an orthogonal basis of $\h_A^P$. Additionally, since $P$ is NN hopping, it consists of compactly supported wavefunctions with a maximum spatial extent of $3$ cells, as shown in figure \ref{fig:1DchainSublattices}. \par 
The second set in equation \eqref{eq:wannier_1d_union} is an orthonormal basis of $\h^P \setminus \h_A^P$. To prove this, we first note that 
\begin{align*}
P' \ket{1,i} &= (P-  P_0 - \y^2 P_0 \y^{\dagger 2}) \ket{1,i} \\
&= \left[ P - \left( \sum_{z\in \z} \y^{2z} P_0 \y^{\dagger 2z} \right) \right] \ket{1,i} \\
&= ( P - P_A ) \ket{1,i}.
\end{align*} 
Hence, the set $\widetilde{\p}_1^{P'}$ is an orthonormal basis of $\h_1^{P-P_A}$. $P-P_A$ remains invariant under translations by an even number of cells. Since $P-P_A$ is also a nearest neighbor projector (using lemma \ref{lemma: Pz left-right}), using the same arguments as for the first set, we conclude that $\bigcup_{z\in \z} \{ \y^{2z} \ket{\chi}:\ket{\chi}\in \widetilde{\p}_1^{P'} \} $ is an orthogonal basis of $\h^P \setminus \h ^P_A$. Additionally, using lemma \ref{lemma: Pz left-right}, we infer that every wavefunction it contains is compactly supported, with non-zero support on only one cell (see figure \ref{fig:1DchainPPrimeConnectivity}). \par 
Thus, $\widetilde{\p}$ is a compactly supported orthogonal basis of $\h^P$. By construction,
\begin{align*}
\y^{2z} \ket{\chi} \in \p \text{ if } \ket{\chi} \in \widetilde{\p},
\end{align*}
for any $z\in\z$. Thus, $\widetilde{\p}$ consists of compactly supported Wannier basis, within a size $2$ supercell representation (figure \ref{fig:1Dchain_superlattice_Wannier}).
\end{proof}

\subsection{Supercell Representation and Strictly Local Projectors} \label{subsec:conversion}

\begin{figure}[ht]
    \centering
    \includegraphics[scale=1]{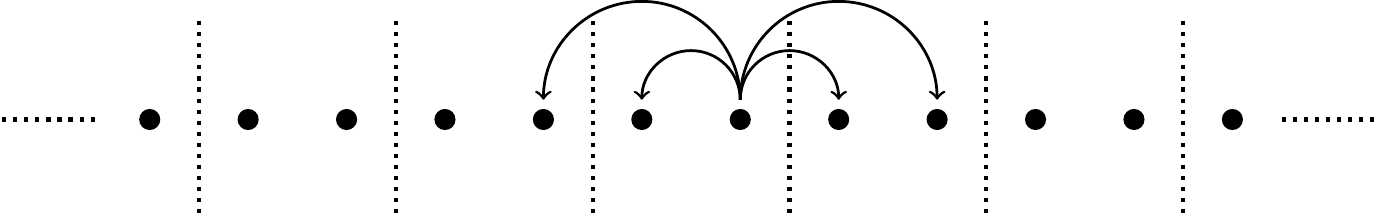}
    \caption{ An example of conversion of a 1d SL operator to an NN operator using a supercell representation. If an operator has a maximum hopping distance of $2$, then grouping the sites in pairs converts the operator to an NN operator in the `supercell' representation.}
    \label{fig:1Dregrouping}
\end{figure}    

As discussed at the beginning of this section, an SL operator on a 1d lattice with a maximum hopping distance $b$ can be expressed as an operator with only nearest neighbor hopping terms using a supercell representation with each supercell consisting of $b$ number of primitive cells (for an illustrative example, see figure \ref{fig:1Dregrouping}). The range of the operator is reduced, at the cost of an increase in the number of orbitals per cell ($n\rightarrow nb$). This transformation enables us to apply the techniques and results for NN projectors ($b=1$) to SL projectors ($b\geq 1$). 

In particular, if we choose the supercell located at the origin to consist of primitive cells $0,1,\dotsc,b-1$, the position and orbital indices in the two representations have the following correspondence: 
\begin{align}
\begin{split}
\text{Primitive cell} &\longleftrightarrow \text{Supercell} \\
\ket{z,i} & \ \equiv \ \  \ket{z \setminus b, n \times (z\bmod b)+i} _s,  
\end{split} \label{eq:Corresp_super_primitive}
\end{align}
with the subscript $s$ denoting a vector in the supercell representation, and $\setminus$ denoting the quotient upon division.  \par

Putting together the results from the previous subsections and the conversion of an SL projector to an NN projector using the supercell representation, we arrive at the following results for arbitrary range SL projectors.
\begin{corollary}
The image of an SL projector on a 1d lattice with a maximum hopping distance $b$ possess a CSOB consisting of wavefunctions of a maximum spatial extent of $3b$ consecutive cells. 
\end{corollary}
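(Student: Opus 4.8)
The plan is to reduce the arbitrary-range case to the nearest-neighbor case already settled by Procedure \ref{algo:Gram Schmidt procedure}, and then read off the size bound through the supercell dictionary \eqref{eq:Corresp_super_primitive}. First I would pass to the supercell representation in which each supercell consists of $b$ consecutive primitive cells, so that the orbital count per cell changes as $n \to nb$. The essential observation is that this regrouping turns $P$ into a nearest-neighbor projector $P_s$: if supercells $z_s$ and $z_s'$ satisfy $|z_s - z_s'| \geq 2$, then the largest primitive-cell index contained in one and the smallest contained in the other differ by at least $b+1 > b$, so the corresponding matrix elements of $P$ vanish by the definition of an SL projector of maximum hopping distance $b$. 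Hence $\bra{z_s,i}_s \, P_s \, \ket{z_s',j}_s = 0$ whenever $|z_s - z_s'| > 1$, i.e. $P_s$ has only on-supercell and nearest-supercell hopping. This is precisely the step where the hypothesis on $b$ is used.

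Next I would observe that the relabeling \eqref{eq:Corresp_super_primitive} is merely a reindexing of the same orthonormal position basis of the same Hilbert space, so it is implemented by a unitary; consequently $P_s$ inherits $P_s^2 = P_s^\dagger = P_s$ and is a genuine orthogonal projector, while inner products, spans, and the property of compact support are all preserved under the correspondence. With $P_s$ established as an NN projector, I can invoke Procedure \ref{algo:Gram Schmidt procedure} together with the lemma immediately following it, applied to $P_s$: this yields a CSOB $\widetilde{\Pi}$ of the image of $P_s$ in which every wavefunction is supported on at most three consecutive supercells.

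Finally I translate the size bound back. Three consecutive supercells correspond, via \eqref{eq:Corresp_super_primitive}, to $3b$ consecutive primitive cells; and since the relabeling preserves orthonormality and the image of the projector, $\widetilde{\Pi}$ is simultaneously a CSOB of the image of $P$ with each wavefunction of spatial extent at most $3b$ consecutive primitive cells, as claimed. The argument is essentially bookkeeping once the NN result is in hand. The only points requiring care, and the only places I anticipate any friction, are the verification that grouping exactly $b$ cells suffices to eliminate all beyond-nearest-neighbor hopping, and the faithful conversion of ``three supercells'' into ``$3b$ primitive cells''; I expect no genuine obstacle beyond these two indexing facts.
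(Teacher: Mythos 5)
Your proposal is correct and takes essentially the same approach as the paper: pass to the size-$b$ supercell representation in which $P$ becomes a nearest-neighbor projector, apply Procedure \ref{algo:Gram Schmidt procedure} and its accompanying lemma to obtain a CSOB supported on at most three consecutive supercells, and translate back through the correspondence \eqref{eq:Corresp_super_primitive} to get the $3b$ primitive-cell bound. The two details you verify explicitly (that supercells separated by two or more are at primitive-cell distance at least $b+1$, and that the relabeling is unitary so projector properties and compact support are preserved) are exactly the bookkeeping the paper leaves implicit.
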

We first create a size $b$ supercell representation where $P$ is a NN projector. Using Procedure \ref{algo:Gram Schmidt procedure}, we obtain a CSOB in this supercell representation. We revert back to the original, or primitive cell representation using the correspondence \eqref{eq:Corresp_super_primitive}. This process is summarized in the following sequence:

\begin{center}
\begin{tabular}{ c @{$\longrightarrow$} c @{$\xrightarrow{\text{Procedure \ref{algo:Gram Schmidt procedure}}}$} c @{$\longrightarrow$} c}

\vtop{\hbox{\strut Primitive cell}\hbox{\strut representation}}& \multicolumn{2}{c @{$\longrightarrow$}}{Supercell representation} & \vtop{\hbox{\strut Primitive cell}\hbox{\strut representation}} \\
SL Projector ($b\geq 1$) \quad & \quad NN Projector ($b=1$) \quad &\quad CSOB (max size $3$) \quad  & \quad CSOB (max size $3b$).

\end{tabular}
\end{center}

Similarly, we can use Procedure \ref{algo:GS_1d_Wannier} for obtaining a Wannier basis for arbitrary range translationally invariant SL projectors.

\begin{corollary} \label{corollary:1d_Wannier}
The image of a translationally invariant strictly local projector with a maximum hopping distance $b$ on a 1d lattice is spanned by a compactly supported Wannier basis within a size $2b$ supercell lattice representation.
\end{corollary}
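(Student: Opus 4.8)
The plan is to reduce the statement to Lemma \ref{thm:Wannier_for_1d} by the same supercell device used in the preceding corollary, the one additional point being that translational invariance must be shown to survive the regrouping. First I would pass to the size $b$ supercell representation of \eqref{eq:Corresp_super_primitive}, in which $P$ becomes a nearest neighbor projector on cells carrying $nb$ orbitals each. The key observation is that invariance is preserved: since $P$ commutes with every primitive translation $\y$ (i.e.\ $P = \y^\dagger P \y$), it commutes in particular with $\y^{b}$, and translation by $b$ primitive cells is precisely translation by one supercell. Hence $P$ is a \emph{translationally invariant} NN projector in the supercell representation, so Procedure \ref{algo:GS_1d_Wannier} is applicable.

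Next I would apply Procedure \ref{algo:GS_1d_Wannier} to this supercell NN projector. By Lemma \ref{thm:Wannier_for_1d}, the output is a compactly supported Wannier basis of the image of $P$ within a size $2$ supercell representation \emph{of the supercell lattice}, i.e.\ obtained by grouping the $nb$-orbital cells in pairs. Grouping in pairs cells that are themselves $b$ primitive cells wide is the same as grouping the primitive cells in blocks of $2b$. Composing the two regroupings, the basis is thus a Wannier basis in a size $2b$ supercell representation of the original primitive lattice; concretely, the full basis is generated from the basis vectors supported within a single size $2b$ supercell by the translates $\y^{2bz}$, $z\in\z$.

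Finally I would revert to primitive cell labels through \eqref{eq:Corresp_super_primitive}. Because this correspondence is a fixed orbital relabelling that never mixes distinct supercells, it preserves the orthonormality and the compact support of each basis vector, as well as the closure of the basis under the even-supercell translates, which is exactly the assertion that we have a compactly supported Wannier basis in the size $2b$ supercell representation. The only step requiring genuine care, and the one I would regard as the crux, is the invariance-preservation argument of the first paragraph; everything else is bookkeeping verifying that the two levels of grouping compose to give the stated factor of $2b$.
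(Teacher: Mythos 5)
Your proposal is correct and takes essentially the same approach as the paper: pass to the size $b$ supercell representation \eqref{eq:Corresp_super_primitive} so that $P$ becomes a nearest neighbor projector, apply Procedure \ref{algo:GS_1d_Wannier} (justified by Lemma \ref{thm:Wannier_for_1d}), and compose the two regroupings to obtain the size $2b$ supercell Wannier basis. Your explicit check that translational invariance survives the regrouping (since $P=\y^{\dagger}P\y$ implies $P=\y^{\dagger b}P\y^{b}$, and $\y^{b}$ is the unit supercell translation) is a detail the paper leaves implicit, but it is the same argument, not a different route.
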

The procedure for obtaining such a basis is summarized in the following sequence:
\begin{center}
\begin{tabular}{c c c c c}
Primitive cell  & $\xrightarrow{\text{Size b}}$ & Size $b$ supercell  & $\xrightarrow{\text{Size 2}}$ & Size $2b$ supercell \\
representation & &  representation & & representation \\
SL Projector ($b\geq 1$) & $\longrightarrow$ &  NN Projector ($b=1$) &  $\xrightarrow{\text{Procedure \ref{algo:GS_1d_Wannier}}}$ & CS Wannier basis.
\end{tabular}
\end{center}
To summarize, we have shown that the image of a strictly local projector in 1d is always spanned by a compactly supported orthogonal basis (or  a compactly supported Wannier basis if the projector has lattice translational invariance). This completes our proof of theorem \ref{thm:1d}.

Having presented a technique for the construction of CS Wannier functions for 1d SL projectors, we apply this technique to an example Hamiltonian with a band associated with an SL projector in Appendix~\ref{subsec:1d_example}. Furthermore, we discuss why the Gram-Schmidt orthogonalization procedure in our algorithm is easier to use instead of the symmetric orthogonalization procedure~\cite{lowdin1950non}. We also compare our results with those obtained using the maximally localized Wannier functions procedure~\cite{marzari1997maximally}.

In the next section we will study how these results can be extended to higher dimensional lattices which have a larger coordination number. We close this section with some comments on the Bethe lattice which is sometimes regarded as an infinite dimensional lattice. For an NN projector on the Bethe lattice with an arbitrary coordination number,  using the methods of Lemma \ref{lemma: Pz left-right}, we can show that projecting out a site results in a reduced connectivity for the projector. Consequently, analogous to Lemma \ref{thm:Wannier_for_1d} and Procedure \ref{algo:Gram Schmidt procedure}, using an arbitrary sequence of site locations guarantees the creation of an orthonormal basis which is compactly supported.

\section{Higher dimensional lattices} \label{sec:higher_dimensions}

In 1d, we were able to show that for an arbitrary basis of wavefunctions, the existence of a compactly supported orthogonal basis (CSOB) equals the strict locality of the associated projector. However, the results for the 1d case do not all carry over to higher dimensional lattices. While strict locality of the projector is a necessary condition for the existence of a CSOB even in higher dimensions (see section~\ref{subsec:necessary}), the methods we have employed so far for the 1d case do not prove that it is a sufficient one. Our proof for the existence of a CSOB given any 1d SL projector relied on the fact that any 1d SL projector can be represented as a nearest-neighbor (NN) projector in a supercell representation, or equivalently, a block tridiagonal matrix in the orbital basis. In higher dimensions, such a simple matrix representation for even the simplest non-trivial SL projector, i.e. an NN projector is lacking. Generic SL or NN projectors in $d>1$ cannot be represented by block tridiagonal matrices, or even band diagonal matrices. This makes the task of  identifying the properties of a projector that are equivalent to the existence of a CSOB difficult.

Consequently, we identify a condition more stringent than strict locality of the projector as a sufficient condition for the existence of a CSOB. While NN projectors in $d>1$ cannot in general be represented as block tridiagonal matrices, it is still possible to show that the image of any NN projector is spanned by a CSOB. However, unlike in 1d, higher dimensional SL projectors cannot in general be expressed as NN projectors using a supercell transformation. Consequently, we can extend the results for NN projectors only to those SL projectors that can be brought to an NN form using a supercell representation. We call such projectors NN-reducible projectors. (For a discussion of the condition of being NN-reducible, we refer the reader to section \ref{subsec:NN_reducible}.) Due to these considerations, unlike in 1d, we obtain separate necessary and sufficient conditions for the existence of a CSOB spanning a subspace. Since the necessary condition has already been proved in section~\ref{subsec:necessary}, in this section, we focus on proving the sufficient condition, which is that a projector should be NN hopping, or NN-reducible.

Additionally, as mentioned in the introduction, in higher dimensional lattices it is possible to construct `hybrid Wannier functions' that are compactly supported along just one dimension. We show that their existence hinges on the the strict locality of projectors just like the existence of compactly supported wavefunctions in one dimension.

For all these cases, we provide algorithms for obtaining CSOBs and CS hybrid Wannier functions, and provide upper bounds on their sizes. In summary, we will prove the following statements.

\begin{theorem} \label{thm:thm2}
 Let $P$ be a strictly local projector with a maximum hopping distance $b$, acting on a $d>1$ dimensional tight binding lattice.
 \begin{enumerate}[(1)]
 \item If $P$ is NN reducible, there exists a compactly supported orthonormal basis spanning its image, with each basis vector having a size of at most $3b\times \dotsc \times 3b$ cells.
 \item If $P$ is translationally invariant and NN reducible, there exists a compactly supported Wannier basis spanning its image, in a size $2b\times \dotsc \times 2b$ supercell lattice representation.
 \item If $P$ is translationally invariant, its image is spanned by hybrid Wannier functions which have compact support along the localized (i.e. Wannier-like) dimension, in a size $2b$ supercell representation of the lattice. The supercell transformation is required only along the localized dimension, which may be chosen to be any of the $d$ dimensions.
 Moreover, if $P$ is strictly local along any one direction with a maximum hopping distance $b$, (with no restrictions on the localization along the other directions), then strictly local hybrid Wannier functions corresponding to a size $2b$ supercell which are localized along that direction can be formed.
 \end{enumerate}
 \end{theorem}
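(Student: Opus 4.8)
The plan is to bootstrap everything from the one-dimensional construction of Section~\ref{sec:COB for 1d} by collapsing directions into internal (orbital) degrees of freedom, so that the relevant object becomes a one-dimensional strictly local projector to which Theorem~\ref{thm:1d}, Lemma~\ref{lemma: Pz left-right}, and Procedures~\ref{algo:Gram Schmidt procedure}--\ref{algo:GS_1d_Wannier} apply directly. I would begin with part~(3), which is the most immediate. Fix the intended localization axis, say the first one, and regard the $d$-dimensional lattice as a one-dimensional chain along that axis whose ``cell'' at coordinate $z$ is the entire hyperplane of sites with first coordinate $z$; all transverse structure is absorbed into a (large, or for a translationally invariant system Fourier-decomposable) internal Hilbert space. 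Since $P$ is strictly local along the chosen axis with maximum hopping distance $b$, in this collapsed picture it is a one-dimensional strictly local projector of range $b$, and the only ingredients used by Lemma~\ref{lemma: Pz left-right} and the ensuing procedures---namely $P^2=P$, positive semidefiniteness, and strict locality \emph{along the chain}---survive verbatim. Running the one-dimensional construction then yields an orthonormal basis whose members are compactly supported along the chosen axis (within a size-$2b$ supercell via Procedure~\ref{algo:GS_1d_Wannier}, or size $3b$ via Procedure~\ref{algo:Gram Schmidt procedure}) while remaining unconstrained transverse to it; these are exactly compactly supported hybrid Wannier functions. For the translationally invariant statement I would Fourier transform transverse to the axis, apply the size-$2$ supercell construction (Corollary~\ref{corollary:1d_Wannier}) at each transverse momentum, and keep the outputs Bloch-labelled transverse to the axis; the ``Moreover'' clause, which assumes strict locality only along the single chosen direction, is the same argument run on the collapsed chain (finite lattice) without invoking transverse translational invariance.

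For parts~(1) and~(2) I would first use NN-reducibility to pass, via a change of primitive vectors and a size-$b$ supercell enlargement, to an honest nearest-neighbour projector; all spatial bounds are then rescaled by $b$ at the end, which is what produces the $3b\times\dots\times3b$ box in part~(1) and the $2b\times\dots\times2b$ supercell in part~(2). The remaining claim is that any $d$-dimensional NN projector admits a CSOB (a compactly supported Wannier basis in the translationally invariant case), which I would attack by induction on $d$, with base case $d=1$ supplied by Theorem~\ref{thm:1d}. The inductive step processes one axis exactly as in part~(3) to obtain functions that are compactly supported (three slices, or a size-$2$ supercell) along that axis, and then localizes each resulting family in the remaining $d-1$ directions by exhibiting it as the image of an NN projector in one lower dimension, to which the inductive hypothesis applies. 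Tracking the extent contributed at each of the $d$ stages (size $3$ per axis for the orthonormal basis, a size-$2$ supercell per axis for the Wannier basis) then gives the stated bounds.

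The hard part is the inductive step, and precisely the fact that locality transverse to the processed axis is \emph{not} automatically preserved. Unlike in one dimension, projecting out a single cell of a $d$-dimensional NN projector can generate range-$2$ matrix elements: the telescoping identity~\eqref{eq:1D_sum_prod_is_zero} that rendered $P-P_z$ again nearest-neighbour in 1d fails because two cells at transverse distance $2$ now possess \emph{several} common neighbours rather than a single intermediate cell, so the orthogonal projector onto a subspace generated by NN-local vectors is generically non-local transverse to the axis. This is exactly the obstruction separating a general strictly local projector from an NN-reducible one, and it is why mere strict locality is not claimed to suffice for $d>1$. The technical core is therefore a higher-dimensional analogue of Lemma~\ref{lemma: Pz left-right}: I would need to show that, for an NN (NN-reducible) projector processed with an appropriately chosen ordering of hyperplanes, the reduced and restricted projectors arising at each stage remain NN-reducible in the surviving directions, so that the recursion closes. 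Establishing this locality-preservation statement---rather than any of the Gram--Schmidt bookkeeping---is where the real work lies, and it is what ties the existence of a full CSOB to the NN-reducible structure assumed in the hypothesis.
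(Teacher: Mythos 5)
Your treatment of part (3) is essentially the paper's own argument and is sound: Fourier transform transverse to the chosen axis, observe that $P(\vec{k}_\perp)$ as defined in \eqref{eq:P_k_perp} is a one-dimensional strictly local projector of range $b$ for each fixed $\vec{k}_\perp$, and apply Corollary \ref{corollary:1d_Wannier} momentum-by-momentum; this is exactly Procedure \ref{algo:hybrid_wannier}. Two caveats. First, the ``Moreover'' clause in the paper retains translational invariance (hybrid Wannier functions are by construction Bloch-labelled transverse to the axis) and relaxes only the locality hypothesis; your proposed variant ``without invoking transverse translational invariance'' would yield an orthogonal basis compactly supported along one axis, not hybrid Wannier functions. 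Second, your collapsed-chain picture, in which the entire transverse hyperplane becomes an internal space, pushes you outside the setting in which Theorem \ref{thm:1d} is actually proved (finitely many orbitals per cell, diagonalization of $P_{zz}$, etc.) unless the transverse lattice is finite; the Fourier decomposition avoids this because each $P(\vec{k}_\perp)$ has only $n$ orbitals per cell.

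For parts (1) and (2) there is a genuine gap, and you name it yourself: your induction on dimension requires that the reduced projectors remain local in the surviving directions after one axis is processed, and you never establish this ``locality-preservation'' step. The paper does not use induction at all, and its key idea — which your proposal is missing — is to project out an entire checkerboard sublattice \emph{simultaneously}. With $A$ and $B$ as in \eqref{eq:A and B higher dim}, any two distinct $A$ cells are at least two hops apart, so $P_A = \sum_{\vec{r}\in A} P_{\vec{r}}$ is an orthogonal projector, and Lemma \ref{lemma: PA higher dims} shows that $P-P_A$ is purely \emph{on-site} on $B$. This neutralizes precisely the ``several common neighbours'' obstruction you diagnose: all common neighbours of two $B$ cells two hops apart lie in $A$, so subtracting all of $P_A$ brings in the \emph{complete} common-neighbour sum, and by \eqref{eq:higher_D_sum_hops_1} every contributing denominator $\bra{\vec{w},\gamma}P\ket{\vec{w},\gamma}$ equals $1-\bra{\vec{r}_1,\alpha}P\ket{\vec{r}_1,\alpha}$, so the weighted sum is proportional to the left-hand side of \eqref{eq:higher_D_sum_prod_is_zero} and vanishes — the identity you declared to fail in $d>1$ survives in summed form (it is only the single-cell reduction $P-P_{\vec{r}}$ that generates range-$2$ terms, which is why a cell-by-cell sequence breaks down but the sublattice-at-once step does not). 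The construction then closes in two stages with no recursion (Procedure \ref{algo:GS_higher_d}): orthonormalize the families at $\vec{r}\in A$, which are automatically mutually orthogonal, then the on-site families of $P-P_A$ at $\vec{r}\in B$, giving extent at most $3\times\dotsb\times 3$; the translated variant (Procedure \ref{algo:GS_higher_d_Wannier}) gives the $2\times\dotsb\times 2$ supercell Wannier basis. Your reduction of NN-reducible projectors to the NN case via a size-$b$ supercell and the factor-$b$ bookkeeping agree with section \ref{subsec:NN_reducible}, but without the sublattice lemma the core of parts (1) and (2) remains unproven in your proposal.
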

In addition, we will discuss the topological properties of such projectors in this section. 
 
Before proving these statements, we first discuss some model Hamiltonians with CS Wannier functions. In figure \ref{fig:2d_fb_lattices}, we show two examples from the literature of 2d flat-band Hamiltonians, that have flat bands spanned by orthogonal CLSs, i.e. flat-band CS Wannier functions. These Hamiltonians are based on the square kagome lattice~\cite{Siddharthan2001, Kuno2020} with six sites per unit cell and the frustrated bilayer~\cite{Richter2006, Derzhko2015} with two sites per unit cell. In both cases the natural choices of the primitive cells are such that each CLS lies entirely within a unit cell. While CS Wannier functions can be easily constructed for the flat bands in these examples, for many flat-band Hamiltonians, orthogonal CLSs spanning a flat band do not exist. However, as discussed in section \ref{subsec:construction_nonortho}, it is possible to modify such Hamiltonians by enlargement of the unit cell, so that orthogonal CLSs span an entire flat band. This approach was used for example in reference~\cite{Kuno2020}, to construct two Hamiltonians on the kagome lattice wherein a flat band is spanned by CS Wannier functions. One can apply the same technique to other flat-band models from the literature, such as the decorated square lattice~\cite{tasaki1992ferromagnetism} and the dice lattice~\cite{Sutherland1986}.
 
In all of these examples, the CS Wannier functions are associated with flat bands, and are localized within one cell each. Consequently, the corresponding band projector is on-site hopping, and hence is independent of $k$ in a k-space representation. One can simply diagonalize the projector in order to obtain CS Wannier functions in such cases. However, there are many Hamiltonians with a band spanned by CS Wannier functions that are spread across multiple cells, and with band projectors that are strictly local, but not on-site hopping. While one can diagonalize a translationally invariant NN projector expressed in k-space, in general a Fourier transform of the obtained Bloch wavefunction results in exponentially localized Wannier functions as opposed to CS Wannier functions (for an example, see Appendix \ref{subsec:1d_example}). Similarly, as discussed in section \ref{subsec:construction_nonortho}, it may be possible to construct CS Wannier functions using destructive interference arguments for translationally invariant NN projectors if they have a simple form. However, it is not straightforward to do so for a more complicated projector, such the following 2d NN projector:
\begin{align}
P(k) = \frac{1}{12}\left(
\begin{smallmatrix}
 e^{-i k_y} \sqrt{3}+e^{i k_y} \sqrt{3}+6 & -e^{-i k_y} \sqrt{3}+e^{i k_y} \sqrt{3}+2 e^{i k_x}-2 & e^{-i k_y} \sqrt{3}+2 e^{i k_x}+1 & -e^{i k_y} \sqrt{3}+2 e^{i k_x}+1 \\
 e^{-i k_y} \sqrt{3}-e^{i k_y} \sqrt{3}+2 e^{-i k_x}-2 & -e^{-i k_y} \sqrt{3}-e^{i k_y} \sqrt{3}+6 & e^{-i k_y} \sqrt{3}-2 e^{-i k_x}-1 & e^{i k_y} \sqrt{3}+2 e^{-i k_x}+1 \\
 e^{i k_y} \sqrt{3}+2 e^{-i k_x}+1 & e^{i k_y} \sqrt{3}-2 e^{i k_x}-1 & 7-2 e^{-i k_x}-2 e^{i k_x} & e^{-i k_x} \left(-e^{i (k_x+k_y)} \sqrt{3}-2 e^{2 i k_x}+2\right) \\
 -e^{-i k_y} \sqrt{3}+2 e^{-i k_x}+1 & e^{-i k_y} \sqrt{3}+2 e^{i k_x}+1 & -e^{-i k_y} \sqrt{3}-2 e^{-i k_x}+2 e^{i k_x} & 5+2 e^{-i k_x}+2 e^{i k_x} \\
\end{smallmatrix}\right). \label{eq:2d_NN_P_example}
\end{align}
In contrast, the method presented in this section enables us to construct CS Wannier functions since this projector is NN hopping. The Wannier functions (centered at ($x,y$)) so obtained are $\ket{W_1} = \frac{1}{\sqrt{6}} \ket{x,y} \otimes (\ket B + \ket C + \ket D) + \frac{1}{\sqrt{6}} \ket{x+1,y} \otimes (\ket A - \ket C +  \ket D)$ and $\ket{W_2} = \frac{1}{2} \ket{x,y} \otimes (\ket A  - \ket B + \ket C) + \frac{1}{2\sqrt{3}} \ket{x,y-1} \otimes (\ket A + \ket B - \ket D)$, where the orbitals are labelled by letters A to D.

\begin{figure}[t]
    \centering
    \begin{subfigure}[t]{0.25\textwidth}
        \centering
        \includegraphics[width=\textwidth]{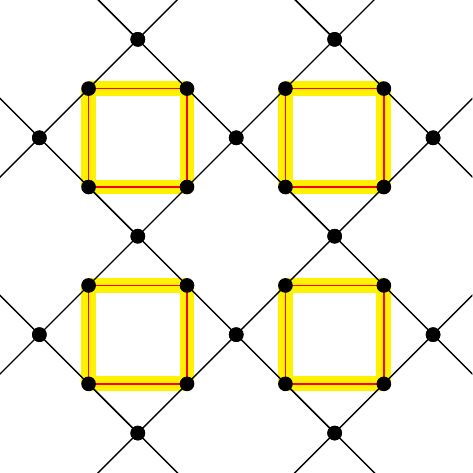}
        \caption{}
        \label{fig:Square_kagome_lattice}
    \end{subfigure}%
     \vspace{10pt}
    \begin{subfigure}[t]{0.3\textwidth}
        \centering
        \includegraphics[width=\textwidth]{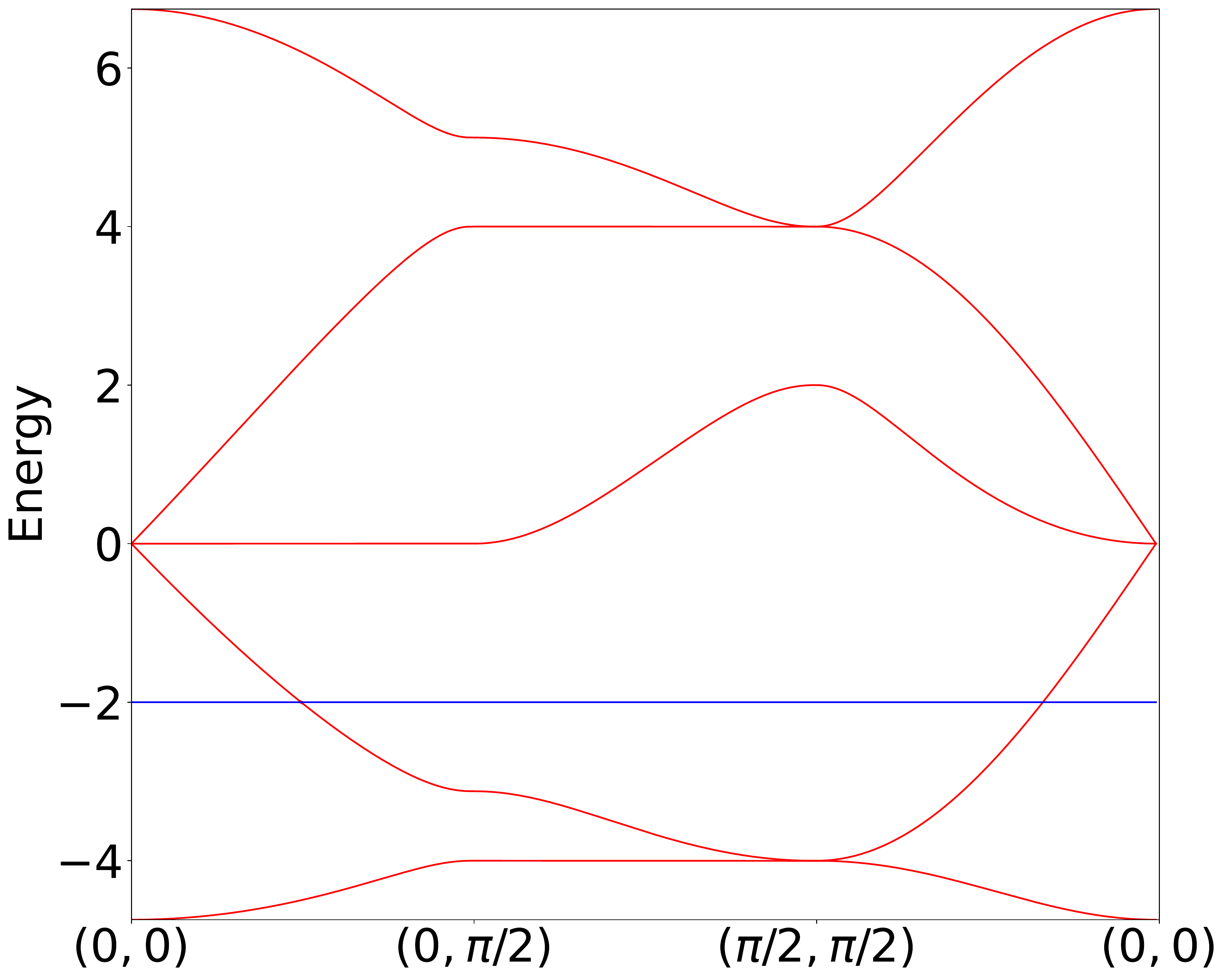}
        \caption{}
        \label{fig:square_kagome_bands}
    \end{subfigure}
  	
  	\begin{subfigure}[t]{0.3\textwidth}
        \centering
        \includegraphics[width=\textwidth]{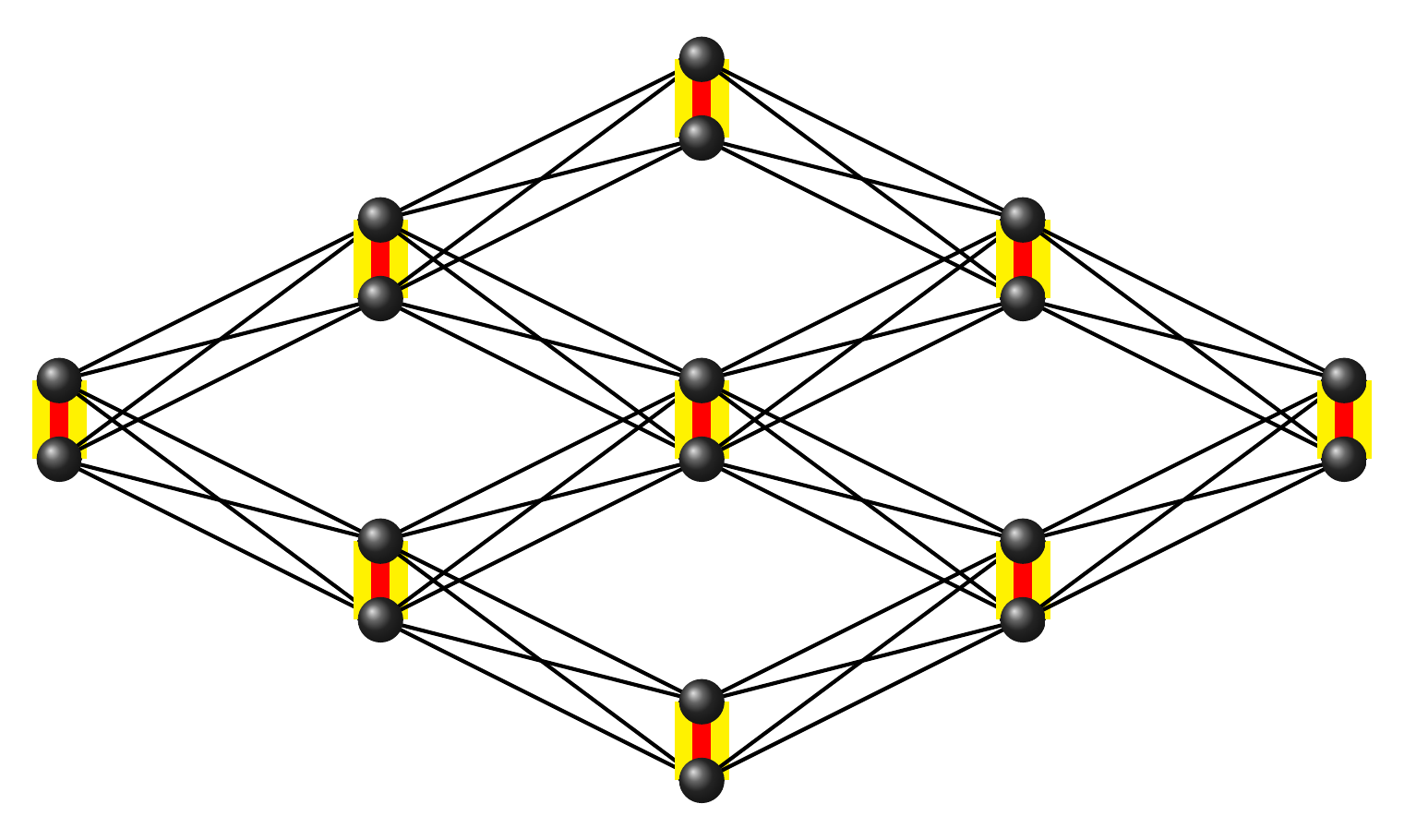}
        \caption{}
        \label{fig:Frust_bilayer}
    \end{subfigure}
 	\vspace{10pt}
    \begin{subfigure}[t]{0.3\textwidth}
        \centering
        \includegraphics[width=\textwidth]{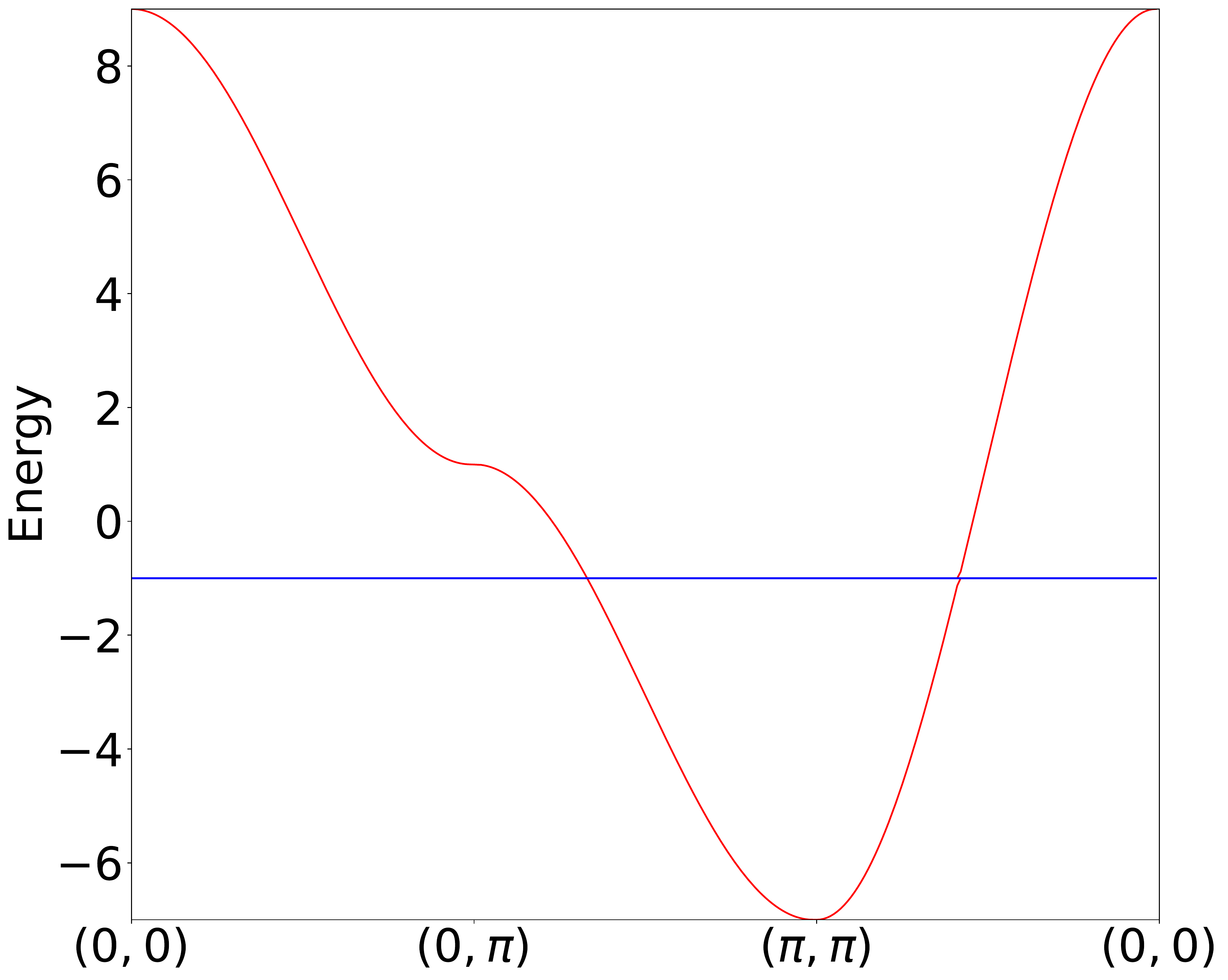}
        \caption{}
        \label{fig:frust_bilayer_line}
    \end{subfigure}
    \caption{(a) The square kagome lattice and (c) the frustrated bilayer lattice. In both cases, the black and red segments denote hopping elements with values $t_1$ and $t_2$ respectively. The flat band CLSs, i.e. CS Wannier functions are highlighted in yellow. In the case of the square kagome lattice, each CLS has support on four sites and an amplitude of $\frac{1}{2}$ with alternating signs on the four sites. For the frustrated bilayer lattice, each CLSs has an amplitude of $+1$ and $-1$ on the two sites where it is located. The band structures of the square kagome lattice with $(t_1,t_2)=(2,1)$, and of the frustrated bilayer with $(t_1,t_2) = (0,1)$ are shown in figures (b) and (d). The flat band is colored blue.}
    \label{fig:2d_fb_lattices}
\end{figure}

Having discussed some examples, we proceed to the proof of theorem \ref{thm:thm2}, which we split across the following subsections. Since the problem of finding hybrid Wannier functions can be reduced to a one dimensional problem, we start with the proof of point (3) of Theorem \ref{thm:thm2} in subsection \ref{subsec:hybrid_Wannier}. In subsections \ref{subsec:NN higher d} and \ref{subsec:NN higher d trans inv} we present procedures for obtaining CSOBs for NN projectors. In subsection \ref{subsec:NN_reducible}, we discuss how to determine whether a projector is NN-reducible, and show how a supercell representation can be used to extend the results for NN projectors to the more general class of NN-reducible projectors. In subsection \ref{subsec:top_triv}, we discuss the topological properties of SL projectors as well as projectors associated with CSOBs and CS hybrid Wannier functions.

\subsection{Hybrid Wannier Functions for Strictly Local Projectors} \label{subsec:hybrid_Wannier}

\begin{algorithm}[h]
    \caption{Compactly Supported Hybrid Wannier functions for Strictly Local Projectors} \label{algo:hybrid_wannier}
    \SetKwInOut{Input}{Input}
    \SetKwInOut{Output}{Output}
    \Input{A translationally invariant SL projection operator $P$ operating on a $d$ dimensional lattice}
    \emph{Procedure:} To obtain hybrid Wannier functions which are Wannier-like along the $d^{th}$ dimension, for every value of $\vec{k}_\perp$ in the $d-1$ dimensional B.Z.:
    \begin{enumerate}
    \item Obtain the 1d projector $P(\vec{k}_\perp)$ using expression \eqref{eq:P_k_perp}.
    \item Use a size $b$ supercell representation to express $P(\vec{k}_\perp)$ as a 1d NN projector.
    \item Following Procedure \ref{algo:GS_1d_Wannier}, obtain a CW basis for the image of $P(\vec{k}_{\perp})$. Let $\widetilde{\p}_{\perp}^{\vec{k}_\perp}$ denote this basis.
    \item Obtain the set $\widetilde{\p}^{\vec{k}_\perp} \coloneqq \{ \vec{k}_\perp \otimes \ket{\psi} : \ket{\psi} \in \widetilde{\p}_{\perp}^{\vec{k}_\perp}\}$.
    \end{enumerate}
    Obtain the set 
    \begin{align*}
    \widetilde{\p} \coloneqq \bigcup_{\vec{k}_\perp}^{B.Z._{d-1}} \widetilde{\p}^{\vec{k}_\perp}.
    \end{align*}
    
    \Output{The set $\widetilde{\p}$ consisting of hybrid Wannier functions within a size $1\times \dotsc\times 1 \times 2b$ supercell representation, which are compactly supported and Wannier-like along the $d^{th}$ dimension.}    
\end{algorithm}

As discussed in the introduction, hybrid Wannier functions are a variant of Wannier functions for $d>1$ dimensional translationally-invariant systems. Hybrid Wannier functions can be obtained by taking the inverse Fourier transform of the Bloch wavefunctions along exactly one dimension. Such wavefunctions can be chosen to be localized and Wannier-like along one dimension, and Bloch wave-like and delocalized along the other dimensions. Starting with a real space representation of a translationally invariant SL projector, we outline a procedure for obtaining such a basis, so that it is compact localized along the localized dimension. We will use the convention \eqref{eq:Hilbert is tensored} of expressing the Hilbert space as a tensor product,  $ \z^{\otimes d } \otimes \h $,  throughout this section, where $\h$ represents the space of orbitals and spin. 

First, we revisit the k-space, or Fourier space representation for a projector. As is customary, we consider finite periodic lattices of size $L_1\times \dotsc \times L_d$, so that the number of sites $N=L_1\dotsc L_d$. For infinite lattices, we take the limit of the lengths going to infinity. Without loss of generality, we choose the direct lattice to be a hyper-cube so that the Brillouin Zone (B.Z.) consists of reciprocal lattice vectors $\vec{k}$ satisfying $\vec{k}.\vec{R} \in 2\pi \z$ for any $\vec{R} \in \z^d $ such that $k_i \in (-\pi,\pi]\ \forall \ i\in \{ 1,\dotsc,d\}$. An orthogonal band projection operator $P$ can be expressed as 
\begin{align*}
P &= \sum_{\vec{k} \in B.Z.} \ket{\smash{\vec{k}}} \bra{\smash{\vec{k}}} \otimes P(\vec{k}), \\
\text{where } \ket{\smash{\vec{k}}} &\coloneqq \frac{1}{\sqrt{N}} \sum_{\vec{r}\in \z^d} e^{-i\vec{k}.\vec{r}} \ket{\vec{r}}, \\
\text{and } P(\vec{k}) &\coloneqq  \bra{\smash{\vec{k}}} P \ket{\smash{\vec{k}}}.
\end{align*}
 We have suppressed all orbital quantum numbers in the expressions above to aid readability. Each $P(\vec{k})$ is an $n \times n$ matrix function of $\vec{k}$. Since $P$ is idempotent, $P(\vec{k}) P(\vec{k}') = \delta_{\vec{k},\vec{k}'} P(\vec{k})$, i.e. all the $P(\vec{k})$'s for distinct $\vec{k}$'s are mutually orthogonal projection operators.
 
In order to obtain hybrid Wannier functions which are localized along the $m^{th}$ dimension, we express $P$ in the Fourier space corresponding to all spatial dimensions, except for the $m^{th}$ dimension. Here, we only discuss the case with $m=d$; the rest can be obtained by simple modifications. Let $B.Z._{d-1}$ denote the Brillouin zone in $d-1$ dimensions. Denoting the spatial position along the $d^{th}$ dimension by $z$, we obtain
\begin{align}
P &= \sum_{\vec{k}_\perp \in B.Z._{d-1}}  \ket{\smash{\vec{k}_\perp}}\bra{\smash{\vec{k}_\perp}} \otimes P(\vec{k}_\perp), \label{eq:P split p perp}
\end{align}
with $\vec{k}_\perp \equiv (k_1,\dotsc,k_{d-1})$, and $P(\vec{k}_\perp)$ being an orthogonal projection operator given by 
\begin{align}
P(\vec{k}_\perp) \coloneqq \sum_{z,z' \in \z}  \ket{z}\bra{z'} \otimes \bra{\smash{\vec{k}_\perp},z} P \ket{\smash{\vec{k}_\perp},z'}.  \label{eq:P_k_perp}
\end{align}
Since $P(\vec{k}_\perp) P(\vec{k}_\perp ') = \delta_{\vec{k}_\perp, \vec{k}_\perp ' } P(\vec{k}_\perp)$ , equation \eqref{eq:P split p perp} implies that the task of obtaining an orthogonal basis for the image of $P$ can be split into the task of obtaining orthogonal bases for each $P(\vec{k}_\perp)$ individually. Since $P$ is strictly local, each $P(\vec{k}_\perp)$ can be thought of as being a one dimensional strictly local projector acting on a lattice with positions $z\in \z$. Corollary \ref{corollary:1d_Wannier} guarantees that each $P(\vec{k}_\perp)$ must have a compactly supported Wannier basis in a size $2b$ supercell representation. This leads us to a procedure of obtaining hybrid Wannier functions which are compact localized along any chosen dimension (see procedure \ref{algo:hybrid_wannier}).

While in our considerations so far, we have considered an SL projector, if we were to consider a projector that was strictly local along any one direction without the requirement that it be strictly local along any of the other directions, it follows from the arguments above that hybrid Wannier functions which are localized in one direction can still be constructed. 
This completes our proof for part one of Theorem \ref{thm:thm2}.

\subsection{Nearest Neighbor Projectors} \label{subsec:NN higher d}
We now construct a CSOB for nearest-neighbour projectors in $d$ dimensions. Similar to the 1d case, the Gram-Schmidt orthogonalization must be carried out in a sequence which guarantees that after any step, the reduced projector remains NN hopping or on-site hopping, if the original projector is NN hopping. We observe that unlike in the 1d case, this puts restrictions on the orthogonalization sequence in higher dimensions. Although there are multiple possible types of sequences which ensure this condition is satisfied, for concreteness, here, we present a specific one (in Procedure \ref{algo:GS_higher_d}), which also be used to obtain a CW basis for translationally invariant NN projectors as well. \par 

This procedure relies on dividing the lattice into two sets, $A$ and $B$ consisting of alternating cells similar to what was done in section \ref{subsec:Wannier_1d}. For concreteness, we choose $A$ and $B$ to be given by
\begin{align}
\begin{split}
A &= \{ \vec{r} : \vec{r} \in \z^d; \sum_{i=1}^d r_i \in 2\z\}; \\
B &= \{ \vec{r} : \vec{r} \in \z^d; \sum_{i=1}^d r_i \in 2\z + 1\} .
\end{split}   \label{eq:A and B higher dim}
\end{align}
Since any two distinct cells $\vec{r}_1,\vec{r}_2 \in A$ are separated by at least two hops, $P_{\vec{r}_1} P_{\vec{r}_2} = 0$, and hence $P_A \coloneqq \sum _{\vec{r}\in A} P_{\vec{r}}$ is an orthogonal projector.

\begin{lemma} \label{lemma: PA higher dims}
The orthogonal projector $P-P_A$ satisfies $\bra{\vec{r}_1,i} (P-P_A) \ket{\vec{r}_2,j}=0$ for all $i,j\in \{1,\dotsc,n\}$, unless $\vec{r}_1 = \vec{r}_2 \in B$.
\end{lemma}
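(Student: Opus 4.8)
The plan is to prove Lemma~\ref{lemma: PA higher dims} as the sublattice analogue of Lemma~\ref{lemma: Pz left-right}: instead of projecting out a single cell, I project out the entire even sublattice $A$ in one step and show that this collapses the connectivity of $P$ down to the intra-cell blocks sitting on $B$. I would split the claim into two independent statements: (a) $(P-P_A)\ket{\vec r,j}=0$ for every $\vec r\in A$, which by Hermiticity already kills every matrix element in which \emph{either} index lies in $A$; and (b) $\bra{\vec r_1,i}(P-P_A)\ket{\vec r_2,j}=0$ for distinct $\vec r_1,\vec r_2\in B$. Throughout I would first rotate to the diagonal on-site basis via the global unitary $U=\bigoplus_{\vec r}U_{\vec r}$ (the higher-dimensional version of \eqref{eq:diagonal_basis_1d}), so that every on-site block $\bra{\vec r,\alpha}P\ket{\vec r,\beta}$ is diagonal, and I would repeatedly exploit that the checkerboard lattice is bipartite: two cells of the same sublattice are separated by at least two hops, and there are no triangles.

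For (a) I would note that the single-cell identity $P_{\vec r}\ket{\vec r,j}=P\ket{\vec r,j}$ (eq.~\eqref{eq:Pz z no connections}, which holds verbatim here once the on-site block is diagonal) handles the $\vec s=\vec r$ term in $P_A\ket{\vec r,j}=\sum_{\vec s\in A}P_{\vec s}\ket{\vec r,j}$. For $\vec s\neq\vec r$ in $A$, every generator $P\ket{\vec s,k}$ of $\h^P_{\vec s}$ satisfies $\bra{\vec s,k}P\ket{\vec r,j}=0$ because $\vec s$ and $\vec r$ are at least two hops apart and $P$ is NN; hence $P_{\vec s}\ket{\vec r,j}=0$. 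Therefore $P_A\ket{\vec r,j}=P\ket{\vec r,j}$, giving (a).

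The real work is (b). Fixing a neighbour $\vec s\in A$ of $\vec r_1$ and expanding $P_{\vec s}=\sum_k P\ket{\vec s,k}\bra{\vec s,k}P/\bra{\vec s,k}P\ket{\vec s,k}$ in the diagonal basis, I would first establish the generalization of \eqref{eq:1D_sum_of_hoppings_is_1}: inserting $P^2=P$ and using that $\vec r_1,\vec s$ are neighbours with no common neighbour (no triangles), one finds $\bra{\vec r_1,i}P\ket{\vec s,k}\neq0\Rightarrow \bra{\vec r_1,i}P\ket{\vec r_1,i}+\bra{\vec s,k}P\ket{\vec s,k}=1$, so the denominator is constant across all surviving $k$. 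Pulling it out and summing the numerator over all $k$ reconstitutes the cell-$\vec s$ identity operator, giving
\begin{align*}
\bra{\vec r_1,i}P_A\ket{\vec r_2,j}=\frac{1}{1-\bra{\vec r_1,i}P\ket{\vec r_1,i}}\,\bra{\vec r_1,i}P\Big(\textstyle\sum_{\vec s\in A}\mathbb{1}_{\vec s}\Big)P\ket{\vec r_2,j},
\end{align*}
where $\mathbb{1}_{\vec s}=\sum_k\ket{\vec s,k}\bra{\vec s,k}$. I would then use completeness, $\sum_{\vec s\in A}\mathbb{1}_{\vec s}=\mathbb{1}-\sum_{\vec u\in B}\mathbb{1}_{\vec u}$, to write $P\big(\sum_{A}\mathbb{1}_{\vec s}\big)P=P-P\big(\sum_{B}\mathbb{1}_{\vec u}\big)P$. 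The first term vanishes since $\vec r_1,\vec r_2\in B$ are at least two hops apart and $P$ is NN; the second vanishes because any intermediate $\vec u\in B$ with $\bra{\vec r_1,i}P\ket{\vec u,l}\neq0$ must equal $\vec r_1$ (same sublattice and within one hop forces distance zero), while $\bra{\vec u,l}P\ket{\vec r_2,j}\neq0$ forces $\vec u=\vec r_2$, contradicting $\vec r_1\neq\vec r_2$. Hence (b) holds, and together with (a) this covers every pair of indices except $\vec r_1=\vec r_2\in B$, which is the statement of the lemma.

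The main obstacle, and the place where the argument genuinely uses more than the 1d lemma, is exactly this $P_A$ computation in (b): in $d>1$ two distinct $B$-cells at distance two can share \emph{two} common $A$-neighbours, so one cannot argue term-by-term in $\vec s$ as in 1d and must instead resum over the whole sublattice and invoke idempotency together with bipartiteness. I would also dispatch the degenerate case $\bra{\vec r_1,i}P\ket{\vec r_1,i}=1$ separately (where the division is illegal): there $\ket{\vec r_1,i}$ is itself a unit-eigenvalue eigenvector of $P$ supported on a single $B$-cell, so it is orthogonal to every $\h^P_{\vec s}$ with $\vec s\in A$, forcing $P_A\ket{\vec r_1,i}=0$ and making the matrix element vanish outright.
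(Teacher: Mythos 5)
Your proposal is correct and takes essentially the same approach as the paper: rotate to the diagonal on-site basis, note that $(P-P_A)\ket{\vec r,j}=0$ for $\vec r\in A$ so only distinct $B$-cells remain, and use the uniform-denominator identity \eqref{eq:higher_D_sum_hops_1} to reduce $\bra{\vec r_1,i}P_A\ket{\vec r_2,j}$ to a sum controlled by $P^2=P$ and nearest-neighbor structure. Your only departures are cosmetic: where the paper isolates the common-neighbor identity \eqref{eq:higher_D_sum_prod_is_zero} and matches terms, you resum over the whole sublattice via completeness, $\sum_{\vec s\in A}\mathbb{1}_{\vec s}=\mathbb{1}-\sum_{\vec u\in B}\mathbb{1}_{\vec u}$, which is the same cancellation packaged globally (and handles all distinct $B$-pairs without the equal/two-hop case split), and your Cauchy--Schwarz treatment of the degenerate case $\bra{\vec r_1,i}P\ket{\vec r_1,i}=1$ reaches the same conclusion as the paper's observation that every term in the sum then vanishes.
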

\begin{proof}
Let $\hat{\delta}_i$ denote the unit vector along dimension $i$. Similar to the proof for 1d projectors (Lemma \ref{lemma: Pz left-right}), we introduce the diagonal basis (cf \eqref{eq:diagonal_basis_1d}), with the primes dropped for notational convenience. As before, we denote the diagonal of the $P_{\vec{r}\vec{r}}$ matrix in this representation by $d_{\vec{r}}$, so that $\bra{\vec{r},\alpha}P\ket{\vec{r},\beta} = (d_{\vec{r}})_\beta \delta_{\alpha \beta}$. \par 

Since $P\ket{\vec{r},\alpha} = P_A \ket{\vec{r},\alpha}$ whenever $\vec{r}\in A$, both $\vec{r}_1$ and $\vec{r}_2$ must belong to $B$ for the corresponding matrix element to be non-zero. If $\vec{r}_1, \vec{r}_2 \in B$ and are distinct, we get 
\begin{align}
\bra{\vec{r}_1,\alpha} P- P_A \ket{\vec{r}_2,\beta} &= - \bra{\vec{r}_1,\alpha} P_A \ket{\vec{r}_2,\beta}  \nonumber \\
&=  - \bra{\vec{r}_1,\alpha} \sum_{s = \pm 1}\sum_{m=1}^d P_{\vec{r}_2+ s \hat{\delta}_m}  \ket{\vec{r}_2,\beta}, \label{eq:higher_d_expr_interm}
\end{align}
since $P_{\vec{r}} \ket{\vec{r}_2,\beta} =0$ unless $\vec{r}$ is a nearest neighbor of $\vec{r}_2$. Since $P$ only has NN hopping terms, this is zero, unless $\vec{r}_1$ and $\vec{r}_2$ are equal to each other, or are two hops away from each other, i.e. only if $\vec{r}_2$ is of the form $\vec{r}_1 \pm \hat{\delta}_p \pm \hat{\delta}_q$ for some $p, q\in \{1,\dotsc,d\}$. In order to show that the matrix element is zero for the case with two hops, we will require the higher dimensional analogs of equations \eqref{eq:1d_self_hop_positive_def}, \eqref{eq:1D_sum_of_hoppings_is_1} and \eqref{eq:1D_sum_prod_is_zero}, which are
\begin{align}
\bra{\vec{r},\alpha} P \ket{\vec{r},\alpha} &\geq 0 ,\\
 \bra{\vec{r},\alpha} P \ket{\vec{r},\alpha}  + \bra{\smash{ \vec{r} + \hat{\delta}_p,\beta} } P \ket{ \smash{   \vec{r} + \hat{\delta}_p,\beta   }    } &= 1 \ \forall p\in \{ 1,\dotsc,n\}, \text{ whenever } \bra{\vec{r},\alpha} P \ket{ \smash{   \vec{r} + \hat{\delta}_p,\beta } }   \neq 0,  \label{eq:higher_D_sum_hops_1}\\
 \text{and }\sum_\gamma \sum_{\vec{v}}^{\text{c.n.}}  \bra{\vec{r},\alpha} P \ket{\vec{v},\gamma}  \bra{\vec{v},\gamma}  P \ket{\vec{w},\beta} &= 0 \label{eq:higher_D_sum_prod_is_zero}
\end{align}
respectively. The summation in \eqref{eq:higher_D_sum_prod_is_zero}, with a superscript `c.n.' (for common neighbors) is over those vectors $\vec{v}$ which are nearest neighbors of both $\vec{r}$ and $\vec{w}$.  \par 

For the case where $\vec{r}_2$ is two hops away from $\vec{r}_1$, expression \eqref{eq:higher_d_expr_interm} simplifies to zero as follows:
\begin{align*}
\bra{\vec{r}_1,\alpha} \sum_{m=1}^d P_{\vec{r}_2+\hat{\delta}_m}  \ket{\vec{r}_2,\beta} &= \bra{\vec{r}_1,\alpha} \sum_{\vec{w}}^{\text{c.n.}} P_{\vec{w}}  \ket{\vec{r}_2,\beta} \\
&=  \sum_{\vec{w}}^{\text{c.n.}} \sum_{\gamma}^{(d_{\vec{w}})_{\gamma} \neq 0} \frac{\bra{\vec{r}_1,\alpha} P \ket{\vec{w},\gamma}    \bra{\vec{w},\gamma}   P   \ket{\vec{r}_2,\beta}}{\bra{\vec{w},\vec{\gamma}}P \ket{\vec{w},\gamma}} .
\end{align*}
If $\bra{\vec{r}_1, \alpha}P \ket{\vec{r}_1,\alpha} = 1$, then every term in the summation is zero. Otherwise, we get
\begin{align*}
\dotsc  &=  \frac{(\delta_{\vec{r}_1})_{\alpha}}{1-\bra{\vec{r}_1, \alpha}P \ket{\vec{r}_1,\alpha}} \sum_{\vec{w}}^{\text{c.n.}} \sum_{\gamma} \bra{\vec{r}_1,\alpha} P \ket{\vec{w},\gamma}    \bra{\vec{w},\gamma}   P   \ket{\vec{r}_2,\beta} \tag*{from \eqref{eq:higher_D_sum_hops_1},} \\
&= 0    \tag*{from \eqref{eq:higher_D_sum_prod_is_zero}.},
\end{align*}

Thus, matrix elements of $P-P_A$ can be non-vanishing only if $\vec{r}_1=\vec{r}_2 \in B$.
\end{proof}

\begin{algorithm}[t]
    \caption{Construction of a CSOB for any NN projector in arbitrary dimensions} \label{algo:GS_higher_d}
    \SetKwInOut{Input}{Input}
    \SetKwInOut{Output}{Output}
    \Input{A nearest neighbor projector $P$ acting on a $d \geq 1$ dimensional lattice}
    \emph{Procedure:} Divide the lattice into two sets $A$ and $B$ consisting of alternating cells, according to \eqref{eq:A and B higher dim}.
    \begin{enumerate}
    	\item Obtain $\widetilde{\p}_{\vec{r}}^P$ and hence $P_{\vec{r}}$, for every $\vec{r}\in A$.
    	\item Obtain the orthogonal projector $P - P_A \coloneqq P - \sum_{\vec{r}\in A} P_{\vec{r}}$.
    	\item Obtain $\widetilde{\p}_{\vec{r}}^{P-P_A}$ for every $\vec{r}\in B$.
    \end{enumerate}
    
    \Output{The set $\tilde{\Pi} \coloneqq  \left(   \bigcup_{\vec{r} \in A} \widetilde{\p}_{\vec{r}}^P \right)   \cup  \left( \bigcup_{\vec{r} \in B} \widetilde{\p}_{\vec{r}}^{P-P_A} \right)$ which is a CSOB spanning the image of $P$.}    
\end{algorithm}

The basis vectors obtained using Procedure \ref{algo:GS_higher_d} consist of wavefunctions which have a maximum spatial extent (volume) of at most $3\times\dotsc \times 3$ cells. As shown in figure \ref{fig:2D procedure}, they are in fact significantly smaller in extent than this upper bound.

\begin{figure*}[]
    \centering
    \begin{subfigure}[t]{0.45\textwidth}
        \centering
        \includegraphics[scale=1]{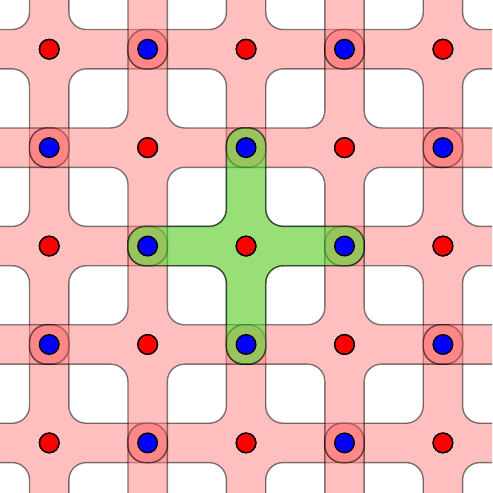}
        \caption{}
        \label{fig:2DStep1}
    \end{subfigure}%
    \begin{subfigure}[t]{0.45\textwidth}
        \centering
        \includegraphics[]{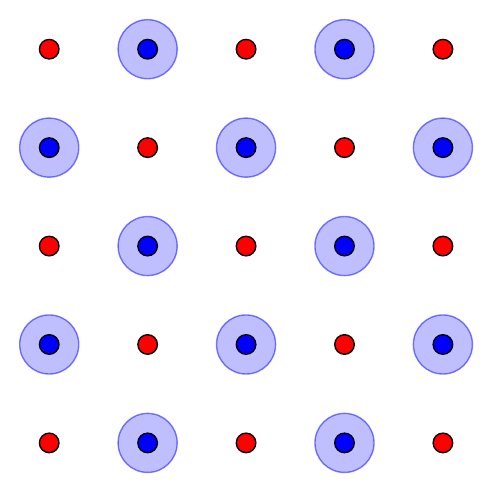}
        \caption{}
        \label{fig:2DStep2}
    \end{subfigure}
    \caption{{\it Procedure \ref{algo:GS_higher_d} for a 2d NN projector}: Cells belonging to sets A and B are represented by red and blue dots respectively. (a) In order to obtain $P_A$, we operate $P$ on each cell belonging to $A$, and orthogonalize the vectors. Each colored shape centered at location $\vec{r}$ denotes the maximum spatial extent of the wavefunctions in $\widetilde{\p}_{\vec{r}}^P$.  (We show one shape in green, in order to highlight the `plus' shape of each of these sets). (b) Since $P-P_A$ is on-site hopping, operating it on any cell in set $B$ creates wavefunctions which are localized at exactly that cell. Each blue circle centered at $\vec{r} \in B$ represents the maximum extent of vectors in $\widetilde{\p}_{\vec{r}}^{P-P_A}$.}
    \label{fig:2D procedure}
\end{figure*}

\subsection{Translationally Invariant Nearest Neighbor Projectors} \label{subsec:NN higher d trans inv}
Let $\y_i$ denote the unit translation operator along the $i^{th}$ dimension. For $\vec{r} \in \z^d$, let $\y_{\vec{r}}$ denote a translation by an amount $\vec{r}$. \par 
Although Procedure \ref{algo:GS_higher_d} can also be used to generate a compactly supported orthonormal basis from translationally invariant projectors, in general the resulting basis will not be a Wannier basis. However, the wavefunctions can be chosen to have translational invariance properties if each $\widetilde{\p}_{\vec{r}}^P$ ($\widetilde{\p}_{\vec{r}}^{P-P_A}$) for $\vec{r}\in A$ ($\vec{r}\in B$) is chosen to be a translation of $\widetilde{\p}_{\vec{0}}^P$ ($\widetilde{\p}_{\hat{\delta}_1}^{P-P_A}$) by $\vec{r}$ ($\vec{r}-\hat{\delta}_1$) cells. Based on this principle, we propose a method (Procedure \ref{algo:GS_higher_d_Wannier}) for obtaining a CS Wannier basis spanning the image of $P$ within a supercell representation. For $d=1$, this procedure is equivalent to Procedure \ref{algo:GS_1d_Wannier}.

\begin{algorithm}[H]
    \caption{Compactly Supported Wannier Basis for $d$ dimensional NN Projectors} \label{algo:GS_higher_d_Wannier}
    \SetKwInOut{Input}{Input}
    \SetKwInOut{Output}{Output}
    \Input{A translationally invariant NN projection operator $P$ on a $d$ dimensional lattice}
    \emph{Procedure:} 
    \begin{enumerate}
    	\item Obtain $\p_0^P$, and orthogonalize it to obtain the set $\widetilde{\p}_0^P$.
    	\item Obtain the orthogonal projection operator $P_0$ onto the span of $\p_0^P$.
    	\item Obtain a reduced projection operator (which removes all the nearest neighbors of the site $\hat{\delta_1}$)
    	\begin{align*}
    	    P'\coloneqq P - \sum_{i=1}^d\y_i \y_1 P_0 \y_1^{\dagger} \y_i ^{\dagger} - \sum_{i=1}^d\y_i^{\dagger} \y_1 P_0 \y_1^{\dagger} \y_i.
    	\end{align*}
    	\item Obtain and orthogonalize $\p_{\hat{\delta}_1}^{P'}$ to obtain the set $\widetilde{\p}_{\hat{\delta}_1}^{P'}$.
    	\item Obtain the set $\p$, defined as 
    	\begin{align}
    		\p = \left( \bigcup_{ \vec{r} \in A } \{ \y_{\vec{r}} \ket{\chi}:\ket{\chi}\in \widetilde{\p}_0^P \} \right) \cup \left( \bigcup_{ \vec{r} \in B} \{ \y_{\vec{r} - \hat{\delta}_1} \ket{\chi}:\ket{\chi}\in \widetilde{\p}_{\hat{\delta}_1}^{P'} \}  \right) \label{eq:wannier_higher_d_union}
    	\end{align}
    \end{enumerate}
    \Output{The set $\Pi$ consisting of compactly supported Wannier functions spanning the image of $P$, within a size $2\times \dotsc \times 2 $ supercell representation.}
\end{algorithm}    

Using steps similar to those used for proving theorem \ref{thm:Wannier_for_1d}, we can show that Procedure \ref{algo:GS_higher_d_Wannier} outputs a CW basis as claimed.

\subsection{Supercell representation and Nearest Neighbor Reducible Projectors} \label{subsec:NN_reducible}
We now discuss how to extend the results and methods for NN projectors to SL projectors with larger hopping distances. To that end, we use a supercell representation analogous to the one used for 1d in section \ref{subsec:conversion}. Unlike in 1d where every SL projector is an NN projector in some supercell representation, in general, an SL projector in $d>1$ becomes a \emph{next}-nearest-neighbor (NNN) hopping projector instead of an NN projector using this transformation. Specifically, if the maximum hopping distance of an SL projector is $b$, then the projector has at most NNN hopping terms within a size $b\times \dotsc \times b$ supercell representation (see figure \ref{fig:2DRegrouping_bigfig}). This reversible transformation is associated with the correspondence:
\begin{align}
\begin{split}
\text{Primitive cell representation} &\longleftrightarrow \text{Supercell representation} \\
\ket{r_1,\dotsc,r_d,i} & \ \equiv \ \  \Big |  r_1 \setminus b, \dotsc, r_d \setminus b,  i + \sum_{m=1}^d n^m (r_m \bmod b)  \Big \rangle  _s,  
\end{split} \label{eq:Corresp_super_primitive_higher_d}
\end{align}

\begin{figure}[t]
    \centering
    \begin{subfigure}[t]{0.45\textwidth}
        \centering
        \includegraphics[scale=0.75]{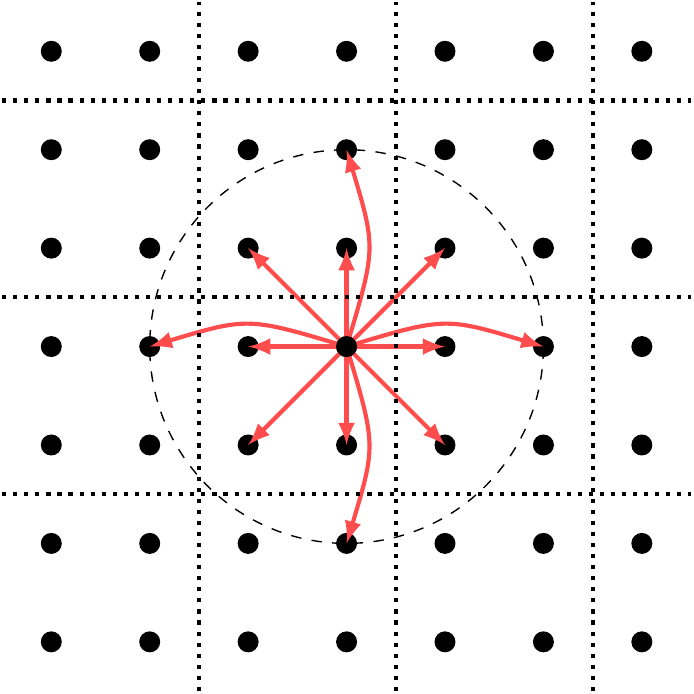}
        \caption{}
        \label{fig:2Dregrouping}
    \end{subfigure}%
    ~
    \begin{subfigure}[t]{0.45\textwidth}
        \centering
        \includegraphics[]{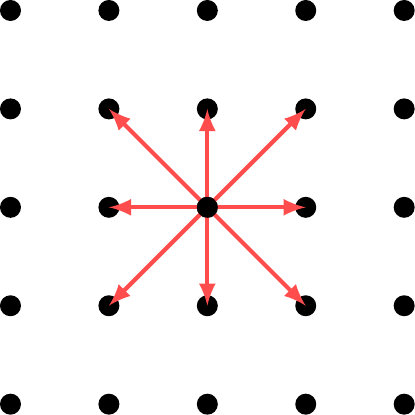}
        \caption{}
        \label{fig:2Dpostregrouping}
    \end{subfigure}
    \caption{Each black dot represents a lattice cell (which may consist of multiple orbitals). The dashed circle denotes the maximum hopping distance from that cell. (a) The connectivity of a generic SL operator with a maximum hopping distance of $b=2$. Hopping elements from an arbitrary cell are shown in red.Grouping all the sites within each cell of a $2\times 2$ grid, we obtain a supercell representation. (b) The operator becomes an NNN hopping operator in the supercell representation. Here, the operator connects neighboring cell along the $\hat{x}$, $\hat{y}$, as well as the $\hat{x} \pm \hat{y}$ directions, with $\hat{x}$ and $\hat{y}$ denoting the two axes.}
    \label{fig:2DRegrouping_bigfig}
\end{figure}

Since the methods developed in sections \ref{subsec:NN higher d} and \ref{subsec:NN higher d trans inv} are applicable only to NN projectors (and not general NNN projectors), even after using a supercell representation, these techniques cannot be applied to a general SL projector operator for $d>1$. 

Hence, we look for projectors that have an NN-hopping form using suitable transformations. We call such projectors NN-reducible. Which SL projectors are NN-reducible? While SL projectors with various types of connectivity are NN-reducible, here we highlight some particularly simple cases. It is easy to see that SL projectors that satisfy $\bra{\vec{r},i}P\ket{\vec{r}\ ',j} = 0 $, whenever $ \vec{r} - \vec{r} \ '$ is not along any of the primitive cell directions become NN hopping within a supercell representation of the type described above. For example, an SL projector on a square lattice that has non-zero hoppings only along the $\pm\hat{x}$ and $\pm\hat{y}$ directions, but no other direction is NN-reducible. This is in contrast with the NNN-reducible projectors of the type shown in figure \ref{fig:2Dpostregrouping}. For some SL projector that appear to be of an NNN form after the transformation \eqref{eq:Corresp_super_primitive_higher_d}, it may still be possible to apply the results from the previous sections to obtain CS Wannier functions. For example, if an SL projector on a square lattice in a supercell representation only connects neighboring cells along the $\hat{x}$ and $\hat{x}+\hat{y}$ directions, the projector is NN-hopping if we regard these two vectors as being the lattice vectors. We consider all such SL projectors that can be brought to an NN form using change of primitive cell vectors and supercell transformations as NN-reducible. Without loss of generality, we only consider NN-reducible projectors that have hopping elements only along the primitive cell directions below.

For such projectors, we obtain the following results:
\begin{corollary}
If an SL projector with a maximum hopping distance $b$ is NN reducible, its image is spanned by a CSOB consisting of wavefunctions of a maximum spatial extent of $3b\times \dotsc \times 3b$ cells. 
\end{corollary}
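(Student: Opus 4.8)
The plan is to mirror the 1d argument given just after the first corollary of section~\ref{subsec:conversion} (which produced the $3b$ bound there), replacing Procedure~\ref{algo:Gram Schmidt procedure} by its higher-dimensional analog, Procedure~\ref{algo:GS_higher_d}. The only genuinely new ingredient relative to the 1d case is the passage between the supercell and primitive representations in $d$ dimensions, which is handled by the correspondence~\eqref{eq:Corresp_super_primitive_higher_d}. All of the substantive work has already been done in Procedure~\ref{algo:GS_higher_d} and Lemma~\ref{lemma: PA higher dims}, so this corollary is essentially an assembly of established pieces.

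First I would invoke the NN-reducibility hypothesis. By definition (and since we restrict, without loss of generality, to projectors whose hoppings lie along the primitive cell directions), there is a size $b\times\dotsc\times b$ supercell representation in which $P$ becomes a nearest-neighbor projector $P_s$. Concretely, the relabeling~\eqref{eq:Corresp_super_primitive_higher_d} is a unitary identification of $\h_{total}$ with its supercell version, so $P$ and $P_s$ are unitarily equivalent and $P_s$ is again an orthogonal projector; NN-reducibility is precisely the statement that the maximum hopping distance of $P_s$ in the supercell picture equals $1$.

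Next I would apply Procedure~\ref{algo:GS_higher_d} to $P_s$. By the construction of that procedure together with Lemma~\ref{lemma: PA higher dims}, its output $\widetilde{\Pi}$ is a CSOB spanning the image of $P_s$, with every basis vector supported on at most $3\times\dotsc\times3$ supercells. Pulling $\widetilde{\Pi}$ back through~\eqref{eq:Corresp_super_primitive_higher_d} yields an orthonormal, compactly supported basis of the image of $P$, since the relabeling preserves inner products, span, and compact support. Finally I would convert the support bound: a vector occupying at most three consecutive supercells along each axis occupies at most $3b$ consecutive primitive cells along that axis, because each supercell is $b$ primitive cells wide in every direction. Hence each basis wavefunction has a maximum spatial extent of $3b\times\dotsc\times3b$ primitive cells, as claimed.

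I do not expect a serious obstacle here, but the one point requiring care is verifying that NN-reducibility genuinely delivers an \emph{NN} (rather than merely NNN) projector in the supercell picture, so that Procedure~\ref{algo:GS_higher_d}---which relies on the on-site/NN structure surviving the single Gram--Schmidt sweep over the sublattice $A$---is actually applicable. This is exactly the content of the NN-reducible assumption, as opposed to generic strict locality, which would leave $P_s$ with next-nearest-neighbor terms (cf.\ figure~\ref{fig:2Dpostregrouping}) for which the sweep is not guaranteed to preserve the required connectivity.
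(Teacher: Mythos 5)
Your proposal is correct and follows essentially the same route as the paper: pass to the size $b\times\dotsc\times b$ supercell representation where NN-reducibility makes $P$ a nearest-neighbor projector, apply Procedure~\ref{algo:GS_higher_d} (backed by Lemma~\ref{lemma: PA higher dims}) to get a CSOB of size at most $3\times\dotsc\times3$ supercells, and revert via the correspondence~\eqref{eq:Corresp_super_primitive_higher_d} to obtain the $3b\times\dotsc\times 3b$ bound in primitive cells. Your closing remark correctly identifies the role of the NN-reducibility hypothesis, which is exactly why the paper restricts this corollary to NN-reducible rather than general SL projectors.
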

The procedure for obtaining such a basis is summarized in the following table:
\begin{center}
\begin{tabular}{ c c c c c c c}
Primitive cell &  $\xrightarrow[b\times\dotsc \times b]{\text{Size}}$ & \multicolumn{3}{c}{Supercell representation} &  $\longrightarrow$ & Primitive cell \\
representation & & & & &  &representation \\
SL Projector  &  $\longrightarrow$ & NN Projector  &  $\xrightarrow{\text{Procedure \ref{algo:GS_higher_d}}}$ & CSOB &  $\longrightarrow$ &  CSOB \\
($b\geq 1$) & & ($b=1$) & & (max size $3$) & & (max size $3b$).
\end{tabular}
\end{center}

\begin{corollary} \label{corollary:nd_Wannier}
If a translationally invariant SL projector with a maximum hopping distance $b$ is NN reducible, its image is spanned by a CS Wannier basis within a size $2b\times \dotsc \times 2b$ supercell representation.
\end{corollary}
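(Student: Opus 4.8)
The plan is to reuse the construction behind the preceding (CSOB) corollary, composing two supercell transformations, while additionally tracking translational invariance so that the output is a genuine Wannier basis. Concretely, I would first pass from the primitive-cell representation to a size $b\times\dotsc\times b$ supercell representation in which the NN-reducible projector $P$ becomes nearest-neighbor hopping, then apply Procedure \ref{algo:GS_higher_d_Wannier} to this NN projector, and finally revert to the primitive-cell representation. Since Procedure \ref{algo:GS_higher_d_Wannier} itself delivers a Wannier basis in a size $2\times\dotsc\times 2$ supercell of its input lattice, composing the two supercell maps yields a Wannier basis in a size $2b\times\dotsc\times 2b$ supercell representation of the original lattice, as claimed.

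For the first step, I would invoke the NN-reducibility hypothesis together with the correspondence \eqref{eq:Corresp_super_primitive_higher_d}: grouping each $b\times\dotsc\times b$ block of primitive cells into a single supercell brings $P$ (which by assumption has hopping only along the primitive-cell directions) into nearest-neighbor form, rather than the generic next-nearest-neighbor form of figure \ref{fig:2Dpostregrouping}. The key point to verify here is that translational invariance is preserved: a translation by one supercell along axis $i$ is exactly the $b$-fold primitive translation $\y_i^{b}$, so since $P$ commutes with every primitive translation $\y_i$ it commutes with every supercell translation as well. Hence the NN projector obtained in the supercell representation is again translationally invariant, which is precisely the hypothesis required by Procedure \ref{algo:GS_higher_d_Wannier}.

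The second step is then a direct application of Procedure \ref{algo:GS_higher_d_Wannier}, whose correctness (that it outputs a compactly supported Wannier basis for any translationally invariant NN projector) is established by the argument following its statement, in analogy with Lemma \ref{thm:Wannier_for_1d}. This produces basis vectors that are Wannier translates of the finite set $\widetilde{\p}_0^P \cup \widetilde{\p}_{\hat{\delta}_1}^{P'}$ and are compactly supported, with extent at most $3$ supercells along each axis; reverting through \eqref{eq:Corresp_super_primitive_higher_d} therefore yields compactly supported functions of extent at most $3b$ primitive cells, organized as a Wannier basis over the size $2b\times\dotsc\times 2b$ supercell lattice.

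Since the genuinely analytic content—the reduction lemma for $P-P_A$ (Lemma \ref{lemma: PA higher dims}) and the verification of Procedure \ref{algo:GS_higher_d_Wannier}—has already been carried out, the remaining work is essentially bookkeeping. I expect the one point requiring care to be the composition of the two supercell transformations: one must confirm that the inner size-$2$ supercell translations of the $b$-supercell lattice coincide with the size-$2b$ supercell translations of the primitive lattice, so that the Wannier (translational-covariance) property survives both changes of representation and assembles into a single size $2b\times\dotsc\times 2b$ Wannier basis. The other point worth stating explicitly is why NN-reducibility, rather than mere strict locality, is needed here: without it the size-$b$ regrouping generically produces an NNN projector, to which Procedure \ref{algo:GS_higher_d_Wannier} does not apply.
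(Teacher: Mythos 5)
Your proposal is correct and matches the paper's own argument, which is exactly the composition you describe: pass to the size $b\times\dotsc\times b$ supercell representation where the NN-reducible projector becomes a translationally invariant NN projector, apply Procedure \ref{algo:GS_higher_d_Wannier}, and read off the result in the size $2b\times\dotsc\times 2b$ supercell representation. Your additional checks (that primitive translational invariance implies supercell translational invariance, and that the two supercell maps compose as claimed) are exactly the bookkeeping the paper leaves implicit in its summary table.
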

The procedure for obtaining such a basis is summarized in the following table:
\begin{center}
\begin{tabular}{c c c c c}
Primitive cell  & $\xrightarrow{\text{Size b}}$ & Size $b$ supercell  & $\xrightarrow{\text{Size 2}}$ & Size $2b$ supercell \\
representation & &  representation & & representation \\
SL Projector ($b\geq 1$) & $\longrightarrow$ &  NN Projector ($b=1$) &  $\xrightarrow{\text{Procedure \ref{algo:GS_higher_d_Wannier}}}$ & CS Wannier basis.
\end{tabular}
\end{center}
This completes our proof for theorem~\ref{thm:thm2}.

\subsection{Topological Triviality and Compact Bases}\label{subsec:top_triv}
So far, we have investigated the conditions under which is it possible for a band to be spanned by compactly supported orthogonal bases (CSOBs) or by compactly supported hybrid Wannier functions. Specifically, we identified localization properties of associated projection operators that are necessary or sufficient for the existence of such bases. Similar questions regarding the existence of localized Wannier functions have a long history, as noted in the introduction. The existence of exponentially localized Wannier functions has a bearing on the topological properties of the corresponding bands, with non-trivial band topology restricting the degree of localization possible for Wannier functions. This motivates us to investigate the topological properties of bands possessing CSOBs.

We start with a brief overview of known results. Thouless~\cite{thouless1984wannier} showed that well localized magnetic Wannier functions can be constructed in 2d if and only if the Chern number is zero. For $d\leq 3$, it was later shown that in the presence of time-reversal symmetry and translational invariance, exponentially localized Wannier functions corresponding to an isolated band or a set of isolated bands always exist~\cite{Kohn1959, dexCloizeauxexpWan, nenciu1983existence, Monaco2015}. For systems with other symmetry properties, the existence of such localized Wannier functions is not guaranteed, or even impossible, when the bands are topologically non-trvial. For instance, bands with non-zero Chern numbers do not possess exponentially localized Wannier functions~\cite{brouder2007exponential}. More generally, there exists a {\it localization dichotomy}~\cite{Monaco2018} which says that either exponentially localized Wannier functions exist and the Chern numbers are zero, or all Wannier functions are delocalized (with diverging second moment of the position operator) and the Chern numbers are non-zero. Recently, this result was partially extended to disordered systems~\cite{marcelli2020localization}, with a proof for the vanishing of the Chern marker for 2d insulators possessing exponentially localized generalized Wannier functions.

Since compactly supported Wannier functions (in $R^n$) are even more localized than generic exponentially local Wannier functions, one may expect topological triviality to follow immediately from these results. While we find this to be true (as discussed below), care is needed while drawing such a conclusion. In all the works discussed above, the wavefunction localization is described in terms of decay of wavefunction amplitude in real space, i.e. $R^n$. We note that this notion of localization may not in general be the same as localization in tight-binding models which we consider in this paper. Specifically, orbitals on the lattice ($\mathbb{Z}^n$) which are used as the basis in tightbinding descriptions may themselves not be compactly supported, or even exponentially localized in space ($R^n$). Thus the wavefunctions which are linear combinations of a finite set of such tightbinding orbitals do not in general vanish outside a certain bounded region in space as one might otherwise assume from the use of the term ``compact support" in describing these wavefunctions.

A number of results relating compact support localization of Wannier functions in tight-binding models and topological triviality of associated bands are relevant to the cases considered in this paper. Specifically, it was proved that flat bands in 2d flat-bands Hamiltonians which are strictly local in a tight-binding sense always have a Chern number of zero~\cite{chen2014impossibility}. This property can be viewed as arising due to the fact that the flat bands in such models are spanned by compactly supported Wannier-type functions or CLSs. More recently, this result was generalized to all symmetry classes and arbitrary dimensions greater than one, by proving that the vector bundle associated with band(s) which are spanned by CS Wannier-type functions are topologically trivial~\cite{Read2017}. As discussed in section \ref{subsec:construction_nonortho}, CS Wannier-type functions are in general non-orthogonal, and consequently the orthogonal Wannier functions we consider in this paper are a special type of CS Wannier-type functions. Similarly, the set of SL projectors is a subset of the set of projectors that have their images spanned by CS Wannier-type functions. Thus, it follows directly from the results in~\cite{Read2017} that:
\begin{theorem}
For translationally invariant tight-binding models in $d>1$, a set of bands that is spanned by compactly supported Wannier functions is topologically trivial. More generally, bands associated with strictly local projectors are topologically trivial.
\end{theorem}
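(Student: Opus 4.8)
The plan is to reduce both assertions to the vector-bundle triviality result of Read~\cite{Read2017}, which establishes that any set of bands spanned by compactly supported Wannier-type functions in $d>1$ is topologically trivial. Since the paper has already observed that the SL projectors form a subset of the projectors whose images are spanned by CS Wannier-type functions, the task reduces to making this inclusion explicit and verifying that the hypotheses of~\cite{Read2017} are met.

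I would first dispatch the more general (second) statement. Given a translationally invariant SL projector $P$ with maximum hopping distance $b$, the construction of section~\ref{subsec:construction_nonortho} shows that the family $\{P\ket{\vec{r},\alpha} : \vec{r}\in\z^d,\ \alpha=1,\dotsc,n\}$ spans the image of $P$; strict locality forces each such vector to be compactly supported (vanishing outside a ball of radius $b$ about $\vec{r}$), and translational invariance guarantees that the cell-$\vec{r}$ vectors are genuine lattice translates of the cell-$\vec{0}$ vectors. This family is therefore precisely a (generally non-orthogonal, possibly linearly dependent) set of CS Wannier-type functions spanning the band subspace, so~\cite{Read2017} applies and yields topological triviality. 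The first statement is then an immediate special case: an orthogonal basis of compactly supported Wannier functions is a fortiori a set of CS Wannier-type functions, and equivalently, by the necessary-condition argument of section~\ref{subsec:necessary}, the projector onto such a band is itself SL, so statement one reduces to statement two.

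The one point I would treat carefully, rather than as routine, is the reconciliation of the competing notions of localization. As emphasized in the discussion above, compact support in a tight-binding model means support on finitely many lattice orbitals indexed by $\z^d$, which need not coincide with compact support in the ambient real space when the underlying orbitals are themselves delocalized. The main obstacle is thus to confirm that the hypotheses of~\cite{Read2017} are formulated in this tight-binding (finite-orbital-support) sense, so that the CS Wannier-type functions built from $P$ qualify with no extra assumption on the spatial profile of the orbitals. Once this is verified, no quantitative estimates remain and the theorem follows directly.
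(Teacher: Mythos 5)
Your proposal is correct and follows essentially the same route as the paper: both reduce the theorem to the vector-bundle triviality result of~\cite{Read2017}, observing that for any translationally invariant SL projector the family $\{P\ket{\vec{r},\alpha}\}$ constructed in section~\ref{subsec:construction_nonortho} furnishes compactly supported Wannier-type functions spanning the band subspace, with an orthogonal CS Wannier basis being a special case. Your caveat about reconciling lattice (tight-binding) compact support with real-space localization is precisely the caution the paper itself raises in the discussion preceding the theorem, so no gap remains.
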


It follows that translationally invariant NN and NN-reducible projectors are topologically trivial, since they are SL.

While a topological obstruction exists for constructing localized Wannier functions, no such obstruction exists for localized hybrid Wannier functions. Since hybrid Wannier functions can be treated as 1d Wannier functions (see section \ref{subsec:hybrid_Wannier}), using the arguments in~\cite{marzari1997maximally}, it follows that for any number of dimensions, hybrid Wannier functions that are exponentially localized along the localized axis exist, independent of the topological properties of the associated band(s). For example, as can be seen using the Coulomb gauge, quantum Hall systems admit localized hybrid Wannier-like solutions that are exponentially decaying along one direction, despite the Chern number being non-zero~\cite{yoshioka2013quantum}. Similarly, anomalous quantum Hall systems possess maximally localized hybrid Wannier functions that are exponentially localized~\cite{Qi2011} despite a non-vanishing Chern number. These statements do no preclude the possibility of a topologically non-trivial band being spanned by compactly supported hybrid Wannier function. An interesting consequence of the theorem above, and the equivalence of strict localization of a projector, and the existence of CS hybrid Wannier functions along all axes, we find that such bands are \emph{necessarily} topologically trivial. Specifically:
\begin{theorem}
For a $d>1$ dimensional system, if a set of bands is such that for any of the d orthogonal axes, there exist hybrid Wannier functions (spanning the bands) that are compactly supported along the chosen axis, then the band(s) are topologically trivial.
\end{theorem}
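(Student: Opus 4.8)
The plan is to obtain this statement as a direct consequence of two facts already in hand, chained in sequence. The first is the necessary direction of point (2) of the Main Result, established in section~\ref{subsec:necessary}: if the band subspace admits an orthogonal basis whose every element is compactly supported along a chosen axis, then the associated band projector $P$ is strictly local along that axis. The second is the preceding theorem, which asserts that bands associated with a strictly local projector in a translationally invariant $d>1$ system are topologically trivial (via the results of~\cite{Read2017}). Since ``bands'' presupposes translational invariance, this is exactly the setting in which both ingredients apply, consistent with the caveat attached to this result in figure~\ref{fig:results_summary}.

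First I would apply the necessary direction to each of the $d$ axes separately. By hypothesis, for every choice of localization axis there is a hybrid Wannier basis spanning the bands that is compactly supported along that axis, so $P$ is strictly local along each of the $d$ directions. Concretely, if the basis adapted to the $m^{th}$ axis has support confined to a width $R_m$ along that axis, expanding $P$ in that basis shows that $\bra{\vec{r}_1,\alpha}P\ket{\vec{r}_2,\beta}=0$ whenever the $m^{th}$ components of $\vec{r}_1$ and $\vec{r}_2$ differ by more than $R_m$. Setting $b=\max_m R_m$, the projector then has a finite maximum hopping distance simultaneously along every axis, so $P$ is strictly local in all directions. Having reduced the hypothesis to the strict locality of $P$, I would invoke the preceding theorem directly: the image of a strictly local projector is spanned by compactly supported Wannier-type functions, whose associated vector bundle is topologically trivial. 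Combining the two steps yields the claim.

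Since the statement is essentially a corollary, I do not expect a substantive obstacle; the only points requiring care are bookkeeping. One must check that per-axis strict locality, proved independently for each of the $d$ directions, can be assembled into a single finite hopping range, which is handled by taking $b=\max_m R_m$. One must also keep the standing assumption of translational invariance in view throughout, since it is precisely this hypothesis that licenses the application of the preceding topological-triviality theorem. No new construction or estimate is needed beyond those already established.
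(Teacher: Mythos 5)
Your proposal is correct and takes essentially the same route as the paper: the paper likewise chains the necessary direction of point (2) of the Main Result (CS hybrid Wannier bases for every axis imply a strictly local projector) with the preceding theorem, based on~\cite{Read2017}, that bands of translationally invariant strictly local projectors in $d>1$ are topologically trivial. Your added bookkeeping (assembling per-axis locality into a single finite hopping range via $b=\max_m R_m$) merely spells out what the paper leaves implicit.
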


In this section, we have so far only considered systems which are translationally invariant. It would be interesting to study the topological properties of similar systems without translational invariance. For example, an analog of compactly supported hybrid Wannier functions could be a set of wavefunctions not related by lattice translations that are each compactly supported along one direction, but possibly delocalized along the other directions. We leave the question of topological triviality of such cases for future work.

Based on physical ground we anticipate that even non-translationally invariant SL projectors as well as bands associated with CSOBs should be topologically trivial. Using simple arguments applied to recent results from the literature, we will now show that the latter statement is indeed true. Specifically, we consider projectors that have no symmetries except possibly lattice translation symmetry, i.e. class A systems from the Altland-Zirnbauer classification scheme~\cite{Altland1997}. The topological classification of systems across all dimensions is organized in the form of a periodic table~\cite{Kitaev2009, Ryu2010}. As can be seen from the table, odd dimensional class A projectors are always K-theoretically trivial. However, in $2n$ dimensions, they are characterized by the integer valued $n^{th}$ Chern number~\cite{katsura2018noncommutative}.

Using simple arguments, we will now show that the topological Chern invariant for class A projectors that are associated with CSOBs is zero.
\begin{theorem} \label{thm:ch_is_zero}
In all dimensions, if a projector without symmetries (except possibly lattice translation symmetry) is such that its image is spanned by a compactly supported orthogonal basis, then it is Chern trivial.
\end{theorem}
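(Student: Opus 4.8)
The plan is to work entirely with the lattice (noncommutative) $n$th Chern number, rather than with any $\mathbb{R}^n$ notion of localization, precisely to avoid the subtlety flagged above about tight-binding orbitals being spread out in real space. First I would dispose of odd dimensions: for class A there is no invariant in odd $d$, so it suffices to treat $d=2n$. There the relevant invariant, following~\cite{katsura2018noncommutative}, is the trace-per-unit-volume expression
\begin{align}
\Ch_n(P) = c_n \sum_{\sigma \in S_{2n}} \mathrm{sgn}(\sigma)\, \mathcal{T}\!\left( P\,[X_{\sigma(1)},P]\,[X_{\sigma(2)},P]\cdots[X_{\sigma(2n)},P] \right), \nonumber
\end{align}
where $X_j$ is the diagonal, integer-valued lattice-coordinate operator along axis $j$, $\mathcal{T}(A) = \lim_{L\to\infty} L^{-2n}\,\mathrm{Tr}_{\Lambda_L}(A)$, and $c_n$ is a normalization constant. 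Using lattice positions is exactly what makes the argument insensitive to the spatial extent of the orbitals.

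Next I would exploit the off-diagonal structure of $[X_j,P]$. Writing $Q \coloneqq \mathbb{1}-P$ and using $P^2=P$ gives $[X_j,P] = Q X_j P - P X_j Q$, so that $P[X_j,P]=-PX_jQ$ and each factor toggles between the $P$ and $Q$ blocks. Telescoping the product collapses it to a string $P X_{\sigma(1)} Q X_{\sigma(2)} P \cdots X_{\sigma(2n)} P$, which regroups into $n$ factors of the form $P X_i Q X_j P = P X_i X_j P - \tilde X_i \tilde X_j$ with $\tilde X_j \coloneqq P X_j P$. Since $X_iX_j=X_jX_i$, the symmetric-in-$(i,j)$ part $PX_iX_jP$ drops out of the fully antisymmetrized $\sigma$-sum, leaving
\begin{align}
\Ch_n(P) = c_n' \sum_{\sigma \in S_{2n}} \mathrm{sgn}(\sigma)\, \mathcal{T}\!\left( [\tilde X_{\sigma(1)},\tilde X_{\sigma(2)}]\cdots[\tilde X_{\sigma(2n-1)},\tilde X_{\sigma(2n)}] \right), \nonumber
\end{align}
i.e. the $n$th Chern number as an antisymmetrized trace of a product of commutators of projected lattice-position operators.

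The compact support of the CSOB enters through a single structural lemma. Writing $P=\sum_\alpha \ket{w_\alpha}\bra{w_\alpha}$ and working in the Wannier basis, $(\tilde X_j)_{\alpha\beta} = \bra{w_\alpha} X_j \ket{w_\beta}$ is nonzero only when the uniformly bounded (size $\le R$) supports of $w_\alpha$ and $w_\beta$ overlap, so $\tilde X_j$ is finite range. Moreover $\tilde X_j = X_j^{(c)} + A_j$, where $X_j^{(c)}$ is the diagonal operator of Wannier centers and $A_j$ has entries $\bra{w_\alpha}(X_j-(R_\alpha)_j)\ket{w_\beta}$ that are bounded by $R$ on the finitely many nonzero pairs; hence $A_j$ is uniformly bounded and finite range. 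Since $[X_i^{(c)},X_j^{(c)}]=0$, the curvature $[\tilde X_i,\tilde X_j] = [X_i^{(c)},A_j]+[A_i,X_j^{(c)}]+[A_i,A_j]$ is again uniformly bounded and finite range. For $n=1$ this already finishes the proof: the trace-per-volume of a commutator $[M,N]$ of bounded finite-range operators vanishes, because the full unrestricted index sum of $M_{\alpha\beta}N_{\beta\alpha}-N_{\alpha\beta}M_{\beta\alpha}$ telescopes to zero, and restricting $\alpha$ to $\Lambda_L$ leaves only $O(L^{d-1})$ boundary terms of bounded size, so $\mathcal{T}([M,N]) = 0$.

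For general $n$ the remaining task, which I expect to be the main obstacle, is to show that the full antisymmetrized $\mathcal{T}$ of the \emph{product} of $n$ short-range curvatures vanishes, since a single commutator no longer suffices. The cleanest route is to feed the decomposition $\tilde X_j = X_j^{(c)}+A_j$ into the antisymmetrized product and use the diagonal, coordinate-like nature of $X_j^{(c)}$ together with antisymmetry to rewrite the whole expression as a sum of trace-per-volume of commutators of bounded finite-range operators (a higher telescoping, or summation-by-parts), each vanishing by the $n=1$ estimate; the $O(L^{d-1})/L^{2n}$ boundary contributions must be controlled uniformly. A more conceptual alternative that sidesteps the combinatorics is to build from the CSOB a strictly local unitary connecting $P$ to an on-site atomic projector and to invoke invariance of $\Ch_n$ under conjugation by such unitaries; there the obstacle instead becomes supplying a compactly supported orthonormal frame for the complement $\mathrm{im}(Q)$, which is automatic in $1$d but not guaranteed in higher dimensions, so the trace computation above is the safer backbone.
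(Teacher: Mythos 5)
Your argument is complete and correct only for $d=2$; for the theorem as stated ("in all dimensions") there is a genuine gap exactly where you flag it, and the flagged step is harder than the sketch suggests. The reduction to $\tilde X_j = PX_jP$ is fine (the collapse through $P,Q$ blocks and the cancellation of the symmetric parts $PX_iX_jP$ under antisymmetrization both check out, and CSOB $\Rightarrow$ SL makes each $[X_j,P]$ bounded and finite range), as is the structural lemma $\tilde X_j = X_j^{(c)}+A_j$ with $A_j$ bounded finite range. But for $n\geq 2$ your plan — "feed the decomposition into the antisymmetrized product and rewrite as commutators" — conceals the real difficulty: after expanding, individual mixed terms like $\mathcal{T}\big(X^{(c)}_{i}A_{j}X^{(c)}_{k}A_{l}\big)$ are not even well defined, since $X^{(c)}$ has linearly growing diagonal and the trace per volume of a single such term can diverge; only suitably grouped combinations are controlled. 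Your $n=1$ estimate ($\mathcal{T}([M,N])=0$ for bounded finite-range $M,N$) does not directly apply because the relevant commutators involve the unbounded $X^{(c)}$. What is needed, and missing, is the combinatorial lemma that the full antisymmetrized expression reorganizes into (i) terms of the form $[X^{(c)}_j,B]$ with $B$ bounded and finite range, whose diagonal vanishes identically, plus (ii) a purely-$A$ antisymmetrized trace, which vanishes by cyclicity of $\mathcal{T}$ because a $2n$-cycle is an odd permutation. Until that "higher telescoping" is proved, the argument covers only the 2d Chern number. A secondary issue: you work with the trace-per-volume position-operator (Chern-marker) expression, while the invariant in the theorem, following the cited reference, is an honest trace built from half-space projections $\theta_j$; for non-translationally-invariant projectors in $d\geq 4$ the equivalence of the two expressions is itself a statement you would need to justify or avoid relying on.

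For contrast, the paper's proof sidesteps all of this combinatorics with a soft locality argument that is uniform in dimension: decompose $P=\sum_i\ket{\psi_i}\bra{\psi_i}$ into the mutually orthogonal rank-one projectors of the CSOB, use additivity of the index over orthogonal projectors and its independence of the choice of origin, and observe that a rank-one projector whose (compact) support misses a coordinate hyperplane commutes with the corresponding $\theta_j$, so only the finitely many states straddling all axes can contribute; shifting the origin by more than $2R$ then kills those contributions too, giving $\Ch P=0$ in every even dimension at once. Your $n=1$ computation is a legitimate alternative derivation in 2d (and arguably more self-contained there, since it needs only the telescoping estimate rather than additivity of the index), but to match the theorem you must either supply the summation-by-parts lemma for general $n$ or adopt an argument of the paper's locality type. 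Your fallback via a strictly local unitary conjugating $P$ to an atomic projector fails for the reason you yourself identify: nothing guarantees a compactly supported orthonormal frame for $\mathrm{im}(\mathbb{1}-P)$ when $d>1$, and indeed the paper only establishes CSOB existence for NN-reducible projectors in higher dimensions.
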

We only need to show that such projectors in even dimensions are Chern trivial. To that end, we use the real space expression from~\cite{katsura2018noncommutative}, for the integer valued Chern number, which for a $2n$ dimensional system is given by
\begin{align}
\text{Ind } P &= -\frac{(2\pi i)^n}{n!} \sum_\sigma (-1)^\sigma \Tr P [\theta_{\sigma_1},P]\dotsc [\theta_{\sigma_n},P], \label{eq:Koma_index}
\end{align}
where $\Tr$ denotes the trace operation, the summation is over all permutations $\sigma$, and $\theta_i$ denotes the projection operator onto the positive half plane along the $i^{th}$ direction, i.e. $\theta_i = \sum_{\alpha}\sum_{{r_i>0}} \ket{\vec{r},\alpha} \bra{\vec{r},\alpha}$.
Our arguments are based on the following properties of the index: (i) the additivity of the index for mutually orthogonal projectors, (ii) the independence of the index from the choice of the origin, or axes, and (iii) the local computability of the index. Physically, since the $2d$ case is of the most interest, we demonstrate our arguments by applying them to the 2d case. These arguments and the conclusion are valid for higher dimensional cases as well.

First, we note that for 2d, the index is the same as the Chern marker expression~\cite{kitaev2006anyons, marcelli2019haldane}, which is the real space analog of the k-space expression for Chern number used for periodic systems. The Chern marker for a projector $P$ is given by
\begin{align}
\Ch P &= -2\pi i \Tr P [[\theta_x, P],[\theta_y, P]].  \label{eq:Chern2D}
\end{align}
As shown explicitly in~\cite{kitaev2006anyons}, the Chern marker is additive, i.e. Chern marker of the sum of two mutually orthogonal projectors is the sum of the Chern markers of the two projectors.

Consider a CSOB of size $R$, consisting of wavefunctions $\ket{\psi_i}$. Let $P$ be the orthogonal projector onto the span of the CSOB. Clearly, $P$ is SL, and is the sum of the orthogonal projection operators $P_i$ projecting onto states $\ket{\psi}_i$'s:
\begin{align}
\begin{split}
P &= \sum_{i=1} P_i; \\
\text{with }P_i &= \ket{\psi_i}\bra{\psi_i},\\
P_i P_j &=\delta_{ij} P_i. 
\end{split} \label{eq:P splitting}
\end{align}
By definition, each wavefunction $\ket{\psi_i}$ has non-zero support only within a circle $B_i$ centered at some location $\vec{c}_i$ of radius $R$. (The $\vec{c}_i$'s are not unique; however, the conclusions that follow do not depend on the choice.) Thus, each projector $P_i$ has non-zero hopping terms only within $B_i$.

From \eqref{eq:Chern2D}, we note that $\Ch P_i \neq 0$ only if $\ket{\psi_i}$ straddles both the axes, i.e. if $|\vec{c}_i| \leq R$, and is zero otherwise. For example, if $B_i$ lies entirely in the right half plane, then the operator $\theta_x$ can be replaced by the identity operator in equation \eqref{eq:Chern2D}, resulting in a zero Chern marker (and similarly for other cases). Using this, and the additivity of the Chern marker, we obtain
\begin{align*}
\Ch P &= \Ch \underbrace{(\sum_{|\vec{c_i}|\leq R} P_i )}_{\tilde{P}} +  \Ch (\sum_{|\vec{c_i}| >  R} P_i )= \Ch \tilde{P}.
\end{align*}
(For a cartoon picture see figure \ref{fig:Chern local computation}.) Since the Chern number is independent of the choice of the origin, we may shift the origin and reevaluate the Chern number without affecting its value. Consider moving the origin by a distance of at least $2R$ in any direction. For example, let the new location of the origin be $(-3R,-3R)$. Since none of the constituent wavefunctions of $\tilde{P}$ straddle both of the new axes, $\Ch \tilde{P}=0$. Consequently,
\begin{align*}
\Ch P =0.
\end{align*}
The Chern marker is additive in all dimensions. Thus, applying the same reasoning to equation \eqref{eq:Koma_index}, it is easy to show that projectors associated with CSOBs are Chern trivial in all dimensions.

\begin{figure}
    \centering
    \begin{subfigure}{0.35\textwidth}
        \centering
        \includegraphics[width=\textwidth]{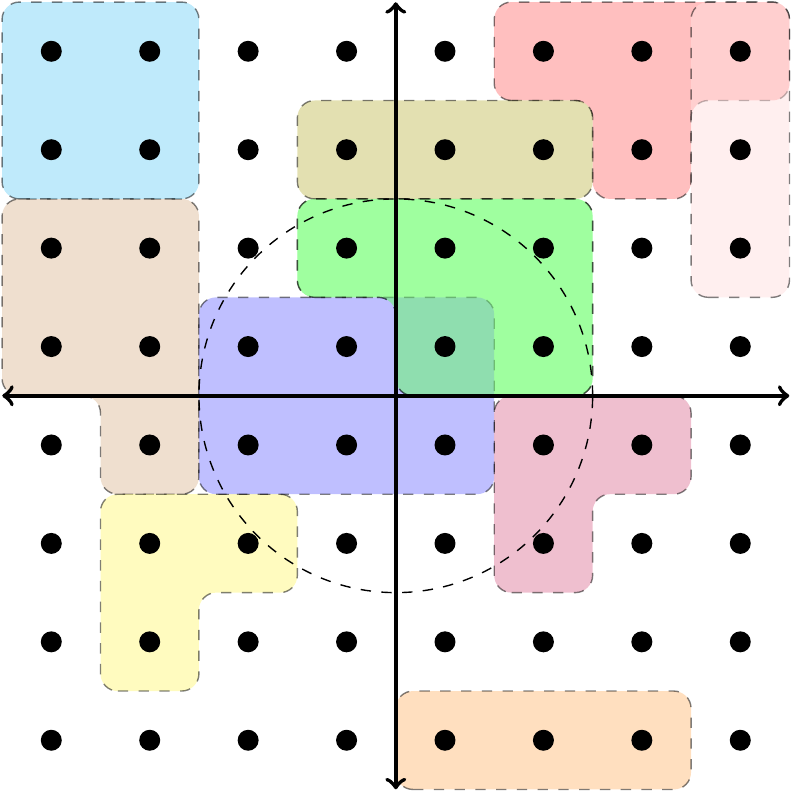}
        \caption{}
        \label{fig:pre box construction}
    \end{subfigure}%
    \hspace{15mm}
    \begin{subfigure}{0.35\textwidth}
        \centering
        \includegraphics[width=\textwidth]{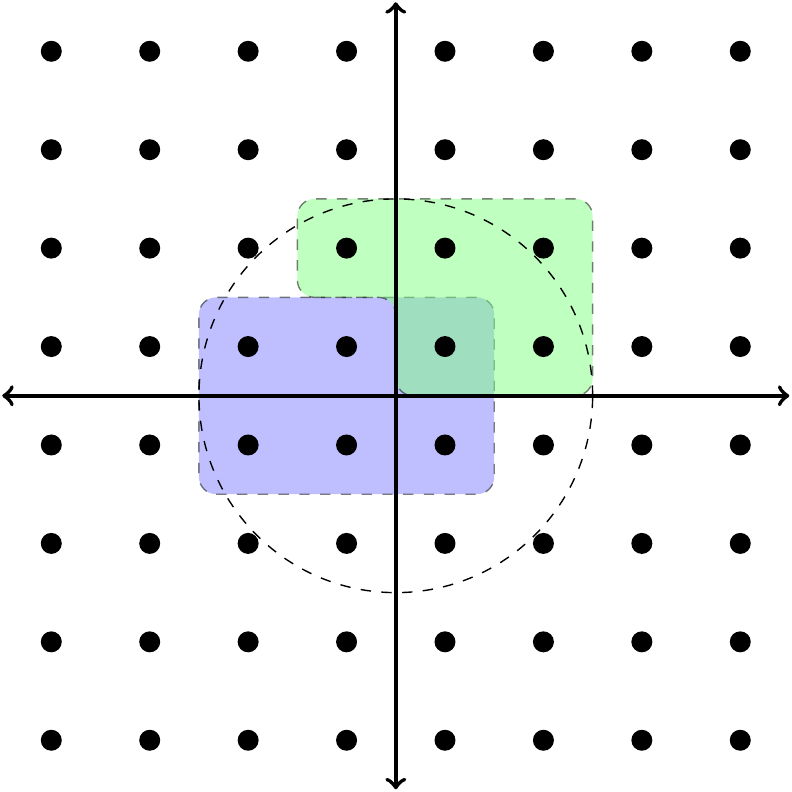}
        \caption{}
        \label{fig:post box construction}
    \end{subfigure}
    \caption{An illustrative example in 2d. (a) Each basis state $\ket{\psi_i}$ of a CSOB of size $R=2$ is shown by a colored region, and has non-zero support only on the sites within that region. (b) Only those wavefunctions that have their centers $\vec{c}_i$'s inside the dashed circle have non-zero contributions to the Chern marker.  Retaining only these wavefunctions defines a new projector $\tilde{P}$, which has the same Chern number as $P$.}
    \label{fig:Chern local computation}
\end{figure}

\section{Conclusions} \label{sec:conclusion}
In this paper, we have obtained necessary and sufficient conditions for a band or a set of bands to be spanned by compactly supported Wannier functions, or in the absence of lattice translational invariance, for a subspace of the Hilbert space to possess an orthogonal basis consisting of compactly supported wavefunctions. In 1d tight-binding models, we have established that there exists a compactly supported orthogonal basis spanning the occupied subspace iff. the corresponding projection operator is strictly local. In the process, we presented an algorithm for constructing a compactly supported orthogonal basis for the image of any strictly local projector. This algorithm generates wavefunctions having a maximum spatial extent three times the maximum hopping distance of the projector. Nearest neighbor projectors on a Bethe lattice with arbitrary coordination number also have a compactly supported orthonormal basis. In the appendix, a simple method for the construction of some strictly local projectors is provided.

For higher dimensional lattices, we showed that while strict locality of the projector is a necessary condition for the existence of a compactly supported orthogonal basis, a sufficient condition is that the projector be expressible as a nearest neighbor projector using a change of primitive cell vectors or a supercell representation. For such projectors, which we call nearest neighbor reducible, we presented an algorithm for constructing a compactly supported orthogonal basis, or a compactly supported Wannier basis when translationally invariant. Additionally, we showed hybrid Wannier functions that are compactly supported can be constructed for any choice of the localization axis iff. the associated projector is strictly local. Since the localization properties of band projectors and Wannier functions are closely related to band topology, we also showed that all translationally invariant strictly local projectors in systems with dimensions two and higher are topologically trivial. Additionally, using some simple arguments, we have shown that projectors without any symmetry other than possibly lattice translation symmetry, that are associated with compactly supported orthogonal bases are Chern trivial. Moreover, the existence of hybrid Wannier functions that are compactly supported along the localized axis for any choice of the localized axis implies topological triviality, unlike exponentially localized hybrid Wannier functions, which can exist even for topologically non-trivial bands.

Our results suggest a number of interesting directions for future work. The compactly supported orthogonal bases resulting from our construction may not be maximally localized. It would be interesting to improve the bounds on the spatial support of these basis functions, and also formulate an analytic procedure which results in maximally localized orthogonal basis functions. In the case of translationally invariant projectors, our procedures generate Wannier bases in supercell representations. A natural follow-up would be to find minimal sized supercell representations which have compactly supported Wannier bases. Here, we have shown that a strictly local projector and a compact orthonormal basis are essentially equivalent on 1d lattices. For dimensions two and higher, it would be useful to identify locality conditions on the projector operator that are equivalent to the existence of a compactly supported orthogonal basis. It would also be interesting to prove or disprove the topological triviality of strictly local projectors in arbitrary dimensional systems in the absence of lattice translational invariance.

\section{Acknowledgements}
We thank A. Culver, D. Reiss, X. Liu, A. Brown and L. Lindwasser for useful discussions and comments. P.S., F.H. and R.R. acknowledge support from the NSF under CAREER Grant No. DMR-1455368, and from the Mani L. Bhaumik Institute for Theoretical Physics. P.S. and R.R. acknowledge the funding support from the University of California Laboratory Fees Research Program funded by the UC Office of the President (UCOP), grant ID LFR-20-653926.

\appendix
\section{Appendices}

\subsection{Compactly Supported Wannier Functions for an Example Hamiltonian} \label{subsec:1d_example}
In this section, we discuss an example of an SL projector in 1d, and compare our results with numerical techniques from the literature. To our knowledge, there are no simple methods in the literature of proving that the span of generic SL projectors possess a compactly supported orthogonal basis (CSOB). Here, we consider the following example of a 1d two-band (translationally invariant) Hamiltonian given in k-space by
\begin{align}
H(k) = \frac{1}{6}
\begin{pmatrix}
3 \sin k+\cos k \left(2\sqrt{2} \cos k -2 \sqrt{2}\sin k +8 \sqrt{2}+3\right) & (\cos k-\sin k+4) \left(i-2 \sqrt{2} \sin k\right) \\
 -(\cos k-\sin k+4) (i+2 \sqrt{2} \sin k ) & (\cos k+2) (3-2 \sqrt{2} \cos k )+ (2 \cos k \sqrt{2}+3 ) (\sin k-2) \\
\end{pmatrix}. \label{eq:1d_example_H}
\end{align}
The two energy bands of this Hamiltonian are $E_1(k) = 2 + \cos k$, and $E_2(k) = -2  + \sin k$. The band projector corresponding to the $E_1$ band is given by
\begin{align}
\begin{split}
P(k) = \frac{1}{6} \begin{pmatrix}
 2 \cos (k) \sqrt{2}+3 & i-2 \sqrt{2} \sin (k) \\
  -i-2 \sqrt{2} \sin (k) &  3-2 \sqrt{2} \cos (k) 
\end{pmatrix}, 
\end{split}\label{eq:1d_projector_example}
\end{align}
which is an SL projector. Since the projector is translationally invariant, we seek a Wannier basis. Our theorem implies that there exist CS Wannier functions spanning each of the two bands in a size-two supercell representation.

Usually, Wannier functions are computed by first obtaining the corresponding Bloch wavefunctions. For the example projector, the Bloch wavefunction (upto a phase) is 
\begin{align}
\psi_1(k) &= \frac{1}{\sqrt{6}}\begin{pmatrix} 
\frac{2 \sqrt{2} \sin (k)-i}{\sqrt{3-2 \sqrt{2} \cos (k)}} \\
-\sqrt{3-2 \sqrt{2} \cos (k)}
\end{pmatrix}.  \label{eq:gauge_free_SL_example}
\end{align}
A corresponding Wannier basis is then obtained through a Fourier transform of the Bloch wavefunction. Specifically, a Wannier function $w_1^R(z)$ localized at cell $R$ can be obtained using 
\begin{align}
w_1^R(z) &= \frac{L}{2\pi} \int_{-\pi}^{\pi} dk e^{ik(R-z)} \psi_1(k), \label{eq:WF_is_FT_of_BW}
\end{align}
where $L$ is the system size. Since $\psi_1(k)$ is a smooth function of $k$, the corresponding Wannier function is exponentially localized (see figure \ref{fig:1d_comparison}). However, Wannier functions are not unique, because of a gauge degree of freedom:
\begin{align*}
\psi_1(k) \rightarrow e^{i\theta(k)} \psi_1(k).
\end{align*}
A more localized Wannier basis can be obtained by using a `better' gauge $\theta(k)$.
\begin{figure}[t]
    \centering
        \centering
        \includegraphics[scale=0.35]{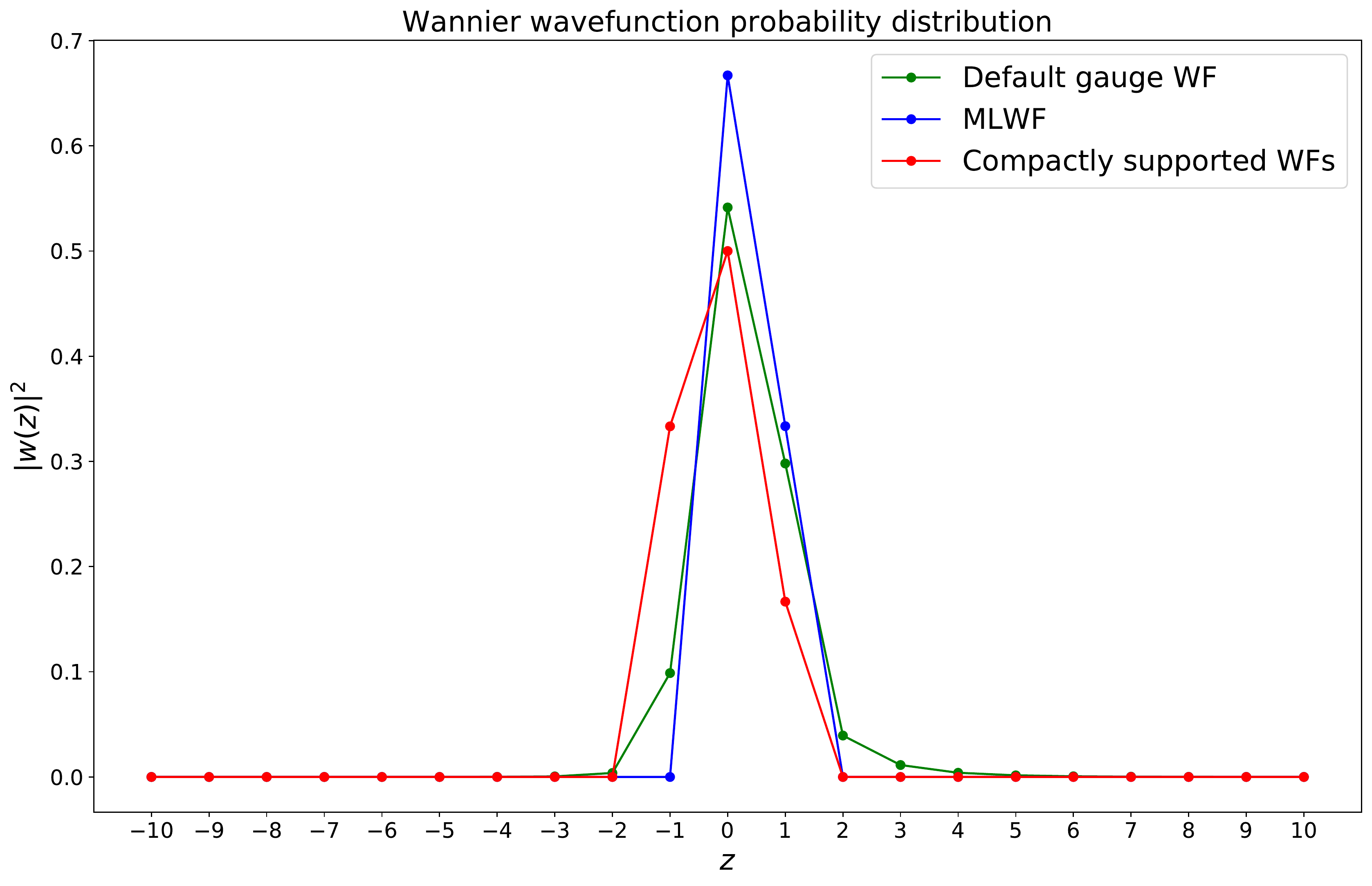}
        \caption{Wannier functions (WFs) centered at $R=0$, corresponding to three different gauge choices and corresponding to the span of \eqref{eq:1d_projector_example} are shown. The wavefunction probability is plotted a function of position. Green: Numerically obtained Wannier function corresponding to \eqref{eq:gauge_free_SL_example} is exponentially localized. Blue: Numerically obtained MLWF is compactly supported upto numerical precision. Red: The two compactly supported Wannier functions obtained analytically using algorithm \ref{algo:GS_1d_Wannier} in a size 2 supercell representation have the same probability distribution in the primitive cell representation.}
        \label{fig:1d_comparison}
\end{figure}

While it is not the objective of this paper to obtain maximally localized Wannier functions (MLWFs) for SL projectors, it is useful to compare our approach with the MLWF procedure~\cite{marzari1997maximally}. The k-space MLWF procedure converges to the optimal gauge using a gradient descent procedure on an initial `guess' Wannier basis. While our numerical experiments indicate that the MLWF procedure generates compactly supported Wannier functions (which are usually different from the ones obtained using procedure~\ref{algo:GS_1d_Wannier}), they do not guarantee the existence of compactly supported Wannier functions for arbitrary SL projectors. In contrast, our procedure is analytical, and generates wavefunctions which are exactly compactly supported.

For non-translationally invariant projectors, a notable point of similarity between the MLWF procedure and our approach is the starting point which involves choosing `trial orbitals' on which the projector is operated. In both procedures, the post-projection wavefunctions are orthonormalized (followed by gradient descent in the case of the MLWF algorithm). While the MLWF procedure uses the Lowdin (i.e. symmetric) orthogonalization procedure~\cite{lowdin1950non}, we use the Gram-Schmidt procedure. This is intentional: the Lowdin procedure is applicable only if the set to be orthogonalized is linearly independent, while the Gram-Schmidt procedure lacks this restriction.  Hence, for generic SL projectors, the real space MLWF procedure requires suitably chosen trial orbitals. While such a choice may exist, we are not aware of a general method for obtaining one for arbitrary SL projectors. A much simpler approach (which we adopt) is to choose all the orbitals $\ket{\vec{r},n}$ as trial orbitals. The set $\p_{\vec{r}}^P$ (defined in~\eqref{eq:defn piz}) which is generally linearly dependent can then be orthonormalized using the Gram-Schmidt procedure.

\subsection{A Simple Method for Constructing SL Projectors} \label{sec:NN P method}
In this section, we present a simple method for constructing translationally invariant SL projectors in one and two dimensions. We note that while one may use the equivalence that we have proven for 1d and construct SL projectors from a compactly supported translationally invariant orthogonal basis, such a method is not entirely straightforward to implement since one first needs to construct a translationally invariant orthogonal basis which is compactly supported and of an appropriate size. A much simpler alternative inspired by Clifford algebras can however be used, wherein an NN projector can be constructed by first obtaining what we call an NN \emph{flat} Hamiltonian. Such a flat Hamiltonian is an example of a flat-band Hamiltonian with all bands being flat and with energies $\pm1$. We obtain the projector $P$ onto the $-1$ eigenspace of $H$, and note that the since $H = \mathds{1} - 2P$, $P$ is guaranteed to be a nearest neighbor projector.  We start by constructing an example of a 1d NN flat-band Hamiltonian in subsection \ref{subsec:1d FB}, followed by a construction of a 2d NN flat band Hamiltonian in subsection \ref{subsec:2d FB}. While we explicitly describe the procedure only for strictly local projectors in 1 and 2 dimensional lattices, this method can be straightforwardly generalized to higher dimensions, and larger hopping distances. Although there exist SL projectors that cannot be generated using this method, it can still be used to create many interesting examples of SL projectors.

\subsubsection{1d Strictly Local Projectors} \label{subsec:1d FB}
It is convenient to utilize the Fourier space representation since we only consider translationally invariant projectors. 
First, we construct a flat NN Hamiltonian $H$, from which we will extract the desired NN projector. For $H$ to be an NN Hamiltonian, its Fourier space representation must be of the form:
\begin{align*}
H(k) = C_{+}e^{ik} + C_{0} + C_{-}e^{-ik},
\end{align*}
with $C_+$, $C_-$ and $C_0$ being $n\times n$ matrices which are constrained by the equations $H(k)^\dagger = H(k)$, and $H(k)^2 = 1$. For simplicity, we choose $H$ to possesses two bands. Consequently, $H(k)$ can be expressed in terms of the identity matrix $1_{2\times 2}$ and the two-dimensional Pauli matrices $\{\sigma_i\}$ as 
\begin{align*}
H(k) = a_0(k) 1_{2\times 2} + \sum_{i=1}^3 a_i (k) \sigma_i.
\end{align*}
Since $\{ \sigma_i, \sigma_j\}=2\delta_{ij}$, the condition that $H(k)$ is flat translates to $\sum_{\mu=0}^3 a_\mu ^2(k) = 1$ and $a_i a_0 = 0$. In order to obtain interesting solutions, we choose $a_0 = 0$. Together with $H^\dagger = H$, this implies that 
\begin{align}
\begin{split}
a_1(k)^2 + a_2(k)^2 + a_3(k)^2 &= 1; \\
a_i(k)^* &= a_i (k). 
\end{split}  \label{eq:1d_flat_H_condition}
\end{align}
Since the Hamiltonian is of an NN form, each $a_i(k)$ is expressible as
\begin{align*}
a_i(k) = c_{i} X + c_i^* X^{-1} + d_i,
\end{align*}
with complex $c_i$'s, real $d_i$'s, and $X \coloneqq e^{ik}$. Conditions \eqref{eq:1d_flat_H_condition} imply that
\begin{align}
\begin{split}
    \sum_i c_i^2 &= 0 \\
    \sum_i c_i d_i &= 0 \\
    \sum_i 2|c_i|^2 + d_i^2 &=1.
\end{split}   \label{eq:1d FB conditions c and d}
\end{align}

Solutions to these equations can be used to construct various flat Hamiltonians and projectors. A trivial example is one with $c_i=0$, and $d_1=0$, $d_2=0$ and $d_3=1$, which corresponds to
\begin{align*}
    H(k) &\equiv \begin{pmatrix}
    1 & 0 \\
    0 & -1
    \end{pmatrix}, \\
    \text{and }P(k) &= \frac{1_{2\times 2} - H(k)}{2} \equiv \begin{pmatrix}
    0 & 0 \\
    0 & 1
    \end{pmatrix}.
\end{align*}
Less trivial solutions of constraints \eqref{eq:1d FB conditions c and d} can be used to construct more interesting projectors. For example, consider the following parameters:
\begin{align*}
c_1 &= \frac{1}{3}, \ c_2 = \frac{1}{3} e^{\frac{2\pi}{3}i}, \ c_3 = \frac{1}{3} e^{\frac{4\pi}{3}i} , \  d_i = \frac{1}{3}.
\end{align*}
The corresponding Hamiltonian is given by
\begin{align}
\begin{split}
H(k) = \frac{1}{3}\begin{pmatrix}
(1+2\cos (k+\frac{4\pi}{3})) & (1+2\cos k) -i (1+2\cos (k+\frac{2\pi}{3})) \\
(1+2\cos k) +i (1+2\cos (k+\frac{2\pi}{3})) & -(1+2\cos (k+\frac{4\pi}{3})) 
\end{pmatrix}.
\end{split} \label{eq:Hk_appendix_example1d}
\end{align}
The projector $P(k)$ onto the $-1$ eigenspace can be obained by using the equation $P(k) = (1_{2\times 2} - H(k)) /2$. Since $P(k)$ has matrix elements which are Laurent polynomials in $e^{ik}$, it is an SL projector.

Having obtained an SL projector, one can use it to construct Hamiltonians which have CS Wannier functions, with any choice of the band energy (flat, or otherwise). For example one may construct a strictly local flat-band Hamiltonian, with the flat band possessing CS Wannier functions, i.e. orthogonal compact localized states (CLSs). To that end, if $P(k)$ is an SL projector obtained using the method above, we can choose it to correspond to some constant energy, say $E$. The band associated with the remaining subspace, i.e. the image of $1-P(k)$, can be chosen to have a dispersion $E(k)$, which should be chosen to be a real function expressible as a Laurent polynomial in $e^{ik}$. We can also add more bands to our Hamiltonian by constructing another Hermitian matrix $H'(k)$ with entries which are Laurent polynomials in $e^{ik}$. Arbitrary examples of $H'(k)$ and $E(k)$ satisfying the constraints mentioned above can be easily constructed. Putting it all together, we obtain a strictly local flat band Hamiltonian $\mathbb{H}(k)$ using
\begin{align}
    \mathds{H}(k) = H'(k) \oplus [E(k)(1-P(k)) + E P(k)]. \label{eq:1d FB constructed}
\end{align}

In order to construct an SL flat band Hamiltonian with a larger number of flat bands, one can use the method above to create multiple SL $P(k)$'s and assign a constant energy to each projector. Specifically, we may construct multiple flat band Hamiltonians using \eqref{eq:1d FB constructed}, and take their direct sum to construct a flat-band Hamiltonian with a larger number of flat bands. Alternatively, one may use the higher dimensional Dirac (or gamma) matrices for an analogous construction. To illustrate the latter procedure, we show how this can be used to construct nearest-neighbor projectors on 2d lattices in the next subsection.

\subsubsection{2d Strictly Local Projectors} \label{subsec:2d FB}

Similar to the previous subsection, we start with the construction of a flat Hamiltonian $H$. Here, we choose $H$ to have four bands in order to demonstrate the use of higher dimensional generators of the Clifford algebra. Hence, we express the flat Hamiltonian in terms of Dirac matrices $\Gamma^\mu$, instead of Pauli matrices, as follows:
\begin{align*}
H(\vec{k}) &= \sum_{\mu = 0}^3 a_\mu(\vec{k}) \Gamma^\mu ;\\
\text{with }\Gamma^0 = \gamma^0 &= \begin{pmatrix} \mathds{1}_2 & 0 \\
0 & -\mathds{1}_2\end{pmatrix} ,\\
\Gamma^{1} = i\gamma^1 &= i\begin{pmatrix} 0 & \sigma^x \\ -\sigma^x &0\end{pmatrix},\\
\Gamma^{2} = i\gamma^2 &= i\begin{pmatrix} 0 & \sigma^y \\ -\sigma^y &0\end{pmatrix},\\
\text{and }\Gamma^{3} = i\gamma^3 &= i\begin{pmatrix} 0 & \sigma^z \\ -\sigma^z &0\end{pmatrix}.
\end{align*}
The Dirac matrices satisfy the anti-commutation relations $\{ \Gamma^\mu,\Gamma ^\nu\} = 2\delta^{\mu \nu}$ and $\Gamma^{\mu\dagger} = \Gamma ^\mu$.
For $H(\vec{k})$ to be a nearest-neighbor Hamiltonian, the parameters $a_i(\vec{k})$ must be of the form
\begin{align}
a_i(\vec{k}) &= c_{ix} X + c_{ix}^* X^{-1}
+c_{iy} Y + c_{iy}^* Y^{-1} \nonumber\\
& \ +c_{ixy} XY + c_{ixy}^* X^{-1}Y^{-1} \nonumber\\
& \ +c_{-ixy}XY^{-1} + c_{-ixy}^* X^{-1}Y \nonumber\\
& \ +d_i,
\end{align}
with complex $c$'s, real $d$'s, and $X=e^{ik_x}$, $Y=e^{ik_y}$. $H(\vec{k})^2 = \mathds{1}$ leads to the condition:
\begin{align}
\sum_i a_i^2 = 1.
\end{align}
Equating the coefficients of all products of all powers of  $X$ and $Y$ gives us the following conditions:

\begin{align*}
\sum 2(|c_{ix}|^2 + |c_{iy}|^2 +   |c_{ixy}|^2 + |c_{-ixy}|^2 ) + d_i^2 &= 1 \\
\sum c_{iy}c_{-ixy} + c_{iy}^* c_{ixy} + c_{ix}d_i &=0 \\
\sum c_{ix} c_{-ixy}^* + c_{ix}^* c_{ixy} + c_{iy}d_i &= 0 \\
\sum c_{ix}^2 + 2c_{ixy} c_{-ixy} &= 0 \\
\sum c_{iy}^2 + 2c_{ixy} c_{-ixy}^* &= 0 \\
\sum c_{ix}c_{iy} + c_{ixy} d_i &= 0 \\
\sum c_{ix}c_{iy}^* + c_{-ixy} d_i &= 0 \\
\sum c_{ix} c_{ixy} &= 0 \\
\sum c_{ix} c_{-ixy} &= 0 \\
\sum c_{iy} c_{ixy} &= 0 \\
\sum c_{iy} c_{-ixy}^* &= 0 \\
\sum c_{ixy}^2 &= 0\\
\sum c_{-ixy}^2 &= 0.
\end{align*}

Any solution of these set of equations can used to create a projector. For example, choosing $c_{ixy} = c_{-ixy} = d_i = 0$, the following choice satisfies all the conditions:
\begin{center}
\begin{tabular}{ |c|c|c|c|c| } 
 \hline
 $\mu$ & $0$ & $1$ & $2$ & $3$ \\
 \hline
 $c_{\mu x}$ & $0$ & $0$ & $\frac{1}{2\sqrt{2}}$ & $\frac{i}{2\sqrt{2}}$  \\ 
 \hline
 $c_{\mu y}$ & $\frac{1}{2\sqrt{2}}$ & $\frac{i}{2\sqrt{2}}$ & $0$ & $0$ \\ 
 \hline
\end{tabular}
\end{center}

This corresponds to the Hamiltonian:

\begin{align*}
H(\vec{k}) = \frac{1}{2\sqrt{2}} \begin{pmatrix}
Y + Y^{-1} & 0 & -(X-X^{-1}) & -(Y-Y^{-1})+(X+X^{-1})\\
0 & Y+Y^{-1} & -(Y-Y^{-1})-(X+X^{-1}) & (X-X^{-1}) \\
X-X^{-1} & (Y-Y^{-1})-(X+X^{-1}) & -(Y+Y^{-1}) & 0 \\
(Y-Y^{-1}) & -(X-X^{-1}) &0 & -(Y+Y^{-1})
\end{pmatrix}.
\end{align*}

We obtain $P(\vec{k})$ by using $P(\vec{k}) = \frac{1_{4\times 4}-H(\vec{k})}{2}$.

\bibliographystyle{unsrt}
\bibliography{biblio}
\end{document}